\documentclass{article}

\usepackage{PRIMEarxiv}

\usepackage[utf8]{inputenc} % allow utf-8 input
\usepackage[T1]{fontenc}    % use 8-bit T1 fonts
\usepackage{hyperref}       % hyperlinks
\usepackage{url}            % simple URL typesetting
\usepackage{booktabs}       % professional-quality tables
\usepackage{amsfonts}       % blackboard math symbols
\usepackage{nicefrac}       % compact symbols for 1/2, etc.
\usepackage{microtype}      % microtypography
\usepackage{lipsum}
\usepackage{fancyhdr}       % header
\usepackage{graphicx}       % graphics
\graphicspath{{media/}}     % organize your images and other figures under media/ folder

\usepackage{appendix}
\usepackage{amsmath,amsfonts}
\usepackage{algorithmic}
\usepackage{algorithm}
\usepackage{array}
\usepackage[caption=false,font=normalsize,labelfont=sf,textfont=sf]{subfig}
\usepackage{textcomp}
\usepackage{stfloats}
\usepackage{url}
\usepackage{verbatim}
\usepackage{graphicx}
\usepackage{cite}
\usepackage{amssymb}
\usepackage{bm}
\usepackage{bbm}
\usepackage{color}
\usepackage{multirow}
\usepackage{diagbox}
\usepackage{threeparttable}

\newtheorem{definition}{Definition}
\newtheorem{theorem}{Theorem}
\newtheorem{proposition}{Proposition}
\newtheorem{remark}{Remark}
\newtheorem{lemma}{Lemma}
\newtheorem{proof}{Proof}

%Header
\pagestyle{fancy}
\thispagestyle{empty}
\rhead{ \textit{ }} 

% Update your Headers here
\fancyhead[LO]{A Provably Secure Network Protocol for Private Communication with Analysis and Tracing Resistance}
% \fancyhead[RE]{Firstauthor and Secondauthor} % Firstauthor et al. if more than 2 - must use \documentclass[twoside]{article}

%% Title
\title{A Provably Secure Network Protocol for Private Communication with Analysis and Tracing Resistance  
%%%% Cite as
%%%% Update your official citation here when published 
\thanks{This material has been submitted to submitted to the IEEE Journal on Selected Areas in Communications.

This paper was supported by the National Key Research and
Development Program of China (2022YFA1004600), the National
Natural Science Foundation of China (12288201, 12071465).} 
}

\author{
  Chao Ge \\
  Department of Electronic Engineering \\
  Tsinghua University \\
  Beijing\\
  \texttt{gechao@amss.ac.cn} \\
  %% examples of more authors
  \And
  Wei Yuan \\
  State Key Laboratory of Mathematical Sciences \\
  Academy of Mathematics and Systems Science, Chinese Academy of Sciences \\
  Beijing\\
  \texttt{wyuan@math.ac.cn} \\
   \And
  Ge~Chen \\
  State Key Laboratory of Mathematical Sciences \\
  Academy of Mathematics and Systems Science, Chinese Academy of Sciences \\
  Beijing\\
  \texttt{chenge@amss.ac.cn} \\
  \And
  Yanbin Pan \\
  State Key Laboratory of Mathematical Sciences \\
  Academy of Mathematics and Systems Science, Chinese Academy of Sciences \\
  Beijing\\
  \texttt{panyanbin@amss.ac.cn} \\
  \And
  Yuan Shen \\
  Department of Electronic Engineering  \\
  Tsinghua University \\
  Beijing\\
  \texttt{shenyuan\_ee@tsinghua.edu.cn} 
  %% \AND
  %% Coauthor \\
  %% Affiliation \\
  %% Address \\
  %% \texttt{email} \\
  %% \And
  %% Coauthor \\
  %% Affiliation \\
  %% Address \\
  %% \texttt{email} \\
  %% \And
  %% Coauthor \\
  %% Affiliation \\
  %% Address \\
  %% \texttt{email} \\
}

\begin{document}
\maketitle

\begin{abstract}
Anonymous communication networks have emerged as crucial tools for obfuscating communication pathways and concealing user identities.
However, their practical deployments face significant challenges, including susceptibility to artificial intelligence (AI)-powered metadata analysis, difficulties in decentralized architectures, and the absence of provable security guarantees.
% Traditional encryption techniques, which primarily safeguard communication content, are increasingly insufficient against artificial intelligence (AI)-driven threats capable of inferring sensitive information from communication metadata.
To address these issues, this paper proposes a novel decentralized anonymous routing protocol with resistance to tracing and traffic analysis. 
The protocol eliminates dependencies on the threshold model and trusted third-party setups, ensuring indistinguishable identity privacy even in highly adversarial environments.
Different from traditional empirical security analysis of anonymous networks, this paper rigorously proves indistinguishable identity privacy for users even in extremely adversarial environments.
Furthermore, simulations confirm its practical feasibility, demonstrating both security and efficiency.
By achieving information sharing with privacy preservation, the proposed protocol offers a provably secure solution for privacy-preserving communication in digital environments.
\end{abstract}

% keywords can be removed
\keywords{Anonymous network \and Provable security \and Privacy preservation \and Communication security \and Traffic analysis resistance}

\section{Introduction}\label{sec_intro}
The rapid development of wireless communication technology has driven an exponential increase in terminal devices, intensifying the social reliance on ubiquitous network access.
While facilitating diverse applications, these technologies simultaneously pose significant privacy risks due to the widespread transmission of sensitive data
~\cite{ChenAn:J2023,MakhImra:J2022,ChenZhu:J2022,HitaPagn:J2023}.
The emergence of 6G networks will heighten this concern by generating, storing, and processing vast amounts of data, including precise geolocation tracking and predictive user profiling.
% Notably, 6G networks will generate, store, and process more data, including sensitive information such as precise location tracking and behavior prediction.
This challenge has spurred urgent research into communication security and privacy preservation.

Traditional cryptographic methods ensure content confidentiality and integrity but are inadequate against emerging threats.
% Traditional cryptographic protocols focus on securing the confidentiality and integrity of communication content.
Modern challenges require broader protection that extends to communication metadata, including temporal patterns, frequency characteristics, and relationship dynamics within communication processes~\cite{MakhImra:J2022,ChenZhu:J2022,Partala:J2018,Gilad:J2019,sasy:J2024}.
% Modern scenarios call for a more comprehensive approach that extends protection to the characteristics of communication processes, known as communication metadata, which includes temporal patterns, frequency characteristics, and relationship dynamics
As artificial intelligence (AI) reshapes cyberattack strategies, adversaries are increasingly targeting metadata inference over content theft~\cite{ChenAn:J2023,ChenZhu:J2022,Hoang:J2024}.
By leveraging advanced data analytics and machine learning, they extract behavioral patterns from communication processes to deduce critical and sensitive information~\cite{Siri:C2021,Nguy:J2021}.
This is particularly vital in next-generation networks, where metadata can expose operational patterns and user behaviors through sophisticated correlation and AI-driven inference attacks.

To address risks of identity traceability and data linkability, anonymous communication networks have become a key research focus~\cite{Zhang:C2005,Edman:J2009,shir:J2018,GolReeSyv:C96,DinMatSyv:C04,JohnChris:C2013,LingLuo:J2012,Murdoch:C2007,SerSew:C2003,ZanHar:C2011,EggSch:C2013,HerGro:C2011,Hoang:C2018,Jeong:C2016,Danezis:B2009,PioHay:B2017,Diaz:B2010, Kesdogan:B1998}.
Anonymous communication networks enable users to communicate without revealing their identities, locations, or behavioral patterns, thereby enhancing confidentiality.
They are applied in secure military operations, privacy-preserving networks, and anonymous social platforms that promote free expression. Also, they play a vital role in e-commerce by preventing third-party tracking, as well as in e-democracy, online surveys, where anonymity ensures impartiality and data authenticity.
These networks balance information sharing with privacy preservation, with prominent examples including the onion routing (Tor), invisible internet project (I2P), and Nym Mixnet.
\textit{Tor} is a connection-oriented system using multi-hop proxy sequences of volunteer nodes~\cite{GolReeSyv:C96,DinMatSyv:C04}.
It offers low latency, high anonymity, and ease of deployment, making it the most widely used anonymous network~\cite{JohnChris:C2013}.
However, due to its lack of traffic obfuscation, it is susceptible to activity pattern detection and de-anonymization via website fingerprinting and end-to-end correlation attacks~\cite{JohnChris:C2013,LingLuo:J2012,Murdoch:C2007}.
\textit{I2P} is a hidden network that employs garlic routing, a variant of onion routing, with one-way encryption for end-to-end communication~\cite{SerSew:C2003}.
Using short-lived links, it reduces third-party tracking risks~\cite{ZanHar:C2011} and replaces Tor's centralized directories with a Distributed Hash Table (DHT), eliminating reliance on a central authority.
However, secure DHT design remains challenging~\cite{Chaum:B1983}, and I2P is vulnerable to de-anonymization by global adversaries conducting traffic analysis~\cite{EggSch:C2013,HerGro:C2011,Hoang:C2018,Jeong:C2016}.
Built on the Loopix protocol,
\textit{Nym Mixnet} offers superior metadata protection through cryptographic reordering and independent message routing~\cite{Chaum:J1981,Danezis:B2009,PioHay:B2017}.
Despite challenges in bandwidth, computation, and latency, Mixnets are critical for privacy-preserving communication~\cite{Diaz:B2010,Kesdogan:B1998}. The evolving Nym network extends to a universal incentivized Mixnet for anonymous email and messaging.

% provides enhanced privacy by protecting network traffic metadata, surpassing Tor and I2P~\cite{Chaum:J1981}. Unlike onion routing, Mixnets route each message independently~\cite{Danezis:B2009,PioHay:B2017}, enhancing metadata protection by reordering messages, making them untraceable in both content and timing.
% Despite challenges in bandwidth, computational overhead, and latency, Mixnets are now critical for communication privacy~\cite{Diaz:B2010,Kesdogan:B1998}. The evolving Nym network extends to a universal incentivized Mixnet for anonymous email and messaging.

Among these networks, preserving the anonymity of routing information remains a fundamental security challenge.
% Bidirectional communication introduces vulnerabilities, since it relies on server-router interactions.
These networks mainly depend on server-router interactions. However, bidirectional interactions introduce security vulnerabilities.
Shi and Wu~\cite{ShiWu:C21} propose the Non-Interactive Anonymous Router (NIAR) scheme.
By eliminating interaction-related risks, NIAR ensures security even in the presence of untrusted nodes, a key capability that motivates our work.
We defer the brief introduction of NIAR to Subsection~\ref{subsec_niar}.

\textbf{Motivations:}
% In summary, the primary anonymous routing systems offer significant privacy advantages by effectively concealing communication paths, thereby preventing adversaries from tracing the source or destination of messages.
In summary, current anonymous routing systems, particularly interactive protocols, provide substantial privacy benefits by effectively obfuscating communication paths, thereby preventing adversaries from identifying the source or destination of messages.
However, the security of these anonymous routing systems relies on the assumption that a majority of routing nodes remain uncompromised, known as the \textit{threshold model}~\cite{ChauMay:C2019,FeigeJohn:J2012}.
% Consequently, these systems can only guarantee anonymity strictly under this operational condition; otherwise, the privacy assurance will be degraded.
As a result, these protocols guarantee anonymity only under this strict condition; if an adversary compromises a significant fraction of nodes, privacy protections weaken.
The NIAR scheme, operating non-interactively, is built on a well-defined mathematical security model that enables theoretical proofs and provides a solid foundation for security analysis.
Nevertheless, this scheme relies on a trusted initial setup, posing practical challenges in fully decentralized settings.
Additionally, its computational complexity scales quadratically with the number of participants, leading to inefficient routing computation.
% Its computation complexity is quadratic with the number of participants. This leads to inefficient routing computation and insufficient dynamic adaptability, restricting its practicality to some extent.

Motivated by these limitations, this paper designs a decentralized private communication network protocol that eliminates reliance on both the threshold model and a trusted initial setup.
% The protocol's analysis-resistance and trace-resistance properties are verified through formal security proofs.
The protocol is provably secure, with its resistance to analysis and tracing verified through formal proofs.
Notably, the mathematically grounded security ensures resilience against evolving adversarial strategies, as its rigorously verified cryptographic invariants remain robust against future attacks.
By overcoming the limitations of existing systems, our protocol establishes a novel scheme for privacy-preserving communication in fully decentralized and untrusted environments, offering a theoretically sound and practically feasible approach.

\textbf{Contributions:}
For the significant but challenging private and covert communications, we design a decentralized and threshold-model-free protocol to conceal communication paths and identities of communicating parties. The main contributions are outlined as follows.
\begin{itemize}
    \item
    % We propose a decentralized scheme for the dynamic generation of initial routing. The generation scheme overcomes the reliance on the trusted initial setup in NIAR, and enables our protocol to operate in a fully decentralized environment and better adapt to dynamic conditions.
    We propose a decentralized mechanism to dynamically generate routing configurations in the initial setup. This approach overcomes the critical reliance on a trusted initial setup inherent in NIAR schemes and enables practical operation in fully decentralized environments, thereby facilitating adaptability to dynamic network conditions.
    \item
    % We present a comprehensive implementation method for a decentralized anonymous routing protocol that eliminates reliance on threshold models and trusted initial setups. This protocol ensures indistinguishable identity privacy protection for trusted senders, even when all recipients, some senders, and routers are untrusted.
    We present a detailed implementation of a decentralized anonymous routing protocol that eliminates reliance on the threshold model and any trusted initial setup.
    This protocol ensures indistinguishable identity privacy even in extremely adversarial environments.
    % for trusted senders, even in adversarial environments where all recipients, other senders, and routing nodes are untrusted.
    This marks a significant advancement over existing systems constrained by strict trust requirements.
    \item     Provable security has long been a fundamental challenge in anonymous networks, remaining unresolved due to inherent cryptographic complexity. Different from traditional empirical security analysis of anonymous networks, this paper rigorously proves indistinguishable identity privacy for users even in extremely adversarial environments.
%    % We provide a formal security proof the proposed protocol through comprehensive security experiments. The protocol’s indistinguishability security can be reduced to the External Decisional Linear (XDLin) assumption through a series of hybrid experiments.
%    We provide a theoretical security evaluation of the proposed protocol through comprehensive security experiments.
%    % Specifically, we prove that the indistinguishability security of our protocol can be reduced to a computationally infeasible problem.
This result establishes a formidable defense against both present and future adversarial strategies, offering a level of assurance unattainable by empirically driven approaches.
    % we ensure that its privacy guarantees are not heuristic but are instead derived from mathematically sound principles. This approach offers a formidable defense against both present and future adversarial strategies, as the security properties rest on well-established cryptographic hardness assumptions, providing a level of assurance unattainable by empirically driven designs.
\end{itemize}
Simulation results confirm the practical applicability of the proposed protocol, demonstrating that it is both theoretically secure and practically efficient.

\textbf{Structure:}
Section \ref{sec_pre} presents the preliminaries of this paper, including the application scenario, the building blocks of our protocol, and the hardness assumptions for the theoretical security evaluation.
Section \ref{sec_protocol} introduces the decentralized anonymous communication protocol, which operates independently of any threshold model.
Subsequently, the theoretical security results of the protocol are detailed in Section \ref{sec_eva}.
Section \ref{sec_sim} provides simulation results and a detailed analysis to demonstrate the protocol’s practical efficiency.
Finally, conclusions are drawn in Section \ref{sec_conclu}.

% \subsection{Literature Review}\label{subsec_review}
% \subsection{Contributions and Structure}\label{subsec_contri}
% \subsection{Notation}\label{subsec_notation}
\textbf{Notation:}
Throughout this paper,
%the scalars are denoted by normal-font letters such as $x$ or $X$, the vectors are denoted by bold-font letters such as $\boldsymbol{x}$, and the matrices are denoted by bold-font uppercase letters such as $\boldsymbol{X}$.
% we use $\lambda \in \mathbb{N}^+$ to denote the security parameter that measures the input size of the computational problem. As input, the security parameter is usually expressed in the unary form as $1^{\lambda}$, which denotes a string consisting of $\lambda$ ones.
we use $\lambda \in \mathbb{N}^+$ to denote the security parameter which measures the input size of the computational problem. By convention, the security parameter is input in unary form, denoted as $1^{\lambda}$, representing a string consisting of $\lambda$ ones.
% We use $\mathtt{poly}(\lambda)$ and $\mathtt{negl}(\lambda)$ to denote polynomial and negligible functions in $\lambda$, respectively.
The notation $\mathtt{poly}(\lambda)$ denotes a \textit{polynomial function} in $\lambda$, meaning that its growth rate is bounded by some polynomial in $\lambda$.
The notation $\mathtt{negl}(\lambda)$ denotes a \textit{negligible function} in $\lambda$.
A function is considered negligible if it decreases more rapidly than the inverse of any polynomial as $\lambda$ grows, a property critical for ensuring that certain probabilities become vanishingly small as the security parameter increases.
For a given prime $q$,
let $\mathbb{Z}_q$ denote the ring of integers modulo $q$, which is the set $\{0, 1, \dots, q-1\}$ equipped with addition and multiplication operations modulo $q$.
Let $\mathbb{Z}_q^{\times}$ denote the multiplicative group of units of $\mathbb{Z}_q$, consisting of all non-zero elements $\{1, 2, \dots, q-1\}$, since these elements are invertible when $q$ is prime.
Moreover, $\mathbb{Z}_q^{n\times m}$ denotes the set of all $n \times m$ matrices with elements from $\mathbb{Z}_q$.
As introduced in \cite{Groth:C2008}, let $\{q, \mathbb{G}_1, \mathbb{G}_2, \mathbb{G}_T, g_1, g_2, g_T, \mu\}$ denote a bilinear group, where: i) $\mathbb{G}_1$, $\mathbb{G}_2$, and $\mathbb{G}_T$ are cyclic groups of order $q$, whose generators are $g_1$, $g_2$, and $g_T$, respectively; ii) $\mu: \mathbb{G}_1 \times \mathbb{G}_2 \to \mathbb{G}_T$ is a non-degenerate bilinear map, and $g_T = \mu(g_1, g_2)$.
For $t \in \{1, 2, T\}$, we define the notation $[\![a]\!]_t = g_t^a \in \mathbb{G}_t$. For convenience, we use $[\![a]\!]_{1,2}$ to represent the pair $([\![a]\!]_1,[\![a]\!]_2)$.
For vectors, let $\boldsymbol{\xi} \in \mathbb{Z}_q^{1 \times \ell}$ be a row vector and $\boldsymbol{\beta} \in \mathbb{Z}_q^{\ell \times 1}$ be a column vector.
We denote $[\![\boldsymbol{\xi}]\!]_1 = ([\![\xi_1]\!]_1, \dots, [\![\xi_\ell]\!]_1)$, a vector of group elements in $\mathbb{G}_1$,
$[\![\boldsymbol{\beta}]\!]_2 = ([\![\beta_1]\!]_2, \dots, [\![\beta_\ell]\!]_2)^T $, a vector of group elements in $\mathbb{G}_2$.
Using the bilinear map $\mu$, the inner product is computed as $[\![\langle \boldsymbol{\xi}, \boldsymbol{\beta} \rangle]\!]_T = \mu([\![\boldsymbol{\xi}]\!]_1, [\![\boldsymbol{\beta}]\!]_2) \in \mathbb{G}_T$, which is also written as $[\![\boldsymbol{\xi}]\!]_1 [\![\boldsymbol{\beta}]\!]_2$.
% Let $\mathbb{G}_1, \mathbb{G}_2, \mathbb{G}_T$ be three cyclic groups with generators $g_1, g_2, g_T$, which are isomorphic to $\mathbb{Z}_q$ and satisfy: i)
% For a given cyclic group $\mathbb{G}$ that is isomorphic to $\mathbb{Z}_q$, we use the notation $[\![a]\!]$
% to denote the element $g^a$ in the group $\mathbb{G}$, where $g\in \mathbb{G}$ is the generator of $\mathbb{G}$.
% % For a matrix $\boldsymbol{A}$, we use $\boldsymbol{A}[i,-]$ and $\boldsymbol{A}[-,j]$ to denote the $i$-th row and the $j$-th column of $\boldsymbol{A}$, respectively.
% % More generally, for a matrix $\boldsymbol{A} = (a_{ij})\in \mathbb{Z}_q^{k\times \ell}$, $[\![a_{ij}]\!]$ denotes $g^{a_{ij}}$.
% Let $\{q,(\mathbb{G}_1,g_1),(\mathbb{G}_2,g_2),(\mathbb{G}_T,g_T:=\mu(g_1,g_2)),\mu:\mathbb{G}_1 \times \mathbb{G}_2 \mapsto \mathbb{G}_T\}$ be a bilinear group and $[\![a]\!]_t$ denote $g^{a}_t, t=1,2,T$, where $g_t$ is the generator of $\mathbb{G}_t$ and $\mathbb{G}_t \simeq \mathbb{Z}_q$.
% For simplicity, we denote $([\![a]\!]_1,[\![a]\!]_2)$ by $[\![a]\!]_{1,2}$. Similarly, $[\![\boldsymbol{\xi}]\!]_1, [\![\boldsymbol{\beta}]\!]_2$ denote vectors of elements in groups $\mathbb{G}_1$ and $\mathbb{G}_2$, respectively, where $\boldsymbol{\xi} \in \mathbb{Z}_q^{1\times \ell}$ is a row vector and $\boldsymbol{\beta} \in \mathbb{Z}_q^{\ell \times 1}$ is a column vector.
% With $\mu$, we can compute the inner product $[\![\langle \boldsymbol{\xi} , \boldsymbol{\beta} \rangle]\!]_T$, which is also denoted by $[\![\boldsymbol{\xi}]\!]_1 [\![\boldsymbol{\beta}]\!]_2$.
The operator $\leftarrow$ represents the random sampling of an output of a randomized algorithm, $\overset{{\scriptscriptstyle\$}}{\leftarrow}$ represents the uniform sampling of an element from a set.
We operate within a standard computational model and define an adversary $\mathcal{A}$ as \textit{probabilistic polynomial time} (p.p.t.), meaning that $\mathcal{A}$ executes in polynomial time relative to the security parameter.
Unless otherwise specified, all algorithms are probabilistic. The probability that $\mathcal{A}$ outputs $1$ in an experiment $\textbf{Exp}$ is denoted by $\mathrm{Pr}\left[ 1 \leftarrow \mathcal{A}\left(\textbf{Exp}\right) \right]$.
For two bit-strings $\boldsymbol{x}$ and $\boldsymbol{y}$, their concatenation is represented as $\boldsymbol{x} \| \boldsymbol{y}$.

\section{Preliminaries}\label{sec_pre}
In this section, we first describe the application scenario of our private communication protocol.
Following this, we review the NIAR scheme proposed in~\cite{ShiWu:C21}, and the correlated pseudorandom function (CPRF), which serves as the foundational building block of our protocol.
Finally, we present the relevant hardness assumptions for theoretical security evaluation.

\subsection{Scenario Description}\label{subsec_scen}
To address the privacy-preserving challenges of latency-tolerant remote communication networks, we design a decentralized private communication protocol for asynchronous message transmission.
In this application scenario, the protocol ensures anonymity among all honest senders such that no participant can distinguish the identity of its communication source.
Furthermore, it achieves unlinkability against adversarial entities equipped with comprehensive monitoring capabilities and traffic analysis techniques.

Without loss of generality, we consider the foundational scenario and formalize an $n$-to-$n$ communication paradigm, as illustrated in Fig.~\ref{fig_scenario}.
    \begin{figure*}[t]
    \centering
    \includegraphics[width= 0.9\linewidth]{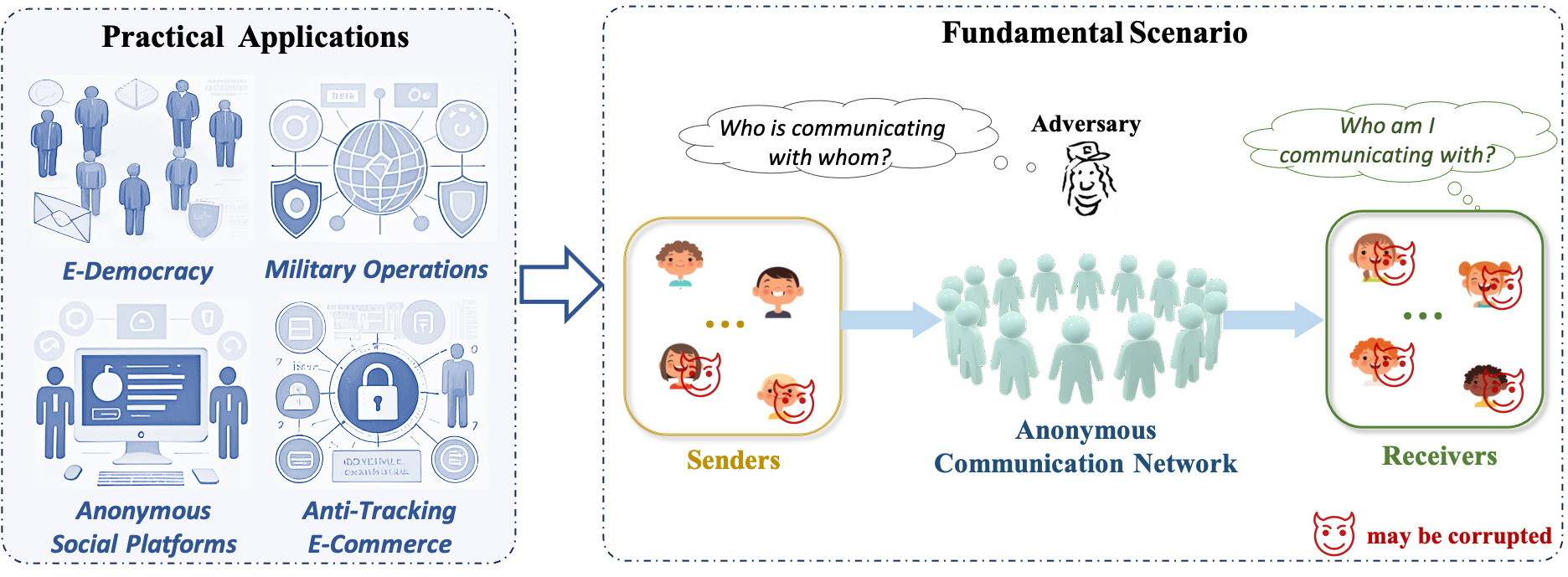}
    \captionsetup{justification=centering}
    \caption{The application scenario of our private communication network protocol.}
    \label{fig_scenario}
    \end{figure*}
The fundamental communication scenario involves $n$ senders and $n$ receivers, where each sender intends to communicate anonymously with a unique receiver. To conceal the identities of participants, all communications undergo a permutation process. Specifically, let $\boldsymbol{\pi} \in S_n$ represent the permutation map between senders and receivers.
For sender $i \in [n]$, let $\boldsymbol{\pi}(i)$ denote the receiver with whom sender $i$ intends to communicate.
This permutation process is executed by an untrusted router, which, by design, is unable to access any knowledge regarding the mapping information $\{i, \boldsymbol{\pi}(i)\}_{i \in [n]}$.
Additionally,  certain senders and receivers may be corrupted, allowing an adversary to gain knowledge of the permutation information known to corrupted participants.
It is assumed that all participants comply with the specified routing protocol, although they may inadvertently leak information.
Our objective is to design private communication networks that can effectively conceal the identities of participants, even when a subset of corrupted participants may collude with the untrusted router.
Let $\mathcal{K}_S \subseteq [n]$ denote the set of corrupted senders and $\mathcal{K}_R \subseteq [n]$ denote the set of corrupted receivers. Let $\mathcal{H}_S := [n] \setminus \mathcal{K}_S$ and $\mathcal{H}_R := [n] \setminus \mathcal{K}_R$ denote the set of honest senders and receivers, respectively.
It is noteworthy that the generation of $\pi$ exhibits permutation invariance with respect to the order of senders.
Without loss of generality, we can assume that $\mathcal{K}_S = \{m+1,\cdots,n\} \subset \{m'+1,\cdots,n\}=\mathcal{K}_R$, where $m'\leq m$, denoting the number of honest receivers and senders, respectively.

\subsection{Non-Interactive Anonymous Router}\label{subsec_niar}
As introduced in~\cite{ShiWu:C21}, the NIAR scheme consists of the following procedures:
\begin{itemize}
    \item $\left( \{\boldsymbol{ek}_i,\boldsymbol{rk}_i\}_{i\in [n]},\boldsymbol{tk} \right) \leftarrow \mathtt{NIAR.Setup}(1^{\lambda},n,\boldsymbol{\pi})$. First, the one-time trusted setup takes the security parameter $1^{\lambda}$, the sender/receiver number $n$, and the routing permutation $\boldsymbol{\pi}$. Subsequently, it outputs the sender keys $\{\boldsymbol{ek}_i\}_{i\in [n]}$, the receiver keys $\{\boldsymbol{rk}_i\}_{i\in [n]}$, and a token $\boldsymbol{tk}$ for the router to encode $\boldsymbol{\pi}$.

    \item $\boldsymbol{ct}_{i,t} \leftarrow \mathtt{NIAR.Enc}\left(\boldsymbol{ek}_i,\boldsymbol{msg}_{i,t},t\right)$. For $t = 1,2,\cdots$, sender $i$ encrypts its message $\boldsymbol{msg}_{i,t}$ with its secret key $\boldsymbol{ek}_i$ and sends ciphertext $\boldsymbol{ct}_{i,t}$ to the router.

    \item $ \{\boldsymbol{ct}'_{i,t}\}_{i\in [n]}\leftarrow \mathtt{NIAR.Rte}\left(\boldsymbol{tk},\{\boldsymbol{ct}_{i,t}\}_{i\in [n]}\right)$. The router uses the token $\boldsymbol{tk}$ to transform the ciphertexts and then forwards $\boldsymbol{ct}'_{i,t}$ to receiver $i$.

    \item $\boldsymbol{msg}_{i,t} \leftarrow \mathtt{NIAR.Dec}(\boldsymbol{rk}_i,\boldsymbol{ct}'_{i,t})$. Receiver $\boldsymbol{\pi}(i)$ uses its secret key $\boldsymbol{rk}_i$ to decrypt $\boldsymbol{ct}'_{i,t}$, thereby retrieving the plaintext $\boldsymbol{msg}_{i,t}$.
\end{itemize}

The indistinguishability security of the NIAR scheme is defined as follows. Let $\mathcal{A}$ denote a non-uniform p.p.t. adversary and $\mathcal{C}$ denote a challenger.
Consider the following experiment $\textbf{NIAR.Exp}^{(b)}(1^{\lambda})$ indexed by $b\in\{0,1\}$:
\begin{itemize}
    \item $n,\mathcal{K}_S,\mathcal{K}_R,\boldsymbol{\pi}_0,\boldsymbol{\pi}_1 \leftarrow \mathcal{A}\left(1^{\lambda}\right)$: $\mathcal{A}$ outputs $\boldsymbol{\pi}_0,\boldsymbol{\pi}_1 \in S_n$ satisfying $\left\{ \left( i,\boldsymbol{\pi}_0(i) \right):i\in \mathcal{K}_S \right\} = \left\{ \left( i,\boldsymbol{\pi}_1(i) \right):i\in \mathcal{K}_S \right\}$, and then sends them to the challenger $\mathcal{C}$.
    \item $\mathcal{C}$ selects $b \overset{{\scriptscriptstyle\$}}{\leftarrow} \{0,1\}$ and runs $\left( \{\boldsymbol{ek}_i,\boldsymbol{rk}_i\}_{i\in [n]},\boldsymbol{tk} \right) \leftarrow \mathtt{Setup}(1^{\lambda},n,\boldsymbol{\pi}_b)$. Then, $\mathcal{C}$ gives $\{\boldsymbol{ek}_i\}_{i \in \mathcal{K}_S}, \{\boldsymbol{rk}_j\}_{j \in \mathcal{K}_R}$, and $\boldsymbol{tk}$ to $\mathcal{A}$.
    \item $\mathcal{A}$ is allowed to make a polynomial number of queries. In the $k$-th query, $\mathcal{A}$ selects and sends two sets of plaintexts $\{\boldsymbol{msg}_{i,t}^0\}_{i \in \mathcal{H}_S}$ and $\{\boldsymbol{msg}_{i,t}^1\}_{i \in \mathcal{H}_S}$ that satisfy $\boldsymbol{msg}_{\boldsymbol{\pi}_0^{-1}(i)}^0 = \boldsymbol{msg}_{\boldsymbol{\pi}_1^{-1}(i)}^1, \forall i\in \mathcal{K}_R \cap \boldsymbol{\pi}_0(\mathcal{H}_S) = \mathcal{K}_R \cap \boldsymbol{\pi}_1(\mathcal{H}_S)$. Then, $\mathcal{C}$ returns $\{\mathtt{Enc}\left(\boldsymbol{ek}_i,\boldsymbol{msg}_{i,t}^b,t\right) \}_{i\in \mathcal{H}_S}$ to $\mathcal{A}$.
\end{itemize}

The NIAR scheme is secure if and only if experiments $\textbf{NIAR.Exp}^{(0)}(1^{\lambda})$ and $\textbf{NIAR.Exp}^{(1)}(1^{\lambda})$ are computationally indistinguishable for any p.p.t. adversary.
Based on the standard Decisional Linear assumption in certain bilinear groups, the NIAR scheme is provably secure~\cite{ShiWu:C21}.
However, it relies on a trusted initial setup, which encodes routing information via a central trusted authority. Therefore, it is incapable of addressing the risk of single-point failure and lacks dynamic adaptability.
In this paper, we aim to enable all participants to collaboratively generate the dynamic routing in a distributed manner without a trusted central node.
Moreover, each participant can access only its own routing information, thereby preventing information leakage and ensuring secure communication as well as the privacy of multi-party participation.

\subsection{Correlated pseudorandom Function}\label{subsec_cprf}
Here we introduce the Correlated Pseudorandom Function (CPRF), which serves as an essential building block of our protocol.
Following \cite{Boyle:C2020,ShiWu:C21}, we use the CPRF as a random number generator.
For distinct periods $t=0,1,\cdots$, each sender $i \in [n]$ independently computes a secret $K_i(t)$ by running the CPRF, which is parameterized by the security parameter $1^\lambda$, the number of senders $n$, and a prime $q$,
denoted by
$$(K_1(t),\cdots,K_n(t)) \leftarrow \mathtt{CPRF}(1^{\lambda},n,q),$$
where $K_1(t),\cdots,K_n(t)$ satisfy the following conditions:
\begin{enumerate}
    \item[C1.]  $K_i(t) \in \mathbb{Z}_q$;\label{cond:1}
    \item[C2.] $\sum_{i} K_i(t) = 0$;\label{cond:2}
    \item[C3.] for any non-uniform p.p.t. adversary $\mathcal{A}$,
    \begin{equation*}
        \Big|\Pr\left[ 1 \leftarrow \mathcal{A}\left(\textbf{CPRF.Exp}^{(0)}\right) \right] 
    - \Pr\left[ 1 \leftarrow \mathcal{A}\left(\textbf{CPRF.Exp}^{(1)}\right) \right] \Big|  = \mathtt{negl}(\lambda),
    \end{equation*}\label{cond:3}
    % \begin{equation*}
    %     \left|
    % \begin{aligned}
    %     &\Pr\left[ 1 \leftarrow \mathcal{A}\left(\textbf{CPRF.Exp}^{(0)}\right) \right] \\
    % &- \Pr\left[ 1 \leftarrow \mathcal{A}\left(\textbf{CPRF.Exp}^{(1)}\right) \right]
    % \end{aligned}
    % \right| = \mathtt{negl}(\lambda),
    % \end{equation*}\label{cond:3}
\end{enumerate}
where $\textbf{CPRF.Exp}^{(b)},~b\in\{0,1\}$ is defined as:
\begin{itemize}
    \item $\mathcal{A}$ sends a subset $\mathcal{K}\subset [n]$ with $|\mathcal{K}| \leq n-2$ to the challenger $\mathcal{C}$, implying that there exists at least two trusted senders;
    \item
    $\mathcal{C}$ samples $b \overset{{\scriptscriptstyle\$}}{\leftarrow} \{0,1\}$. After running $\mathtt{CPRF}(1^{\lambda},n,q)$, $\mathcal{C}$ returns $\{K_i\}_{i\in \mathcal{K}}$ to $\mathcal{A}$;
    \item $\mathcal{A}$ submits distinct $t$. If $b = 0$, $\mathcal{C}$ returns $\{K_j(t)\}_{j\notin \mathcal{K}}$. If $b = 1$, $\mathcal{C}$ randomly samples $\{d_{j} \in \mathbb{Z}_{q}\}_{j\notin \mathcal{K}}$ that satisfy $\sum_{j\notin \mathcal{K}}d_{j}=-\sum_{i\in \mathcal{K}}K_i(t)$, then returns $\{d_{j}\}_{j\notin \mathcal{K}}$.
    \item $\mathcal{A}$ outputs $0$ or $1$.
\end{itemize}
\begin{remark}\label{rem_1}
    Condition (C3) guarantees that any non-uniform p.p.t. adversary's view in
    $\textbf{CPRF.Exp}^{(0)}$ and $\textbf{CPRF.Exp}^{(1)}$ are computationally indistinguishable, implying that$\{K_i\}_{i\in \mathcal{K}}$ are computationally indistinguishable from random elements satisfying $\sum_{j\notin \mathcal{K}}d_{j}=-\sum_{i\in \mathcal{K}}K_i(t)$.
\end{remark}

As introduced in~\cite{BonIva:17,AbdBenGay:C19,Gold:J1986}, CPRFs satisfying conditions (C1)-(C3) can be constructed with specific pseudorandom function family (PRF).
A PRF refers to a collection of functions designed to produce outputs that are computationally indistinguishable from truly random outputs when evaluated with a randomly chosen key~\cite{BonIva:17,Gold:J1986}. Formally, let $\{0,1\}^*$ denote the set of all binary strings, representing possible keys, and $\{0,1\}^{\lambda}$ denote the set of binary strings of length $\lambda$. A PRF is defined as a map: $\mathtt{PRF}:\{0,1\}^*\times \{0,1\}^{\lambda} \mapsto \mathbb{Z}_q$.
% The pseudorandom function family can generate output from a random key and an input such that the output is computationally indistinguishable from a truly random output~\cite{BonIva:17,Gold:J1986}.
% Given a random key $k \in \{0,1\}^*$ and an input $x \in \{0,1\}^{\lambda}$, a pseudorandom function family $\mathtt{PRF}:\{0,1\}^*\times \{0,1\}^{\lambda} \mapsto \mathbb{Z}_q$ maps a $(*+\lambda)$-bit string to $\mathbb{Z}_q$.
Let $W_0$ denote the probability that any p.p.t. adversary correctly identifies an output as being generated by $\mathtt{PRF}$, and let $W_1$ denote the probability that the adversary correctly identifies an output as being drawn uniformly at random from $\mathbb{Z}_q$.
A PRF is considered secure if and only if the advantage
$\mathrm{Adv}_{\mathcal{A}}^{\mathtt{PRF}}(\lambda):= |W_0-W_1|$ is negligible.
For $1 \leq i<j \leq n$, $k_{ij}$ are chosen at random.
Let $k_{ij}:=k_{ji}$ for $i>j$.
The derived key $K_i$ is constructed as
\begin{equation}\label{equ_cprf}
    K_i:= \sum_{j\neq i}(-1)^{j<i} \mathtt{PRF}(k_{ij}),
\end{equation}
where $(-1)^{j<i}$ is the indicator function.
As proven in~\cite{BonIva:17,ShiWu:C21}, this construction (\ref{equ_cprf}) satisfies conditions (C1)-(C3)
and the following Lemma \ref{lem_1} holds.
\begin{lemma}\label{lem_1}
    Suppose the pseudorandom function family $\mathtt{PRF}$ is secure. Under construction (\ref{equ_cprf}), it holds that
    \begin{equation*}
    \mathrm{Adv}_{\mathcal{A}}^{\mathtt{CPRF}}(\lambda) :=
    \Big|
    \Pr\left[ 1 \leftarrow \mathcal{A}\left(\textbf{CPRF.Exp}^{(0)}\right) \right] 
    - \Pr\left[ 1 \leftarrow \mathcal{A}\left(\textbf{CPRF.Exp}^{(1)}\right) \right]
    \Big|
    \leq \frac{n(n-1)}{2} \mathrm{Adv}_{\mathcal{A}}^{\mathtt{PRF}}(\lambda).
\end{equation*}
% \begin{multline*}
%     \mathrm{Adv}_{\mathcal{A}}^{\mathtt{CPRF}}(\lambda) :=\left|
%     \begin{aligned}
%         &\Pr\left[ 1 \leftarrow \mathcal{A}\left(\textbf{CPRF.Exp}^{(0)}\right) \right] \\
%     &- \Pr\left[ 1 \leftarrow \mathcal{A}\left(\textbf{CPRF.Exp}^{(1)}\right) \right]
%     \end{aligned}
%     \right|
%     \\ \leq \frac{n(n-1)}{2} \mathrm{Adv}_{\mathcal{A}}^{\mathtt{PRF}}(\lambda).
% \end{multline*}
\end{lemma}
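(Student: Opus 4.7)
The plan is a standard hybrid argument: replace the PRF keys shared between pairs of honest parties, one at a time, by uniform random functions, and then show the resulting distribution of $\{K_j(t)\}_{j \notin \mathcal{K}}$ coincides exactly with the $b=1$ distribution of $\textbf{CPRF.Exp}^{(1)}$.

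First I would split, for $i \in \mathcal{H} := [n] \setminus \mathcal{K}$, the sum in (\ref{equ_cprf}) as $K_i(t) = A_i(t) + B_i(t)$, with $A_i(t)$ collecting all terms where at least one index lies in $\mathcal{K}$ (those PRF keys are held by some corrupted party and are thus known to $\mathcal{A}$) and $B_i(t) := \sum_{j \in \mathcal{H},\, j \neq i} (-1)^{j<i}\, \mathtt{PRF}(k_{ij}, t)$ collecting the honest--honest contributions. The $A_i(t)$ are perfectly simulatable; only the $B_i(t)$ carry randomness that the adversary has not already been given.

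Next I would enumerate the $N := \binom{|\mathcal{H}|}{2} \leq \binom{n}{2}$ honest--honest pairs and define hybrids $H_0, \ldots, H_N$ where $H_\ell$ has replaced the first $\ell$ of these $\mathtt{PRF}(k_{ij}, \cdot)$ by independent uniform random functions $u_{ij}(\cdot) = u_{ji}(\cdot)$. Then $H_0 = \textbf{CPRF.Exp}^{(0)}$, and a distinguisher between $H_\ell$ and $H_{\ell+1}$ yields a PRF distinguisher against the single swapped key $k_{ij}$ (the reduction samples all other randomness itself, including the corrupted keys, the remaining $k_{ab}$'s, and the already-replaced $u_{ab}(\cdot)$'s). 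Each step therefore costs at most $\mathrm{Adv}^{\mathtt{PRF}}_{\mathcal{A}}(\lambda)$, and the triangle inequality gives the advertised $\tfrac{n(n-1)}{2}$ prefactor.

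The remaining and most delicate step is to verify that $H_N$ is \emph{identically} distributed to $\textbf{CPRF.Exp}^{(1)}$. Antisymmetry of $(-1)^{j<i}$ immediately gives $\sum_{i \in \mathcal{H}} B_i(t) = 0$, so combined with $\sum_{i \in [n]} K_i(t) = 0$ the joint tuple satisfies the same linear constraint as in the $b=1$ experiment. I expect the main obstacle to be showing that $\{B_i(t)\}_{i \in \mathcal{H}}$ is actually \emph{uniform} on the hyperplane $\{x \in \mathbb{Z}_q^{|\mathcal{H}|} : \sum_i x_i = 0\}$. I would handle this by exhibiting a preimage for the linear map $\{u_{ij}(t)\}_{\{i,j\} \subseteq \mathcal{H}} \mapsto \{B_i(t)\}_{i \in \mathcal{H}}$: fix any $h_0 \in \mathcal{H}$ (which exists since $|\mathcal{K}| \leq n-2$ forces $|\mathcal{H}| \geq 2$) and, given a target $(v_i)$ with $\sum v_i = 0$, set $u_{h_0, j}(t) := (-1)^{h_0 < j} v_j$ for every $j \in \mathcal{H} \setminus \{h_0\}$ and all remaining $u_{jk}(t) := 0$; a direct computation gives $B_j = v_j$ for $j \neq h_0$ and $B_{h_0} = -\sum_{j \neq h_0} v_j = v_{h_0}$. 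Since a surjective $\mathbb{Z}_q$-linear map pushes the uniform distribution on its domain to the uniform distribution on its image, and shifting by the adversary-computable $A_i(t)$ preserves uniformity, the distribution of $\{K_i(t)\}_{i \in \mathcal{H}}$ in $H_N$ matches the $b=1$ distribution exactly, completing the bound.
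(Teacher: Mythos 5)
Your proposal is correct and is exactly the standard hybrid argument (replace each of the $\binom{n}{2}$ pairwise PRFs by a random function, then show the final hybrid is uniform on the hyperplane $\sum_i x_i = -\sum_{i\in\mathcal{K}}K_i(t)$ via surjectivity of the $\mathbb{Z}_q$-linear map) that the paper itself does not reprove but defers to the cited constructions of Bonawitz et al.\ and Shi--Wu. Your explicit preimage construction for the surjectivity step, and the observation that $|\mathcal{K}|\le n-2$ guarantees at least one honest--honest pair so the hyperplane is actually covered, fill in precisely the details those references rely on.
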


By Lemma \ref{lem_1}, we note that if the underlying PRF is secure, then the CPRF constructed by (\ref{equ_cprf}) is also secure.
Since $\mathrm{Adv}_{\mathcal{A}}^{\mathtt{PRF}}$ is negligible and $\frac{n(n-1)}{2}$ is polynomial, we have
$\frac{n(n-1)}{2} \mathrm{Adv}_{\mathcal{A}}^{\mathtt{PRF}}(\lambda)$ is also negligible.

\subsection{External Decisional Linear Assumption}\label{subsec_XDLin}
Different from the NIAR scheme based on the Decisional Linear assumption~\cite{ShiWu:C21}, we construct our protocol on the external decisional linear (XDLin) assumption~\cite{Garg:C2010,Tomi:C2016}, defined as follows.
\begin{definition}[XDLin] \label{def_XDLin}
% (External Decisional Linear assumption):
Let
$\mathtt{pp}_{\mathbb{G}}:=\{q,\mathbb{G}_1,\mathbb{G}_2,\mathbb{G}_T,\mu\} \leftarrow \mathcal{G}(1^\lambda)$
be a bilinear group generated by algorithm $\mathcal{G}(1^\lambda)$.
The external decisional linear assumption holds for $\mathcal{G}$ if and only if the following distributions $P_0$ and $P_1$ are computationally indistinguishable. For $x \in \{0,1\}$,
\begin{itemize}
    \item $P_0:= ([\![a]\!]_{1,2},[\![b]\!]_{1,2},[\![ac]\!]_{1,2},[\![bd]\!]_{1,2},[\![c+d]\!]_{x})$ with $a,b,c,d \overset{{\scriptscriptstyle\$}}{\leftarrow} \mathbb{Z}_q$;
     \item $P_1:= ([\![a]\!]_{1,2},[\![b]\!]_{1,2},[\![ac]\!]_{1,2},[\![bd]\!]_{1,2},[\![e]\!]_{x})$ with $a,b,c$, $d,e \overset{{\scriptscriptstyle\$}}{\leftarrow} \mathbb{Z}_q$,
\end{itemize}
that is, for any p.p.t. $\mathcal{A}$,
\begin{equation*}
    \mathrm{Adv}_{\mathcal{A}}^{\mathrm{XDLin}}(\lambda) := 
    \Big| \mathrm{Pr}\left[ 1 \leftarrow \mathcal{A}(P_0) \right] - \mathrm{Pr}\left[ 1 \leftarrow \mathcal{A}(P_1) \right] \Big| =\mathtt{negl}(\lambda).
\end{equation*}
\end{definition}

The XDLin assumption asserts the hardness of distinguishing a specific element in $\mathbb{G}_T$ from a random element, given certain elements in $\mathbb{G}_1$ and $\mathbb{G}_2$. The advantage $\mathrm{Adv}_{\mathcal{A}}^{\mathrm{XDLin}}(\lambda)$ is negligible.

\section{Design of the Decentralized Anonymous Communication Protocol}\label{sec_protocol}
In this section, we present the design of our threshold-model-free decentralized anonymous communication protocol in two phases (Phase I: Setup; Phase II: Communication).
First, we introduce a decentralized setup method without any trusted authority.
Subsequently, we detail the communication procedures of our anonymous router.
We remark that the protocol operates in a fully decentralized manner and does not rely on threshold models.
To streamline presentation, we present only the plaintext version of the anonymous routing protocol.
This protocol can be easily extended to support encrypted communications through integration with standard cryptographic primitives.
The roadmap of our protocol is illustrated in Fig. \ref{fig_roadmap}.
\begin{figure*}[t]
    \centering
    \includegraphics[width=0.9 \linewidth]{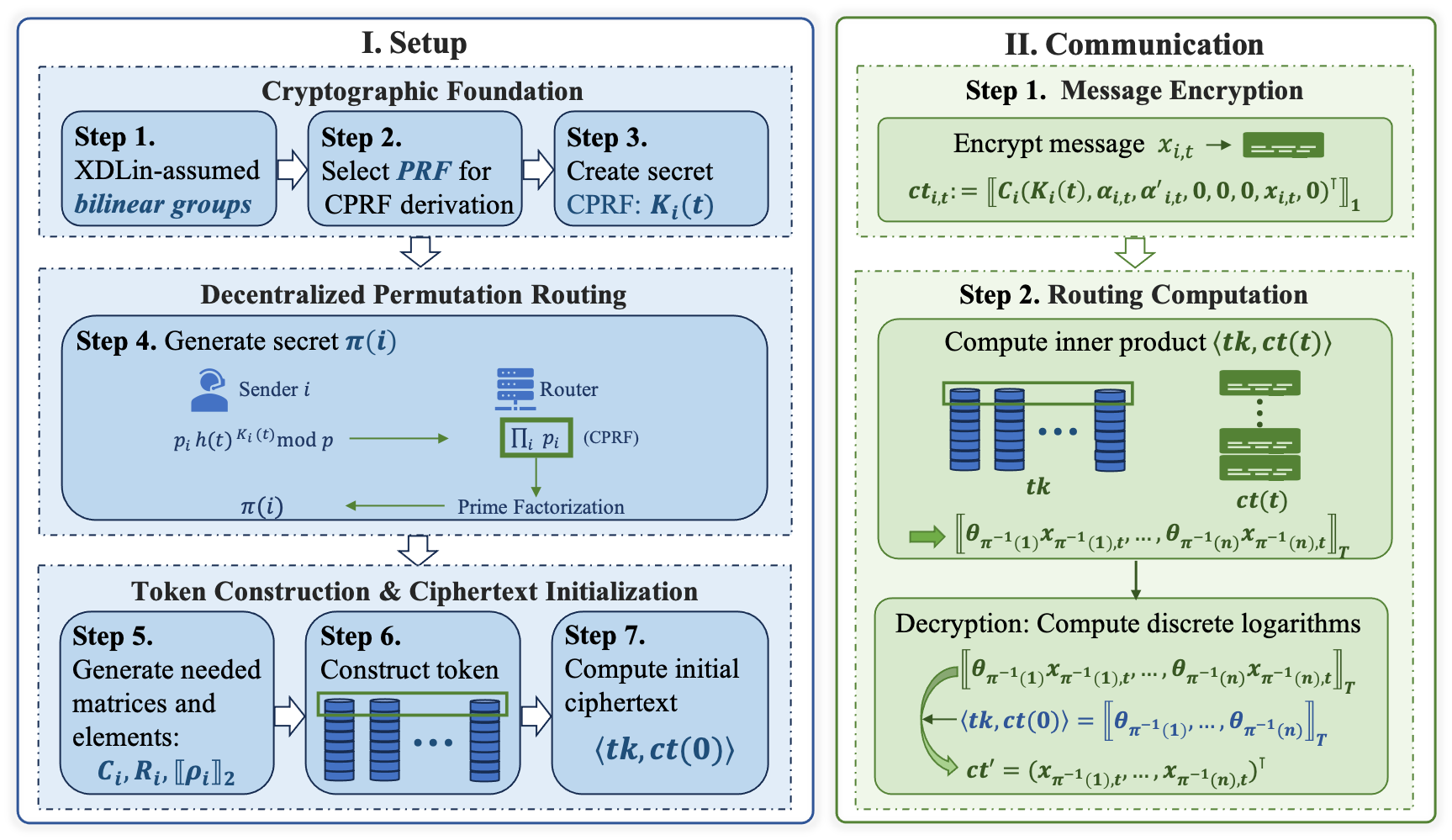}
    \captionsetup{justification=centering}
    \caption{The roadmap of the proposed private communication network protocol.}
    \label{fig_roadmap}
    \end{figure*}

\subsection{Decentralized Setup Process}\label{subsec_setup}
% \begin{itemize}
To achieve initialization without any trusted third-party assumptions, the setup process of our protocol involves the following steps.

\textbf{Step 1:} Select a bilinear group that satisfies the XDLin assumption
$\{q,\mathbb{G}_1,\mathbb{G}_2,\mathbb{G}_T,g_1,g_2,g_T,\mu\}$.
% , where $q$ is a prime, $\mathbb{G}_1,\mathbb{G}_2,\mathbb{G}_T$ are cyclic groups of order $q$, and $\mu(\cdot,\cdot):\mathbb{G}_1 \times \mathbb{G}_2 \mapsto \mathbb{G}_T$ is a bilinear map, $g_T=\mu(g_1,g_2)$, $g_1 \in \mathbb{G}_1$ and $g_2 \in \mathbb{G}_2$ are generators.
Here we note that the generators $g_1$ and $g_2$ randomly selected and undergo dynamic updates in compliance with the security protocol.

\textbf{Step 2:} Select a pseudorandom function family $\mathtt{PRF}:\{0,1\}^*\times \{0,1\}^{\lambda} \mapsto \mathbb{Z}_q$.

\textbf{Step 3:} Select a sufficiently large prime $p$ to support the Diffie-Hellman (DH) protocol \cite{DifHel:J76}.
For $1\leq i<j\leq n$, senders $i$ and $j$ jointly generate the secret key $k_{ij}$ using the DH protocol, whose security is based on the difficulty of solving the discrete logarithm problem.
As described in Fig. \ref{fig_DH},
the DH protocol enables two parties to securely establish a secret key known only to them within an insecure network environment.
To be more specific, senders $i$ and $j$ agree on a public generator $a$.
Then, sender $i$ randomly selects a private key $x_i \overset{{\scriptscriptstyle\$}}{\leftarrow} \mathbb{Z}_p$, computes $a^{x_i} \pmod{p}$, and sends $a^{x_i} \pmod{p}$ to sender $j$.
Subsequently, sender $j$ randomly selects $x_j \overset{{\scriptscriptstyle\$}}{\leftarrow} \mathbb{Z}_p$, computes $a^{x_j} \pmod{p}$, and sends $a^{x_j} \pmod{p}$ to sender $i$. Afterwards, senders $i$ and $j$ can compute the shared secret key $k_{ij}:=a^{x_ix_j} \pmod{p}$.
Let $k_{ij}:=k_{ji}$ for $i>j$.
Hence, each sender $i\in [n]$ can compute $K_i(t):= \sum_{j\neq i}(-1)^{j<i} \mathtt{PRF}(k_{ij},t) \pmod{p}$.
It holds that $\sum_{i=1}^n K_i(t) = 0$.
    \begin{figure}[t]
    \centering
    \includegraphics[width=0.6 \linewidth]{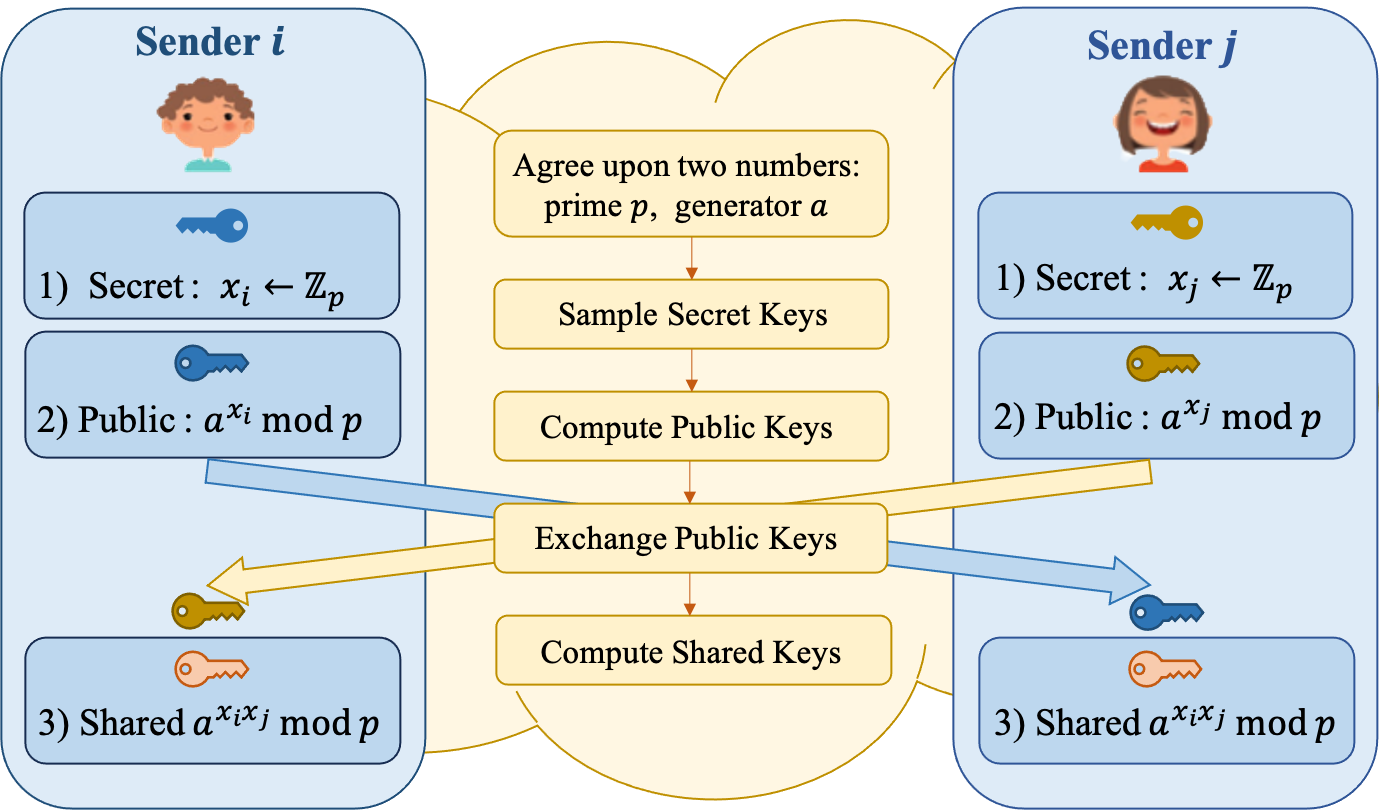}
    \captionsetup{justification=centering}
    \caption{The DH key exchange protocol for senders $i,j$ in Step 3 of the setup process.}
    \label{fig_DH}
    \end{figure}

\textbf{Step 4:} All senders collaboratively generate a permutation $\boldsymbol{\pi} \in S_n$ in a decentralized and anonymous manner, without a trusted third party.
    The detailed process is described in Fig. \ref{fig_routing}.
    Given a public prime list $\mathcal{L}$ and a Hash function $h(t)$, sender $i$ randomly selects a positive integer $r_i \in \mathbb{N}^{+}$.
    Denote the $r_i$-th element of $\mathcal{L}$ by $p_{r_i}$. It is worth mentioning that $p_{r_i}$ should not be excessively large.
    With the public Hash function $h(t)$, sender $i$ computes $p_{r_i}h(t)^{K_i(t)} \pmod{p},$ and then sends the result to the router.
    After receiving $\{p_{r_1}h(t)^{K_1(t)},\cdots,p_{r_n}h(t)^{K_n(t)}\}$, the router computes
    \begin{equation*}
        \prod_{i=1}^n p_{r_i} h(t)^{K_i(t)} \pmod{p} =(\prod_{i=1}^n p_{r_i}) h(t)^{\sum_{i=1}^n K_i(t)} \pmod{p} =\prod_{i=1}^n p_{r_i}.
    \end{equation*}
    Subsequently, the router performs prime factorization on $\prod_{i=1}^n p_{r_i}$, sorts the factors, and publishes the ordered list.
    % All senders then query the prime list $\mathcal{L}$ to determine their respective sequences.
    Each sender $i$ identifies the position of $p_{r_i}$ in the sorted list, denoted as $\boldsymbol{\pi}(i)$.
    % Let $\boldsymbol{\pi}(i)$ denote the order of $p_{r_i}$ among all factors.
    Hence, $\boldsymbol{\pi}$ is the desired permutation.
    We note that each sender $i$ knows only $\boldsymbol{\pi}(i)$, and the router has no knowledge of $\boldsymbol{\pi}$.
    \begin{remark}
        If there are repetitions among $p_{r_1},\cdots,p_{r_n}$, repeat the above steps until there are no repetitions.
    \end{remark}
    \begin{figure}[t]
    \centering
    \includegraphics[width= 0.6 \linewidth]{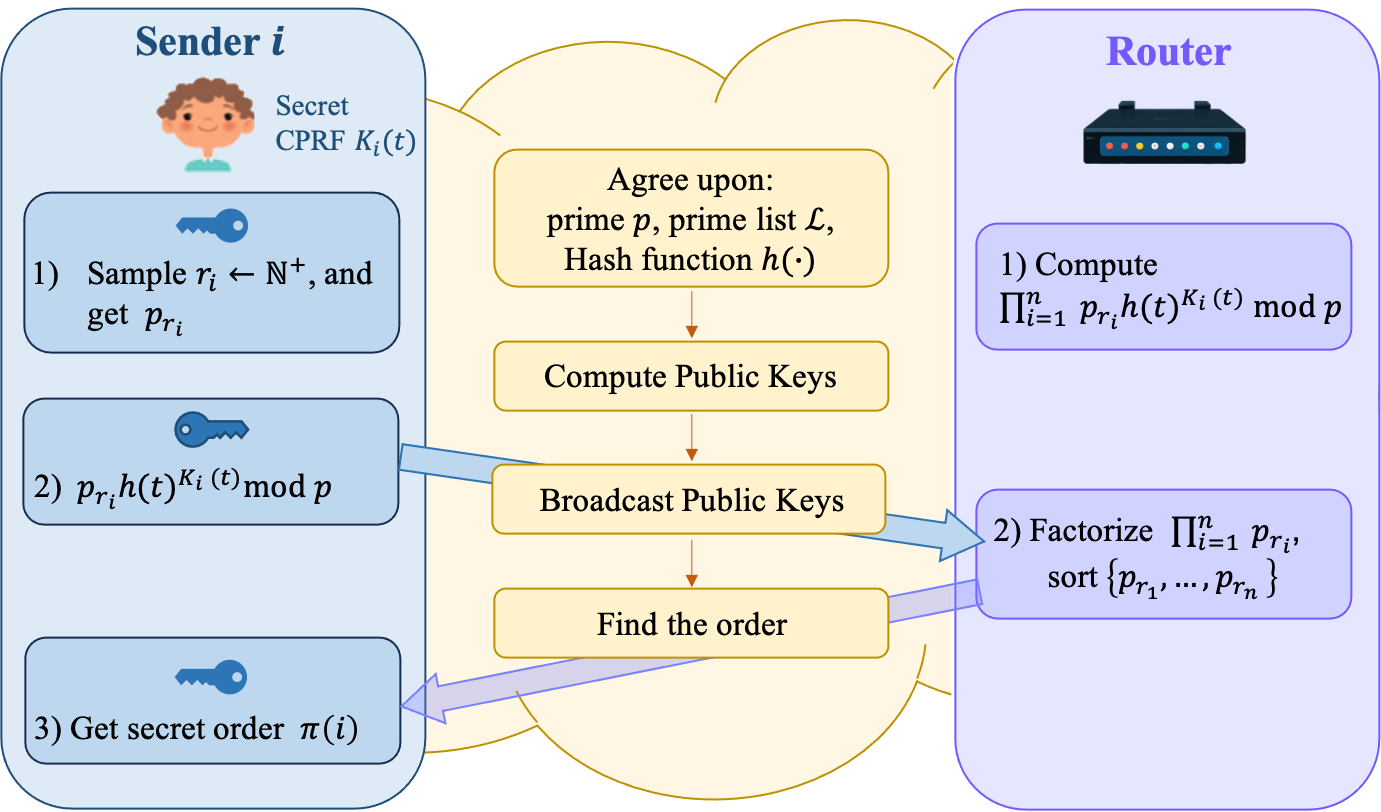}
    \captionsetup{justification=centering}
    \caption{The generation of permutation mapping $\boldsymbol{\pi}$ in Step 4 of the setup process.}
    \label{fig_routing}
    \end{figure}

\textbf{Step 5:} For $i \in [n]$, sender $i$ randomly samples  $\theta_i \overset{{\scriptscriptstyle\$}}{\leftarrow}\mathbb{Z}_q^{\times}$,
and invertible matrix $C_i, R_i \overset{{\scriptscriptstyle\$}}{\leftarrow} (\mathbb{Z}_q)_{8\times 8}^{\times}$ which satisfy %$R_i C_i= I_{8 \times 8}$.
    % $$ R_i C_i= \begin{pmatrix}
    %    1 &   &  &  &\\
    %      & \ddots &  &  &\\
    %      &  &  1 &  & \\
    %      &  &  &\theta_i& \\
    %      &  &  &  & \theta_i
    %   \end{pmatrix},
    % $$
    $$
        R_i C_i= \mathbf{I}_{8},
    $$
    where $\mathbf{I}_{8}$ is the identity matrix of order $8$.
    Meanwhile, all senders collaboratively generate $\{[\![\rho_1]\!]_2,\cdots,[\![\rho_n]\!]_2\}$, while ensuring that no single sender can dominate or predict the outcome.
    The detailed process is described in Fig. \ref{fig_rho}.
    To be more specific, first, each sender generates a random number $c_i$, which is initially kept confidential. Given the Hash function $h(\cdot)$, each sender computes a commitment to the random number $c_i$, denoted by $h_i=h(c_i)$. The commitment $h_i$ is then publicly disclosed.
    Once all senders have shared their commitments $h_1,\cdots,h_n$, each sender reveals the respective $c_i$.
    Other senders can verify the integrity of the revealed values by checking if $h(c_i)=h_i$, ensuring that $c_i$ has not been altered post-commitment.
    After collecting all verified values $\{c_i\}_{i\in[n]}$, the shared seed $s$ is computed as $s=h(c_1\|\cdots\|c_n)$.
    Utilizing the shared seed $s$ as a foundation, $\{h(s\|i)\}_{i\in[n]}$ are public keys generated by all senders, where $i$ is a unique index corresponding to each sender. Subsequently, $h(s\|i)$ is mapped to $\mathbb{G}_2$ via the standardized Hash-to-Curve algorithm~\cite{icart:C2009hash}, ensuring deterministic and secure encoding of the input data into a valid group element within $\mathbb{G}_2$. Let $[\![\rho_i]\!]_2:= \text{Hash-to-Curve}\left(h(s\|i)\right) \in \mathbb{G}_2$.
    Hash-to-curve is a standardized algorithm for uniformly mapping hash function outputs to points on an elliptic curve. Its security can be formally established based on mathematical foundations. The mapping satisfies computational indistinguishability and provides resilience against chosen-input attacks, where adversarially selected inputs do not affect the statistical distribution of the mapped points.
    Here we remark that $\{[\![\rho_i]\!]_2\}_{i\in[n]}$ are public and collectively generated by all senders, while the values of $\{\rho_i\}_{i\in[n]}$ remain secret.
    \begin{figure}[t]
    \centering
    \includegraphics[width=0.6 \linewidth]{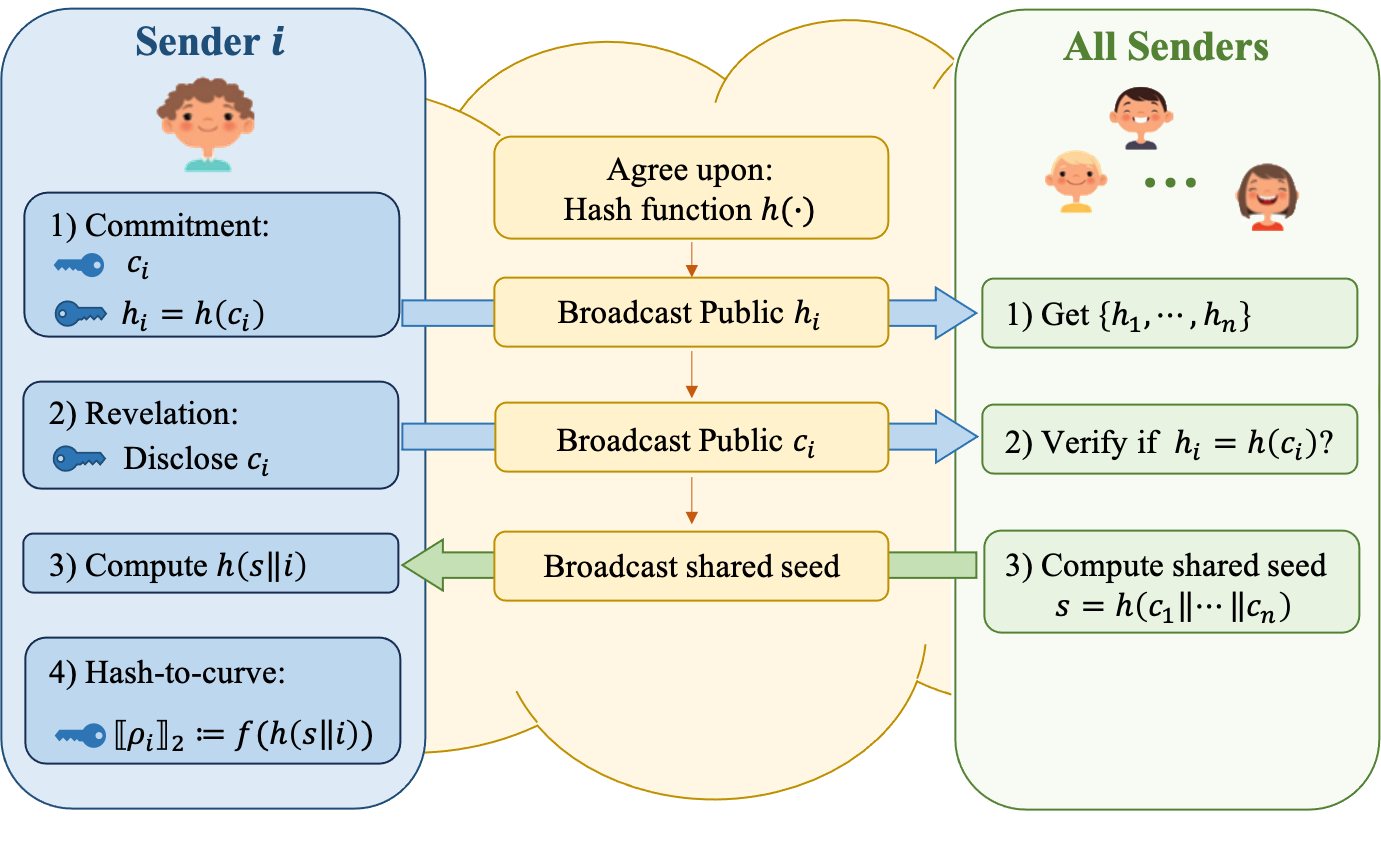}
    \captionsetup{justification=centering}
    \caption{The generation of $\{[\![\rho_1]\!]_2,\cdots,[\![\rho_n]\!]_2\}$ in Step 5 of the setup process.}
    \label{fig_rho}
    \end{figure}

\textbf{Step 6:} For $i \in [n]$, sender $i$ computes $n$ tokens. The $j$-th token corresponding to sender $i$ is computed by
    \begin{equation}\label{equ_tk}
        \boldsymbol{tk}_{i}^j:= [\![\left(\rho_j,0,0,\beta_{i,j},\gamma_{i,j},0,  \theta_i\delta_{j,\boldsymbol{\pi}(i)},0\right)R_i]\!]_{2},
    \end{equation}
    where $\beta_{ij},\gamma_{ij} \overset{{\scriptscriptstyle\$}}{\leftarrow} \mathbb{Z}_q$,
    $\delta_{j,\boldsymbol{\pi}(i)} = 1$ if $j=\boldsymbol{\pi}(i)$,
    and otherwise $\delta_{j,\boldsymbol{\pi}(i)} = 0$.
    Then, sender $i$ sends $\{\boldsymbol{tk}_{i}^1,\cdots,\boldsymbol{tk}_{i}^n\}$ to the router.
    After receiving
    $$\begin{pmatrix}
       \boldsymbol{tk}_{1}^1\\\vdots\\\boldsymbol{tk}_{1}^n
      \end{pmatrix},
      \begin{pmatrix}
       \boldsymbol{tk}_{2}^1\\\vdots\\\boldsymbol{tk}_{2}^n
      \end{pmatrix},\cdots
      \begin{pmatrix}
       \boldsymbol{tk}_{n}^1\\\vdots\\\boldsymbol{tk}_{n}^n
      \end{pmatrix},
      $$
      the router reassembles them row by row.
      Let
      $$\boldsymbol{tk}^j:= \left( \boldsymbol{tk}_{1}^j,\cdots,\boldsymbol{tk}_{n}^j \right)$$ denote the $j$-th row, which is a row vector of length $8n$.
      Therefore, we can obtain the entire routing token
      % $$\boldsymbol{tk}:=
      % \begin{pmatrix}
      %   \boldsymbol{tk}^1\\\vdots\\\boldsymbol{tk}^n
      % \end{pmatrix}.$$
      $$
        \boldsymbol{tk}:= \left( \boldsymbol{tk}^1, \cdots, \boldsymbol{tk}^n \right)^{\top}.
      $$

\textbf{Step 7:} Let $\alpha_{i,0}, \alpha'_{i,0} \overset{{\scriptscriptstyle\$}}{\leftarrow} \mathbb{Z}_q$ be random samples. For $i \in [n]$, sender $i$ computes and sends
    % \begin{equation*}
    % \boldsymbol{ct}_{i,0} := [\![C_i \begin{pmatrix}
    % K_i(0)\\\alpha_{i,0}\\\alpha'_{i,0}\\0\\0\\0\\1\\0
    % \end{pmatrix} ]\!]_{1}
    % \end{equation*}
    $$
        \boldsymbol{ct}_{i,0} := [\![C_i \left(K_i(0), \alpha_{i,0},\alpha'_{i,0},0,0,0,1,0  \right)^{\top} ]\!]_{1}
    $$
    to the router.
    Let $\boldsymbol{ct}(0) := \left(\boldsymbol{ct}_{1,0},\cdots,\boldsymbol{ct}_{n,0}\right)^{\top}$.
    Subsequently, the router computes and stores the inner product $$
    \langle \boldsymbol{tk},\boldsymbol{ct}(0) \rangle
    =\left(
    [\![\theta_{\boldsymbol{\pi}^{-1}(1)}]\!]_{T},
    \cdots,
    [\![\theta_{\boldsymbol{\pi}^{-1}(n)}]\!]_{T}
    \right)^{\top}.
    $$

% \end{itemize}

In summary, the setup process takes the security parameter $1^{\lambda}$ and the number of senders $n$ as input, and outputs
% the sender keys $\{\boldsymbol{ek}_i\}_{i\in [n]}$, the receiver keys $\{\boldsymbol{rk}_i\}_{i\in [n]}$, and
a token $\boldsymbol{tk}$ and an initial ciphertext, which can be denoted by
% $$
% \left(\{\boldsymbol{ek}_i\}_{i\in [n]}, \{\boldsymbol{rk}_i\}_{i\in [n]}, \boldsymbol{tk},\langle \boldsymbol{tk},\boldsymbol{ct}(0) \rangle \right) \leftarrow \mathtt{Setup}(1^{\lambda},n).
% $$
$$
\left(\boldsymbol{tk},\langle \boldsymbol{tk},\boldsymbol{ct}(0) \rangle \right) \leftarrow \mathtt{Setup}(1^{\lambda},n).
$$

\subsection{Routing Steps in Communication Process}\label{subsec_comm}
% \begin{itemize}
Based on the setup process, the routing steps of the communication process can be outlined as follows.

    \textbf{Step 1:} Let $x_{i,t}$ be the $t$-th message that sender $i$ wants to send. With $\alpha_{i,t}, \alpha'_{i,t} \overset{{\scriptscriptstyle\$}}{\leftarrow} \mathbb{Z}_q$,
    sender $i$ computes
    % \begin{equation}\label{equ_ct}
    % \boldsymbol{ct}_{i,t} := [\![C_i \begin{pmatrix}
    % K_i(t)\\\alpha_{i,t}\\\alpha'_{i,t}\\0\\0\\0\\x_{i,t}\\0
    % \end{pmatrix} ]\!]_{1}
    % \end{equation}
    % to the router.
    % Let $\boldsymbol{ct}(t) := \left(\boldsymbol{ct}_{1,t},\cdots,\boldsymbol{ct}_{n,t}\right)^{\top}$.
    \begin{equation}\label{equ_ct}
    \boldsymbol{ct}_{i,t} := [\![C_i \left(
    K_i(t),\alpha_{i,t},\alpha'_{i,t},0,0,0,x_{i,t},0
    \right)^{\top} ]\!]_{1},
    \end{equation}
    and sends it to the router.
    The router then collects $\{\boldsymbol{ct}_{i,t} \}_{i\in [n]}$
    Let $\boldsymbol{ct}(t) := \left(\boldsymbol{ct}_{1,t},\cdots,\boldsymbol{ct}_{n,t}\right)^{\top}$.

    \textbf{Step 2:} With the routing token $\boldsymbol{tk}$, the router computes the inner product $\langle\boldsymbol{tk},\boldsymbol{ct}(t)\rangle$.
    By equation (\ref{equ_tk}) and equation (\ref{equ_ct}), we have
    $$
     \langle \boldsymbol{tk}_i^j,\boldsymbol{ct}_{i,t} \rangle
    = [\![\rho_j K_i(t) + \theta_i \delta_{j,\boldsymbol{\pi}(i)} x_{i,t} ]\!]_{T}.
    $$
    % \begin{multline*}
    %  \langle \boldsymbol{tk}_i^j,\boldsymbol{ct}_{i,t} \rangle \\
    %  = [\![ \left(\rho_j,0,0,\beta_{i,j},\gamma_{i,j},0, \delta_{j,\boldsymbol{\pi}(i)},0\right) R_iC_i \begin{pmatrix}
    % K_i(t)\\\alpha_{i,t}\\\alpha'_{i,t}\\0\\0\\0\\x_{i,t}\\0
    % \end{pmatrix} ]\!]_{T} \\
    % % =[\![ \left(\rho_j,0,0,\beta_{i,j},\gamma_{i,j},0, \delta_{j,\boldsymbol{\pi}(i)},0\right)
    % % \begin{pmatrix}
    % %    1 &   &  &  &\\
    % %      & \ddots &  &  &\\
    % %      &  &  1 &  & \\
    % %      &  &  &\theta_i& \\
    % %      &  &  &  & \theta_i
    % %   \end{pmatrix}
    % %   \begin{pmatrix}
    % % K_i(t)\\\alpha_{i,t}\\\alpha'_{i,t}\\0\\0\\0\\x_{i,t}\\0
    % % \end{pmatrix} ]\!]_{T} \\
    % = [\![\rho_j K_i(t) + \theta_i \delta_{j,\boldsymbol{\pi}(i)} x_{i,t} ]\!]_{T}.
    % \end{multline*}
    Hence, we can obtain that
    \begin{align}\label{equ_routing}
    \langle \boldsymbol{tk},\boldsymbol{ct}(t)\rangle \nonumber &=
    \left(
    \sum_{i=1}^n \langle\boldsymbol{tk}_i^1, \boldsymbol{ct}_{i,t}\rangle,
    \cdots,
    \sum_{i=1}^n \langle\boldsymbol{tk}_i^n , \boldsymbol{ct}_{n,t}\rangle
    \right)^{\top} \nonumber \\
    &= \left(
    [\![\theta_{\boldsymbol{\pi}^{-1}(1)} x_{\boldsymbol{\pi}^{-1}(1),t} ]\!]_{T},
    \cdots,
    [\![\theta_{\boldsymbol{\pi}^{-1}(n)} x_{\boldsymbol{\pi}^{-1}(n),t} ]\!]_{T}
    \right)^{\top}.
    \end{align}
    % \begin{multline}\label{equ_routing}
    % % \boldsymbol{tk} \begin{pmatrix}
    % % \boldsymbol{ct}_{1,t}\\
    % % \vdots \\
    % % \boldsymbol{ct}_{n,t}
    % % \end{pmatrix}
    % \langle \boldsymbol{tk},\boldsymbol{ct}(t)\rangle=
    % \begin{pmatrix}
    % \sum_{i=1}^n \boldsymbol{tk}_i^1 \boldsymbol{ct}_{i,t}\\
    % \vdots \\
    % \sum_{i=1}^n \boldsymbol{tk}_i^n \boldsymbol{ct}_{n,t}
    % \end{pmatrix} \\
    % = \begin{pmatrix}
    % [\![\theta_{\boldsymbol{\pi}^{-1}(1)} x_{\boldsymbol{\pi}^{-1}(1),t} ]\!]_{T}\\
    % \vdots \\
    % [\![\theta_{\boldsymbol{\pi}^{-1}(n)} x_{\boldsymbol{\pi}^{-1}(n),t} ]\!]_{T}
    % \end{pmatrix}.
    % % \begin{pmatrix}
    % % g_T^{\theta_{\boldsymbol{\pi}^{-1}(1)} x_{\boldsymbol{\pi}^{-1}(1),t}}\\
    % % \vdots \\
    % % g_T^{\theta_{\boldsymbol{\pi}^{-1}(n)} x_{\boldsymbol{\pi}^{-1}(n),t}}
    % % \end{pmatrix}.
    % \end{multline}
    Recall from \textbf{Step $7$} of the setup process that the router has stored an initial ciphertext
    $\langle \boldsymbol{tk},\boldsymbol{ct}(0) \rangle$.
    By computing the discrete logarithm of $[\![ \theta_{\boldsymbol{\pi}^{-1}(i)} x_{\boldsymbol{\pi}^{-1}(i),t} ]\!]_{T}$ with respect to $[\![\theta_{\boldsymbol{\pi}^{-1}(i)}]\!]_{T}$, the router can obtain
    $$ \boldsymbol{ct}'(t):=\left( x_{\boldsymbol{\pi}^{-1}(1),t}, \cdots,x_{\boldsymbol{\pi}^{-1}(n),t} \right)^{\top}.$$
% \end{itemize}

In summary, the routing process is denoted as
$$
\boldsymbol{ct}'(t) \leftarrow \mathtt{Rte}\left(\boldsymbol{tk}, \boldsymbol{ct}_{1,t},\cdots,\boldsymbol{ct}_{n,t}\right).
$$

\section{Theoretical Results of Security}\label{sec_eva}
In this section, we first extend the XDLin assumption to broader scenarios. Following this, we rigorously demonstrate the \textit{provable security} of the proposed protocol through several sequences of \textit{security experiments}.
% and then  reduce the security to the computationally infeasible XDLin assumption.

As stated in Section~\ref{sec_intro}, provable security plays a critical role in establishing the theoretical soundness of security mechanisms in cryptography and secure communication systems.
By providing formal mathematical proofs, provable security provides quantifiable guarantees against specified adversarial models.
Unlike heuristic or empirical security approaches relying on observed attack resistance, provable security guarantees that breaking the protocol would require solving a well-studied computational hardness problem~\cite{Bren:C2021,Roga:C2006}.

Furthermore, provable security frameworks establish precise definitions of security goals, such as indistinguishability, and adversary capabilities, eliminating ambiguous notions of ``security through obscurity"~\cite{Stern:C2003}.
Therefore, provable security not only identifies potential vulnerabilities during the design phase, but also facilitates comparative analysis.
From an evolutionary perspective, it has become indispensable for standardizing cryptographic protocols, as evidenced by its mandatory inclusion in modern algorithm specifications~\cite{Stern:C2003},
making it a cornerstone of protocol design.

\subsection{Theoretical Results Extended by XDLin Assumption}
As stated in Subsection \ref{subsec_XDLin},
the XDLin assumption asserts that an adversary cannot efficiently distinguish between a valid linear combination of group elements and a random one in certain bilinear groups. Building upon this assumption, we can further extend this ``computational indistinguishability" to a broader range of scenarios.
Hence, the following propositions are derived, forming the security foundation for the construction of our protocol.
\begin{proposition}\label{prop_1}
    Assume the XDLin assumption holds for $\mathcal{G}$. For $m=\mathtt{poly}(\lambda) \geq 1$, consider the following distributions:
    \begin{itemize}
    \item $P_0^{(1)} := ([\![a]\!]_{1,2},[\![b]\!]_{1,2},[\![a\boldsymbol{k}]\!]_{1,2},[\![b\boldsymbol{d}]\!]_{1,2},[\![\boldsymbol{k}+\boldsymbol{d}]\!]_{x})$ with $a,b\overset{{\scriptscriptstyle\$}}{\leftarrow} \mathbb{Z}_q$ and $\boldsymbol{k},\boldsymbol{d}\overset{{\scriptscriptstyle\$}}{\leftarrow} \mathbb{Z}_q^{m\times 1}$;
     \item $P_1^{(1)} := ([\![a]\!]_{1,2},[\![b]\!]_{1,2},[\![a\boldsymbol{k}]\!]_{1,2},[\![b\boldsymbol{d}]\!]_{1,2},[\![\boldsymbol{k}'+\boldsymbol{d}]\!]_{x})$ with $a,b\overset{{\scriptscriptstyle\$}}{\leftarrow} \mathbb{Z}_q$ and $\boldsymbol{k},\boldsymbol{k}',\boldsymbol{d}\overset{{\scriptscriptstyle\$}}{\leftarrow} \mathbb{Z}_q^{m\times 1}$.
    \end{itemize}
    For any p.p.t. $\mathcal{A}$, there holds
    $$
    \Big| \mathrm{Pr}\left[ 1 \leftarrow \mathcal{A}(P_0^{(1)}) \right] - \mathrm{Pr}\left[ 1 \leftarrow  \mathcal{A}(P_1^{(1)}) \right] \Big|  \leq m\mathrm{Adv}_{\mathcal{A}}^{\mathrm{XDLin}}(\lambda).
    $$
\end{proposition}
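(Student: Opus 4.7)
The plan is to prove this by a standard hybrid argument, reducing the $m$-coordinate indistinguishability to the single-coordinate XDLin assumption at a cost of a factor $m$ in advantage. First, I would define a sequence of hybrid distributions $H_0, H_1, \ldots, H_m$, where $H_j$ coincides with $P_0^{(1)}$ except that the first $j$ entries of the last component $[\![\boldsymbol{k}+\boldsymbol{d}]\!]_x$ are replaced by $[\![k'_i + d_i]\!]_x$ with freshly sampled independent $k'_i \overset{{\scriptscriptstyle\$}}{\leftarrow} \mathbb{Z}_q$. Then $H_0 = P_0^{(1)}$ and $H_m = P_1^{(1)}$, and by the triangle inequality it suffices to bound the distinguishing advantage between each adjacent pair $H_{j-1}$ and $H_j$ by $\mathrm{Adv}_{\mathcal{A}}^{\mathrm{XDLin}}(\lambda)$.

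Next, given an adversary $\mathcal{A}$ that distinguishes $H_{j-1}$ from $H_j$, I would construct a reduction $\mathcal{B}$ that solves XDLin. Upon receiving an XDLin challenge $([\![a]\!]_{1,2}, [\![b]\!]_{1,2}, [\![ac]\!]_{1,2}, [\![bd]\!]_{1,2}, [\![z]\!]_x)$, $\mathcal{B}$ embeds it into coordinate $j$ by implicitly setting $k_j := c$ and $d_j := d$, so the $j$-th entries of $[\![a\boldsymbol{k}]\!]_{1,2}$ and $[\![b\boldsymbol{d}]\!]_{1,2}$ are exactly $[\![ac]\!]_{1,2}$ and $[\![bd]\!]_{1,2}$, while the $j$-th entry of the last component is $[\![z]\!]_x$. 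For coordinates $i < j$, $\mathcal{B}$ samples fresh $k_i, d_i, k'_i \overset{{\scriptscriptstyle\$}}{\leftarrow} \mathbb{Z}_q$, computes $[\![a k_i]\!]_{1,2}$ and $[\![b d_i]\!]_{1,2}$ by exponentiating the given $[\![a]\!]_{1,2}$ and $[\![b]\!]_{1,2}$, and outputs $[\![k'_i + d_i]\!]_x$. For coordinates $i > j$, $\mathcal{B}$ samples fresh $k_i, d_i$, computes $[\![a k_i]\!]_{1,2}$ and $[\![b d_i]\!]_{1,2}$ in the same manner, and outputs the honest value $[\![k_i + d_i]\!]_x$.

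Then I would verify the distribution: when $z = c + d$, coordinate $j$ is an honest $[\![k_j + d_j]\!]_x$ so the simulated transcript has the exact distribution of $H_{j-1}$; when $z$ is uniform and independent in $\mathbb{Z}_q$, the value $[\![z]\!]_x$ is identically distributed to $[\![k'_j + d_j]\!]_x$ with a fresh $k'_j$, so the transcript matches $H_j$. Consequently $\mathcal{B}$ inherits $\mathcal{A}$'s distinguishing advantage, giving $|\Pr[1\leftarrow \mathcal{A}(H_{j-1})] - \Pr[1\leftarrow \mathcal{A}(H_j)]| \leq \mathrm{Adv}_{\mathcal{A}}^{\mathrm{XDLin}}(\lambda)$. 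Summing over $j = 1, \ldots, m$ via the triangle inequality yields the claimed bound $m \cdot \mathrm{Adv}_{\mathcal{A}}^{\mathrm{XDLin}}(\lambda)$.

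The main obstacle is not conceptual but purely bookkeeping: one must be careful that $\mathcal{B}$ can manufacture $[\![a k_i]\!]_{1,2}$ and $[\![b d_i]\!]_{1,2}$ for $i \neq j$ using only the group encodings $[\![a]\!]_{1,2}$ and $[\![b]\!]_{1,2}$ (not the exponents $a,b$, which remain hidden), and that the substitution of $k'_j + d_j$ by a truly uniform $e$ in coordinate $j$ yields the same marginal as the hybrid requires, which holds because $d_j$ is only revealed through $[\![b d_j]\!]_{1,2}$ and uniform randomness is preserved under addition modulo $q$.
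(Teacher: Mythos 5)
Your proof is correct and follows essentially the same route as the paper's: a coordinate-by-coordinate hybrid argument in which each adjacent pair of hybrids is tied to a single XDLin instance embedded in one coordinate, with the remaining coordinates simulated by exponentiating $[\![a]\!]_{1,2}$ and $[\![b]\!]_{1,2}$ with known scalars. The paper only writes out the $m=2$ case with one intermediate distribution and appeals to induction, so your explicit reduction and full hybrid chain $H_0,\dots,H_m$ is simply a more detailed rendering of the same argument.
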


The proof of Proposition \ref{prop_1} is provided in Appendix A.
We remark that Proposition \ref{prop_1} extends the XDLin assumption to the vector case.
Since $\mathrm{Adv}_{\mathcal{A}}^{\mathrm{XDLin}}(\lambda)$ is negligible,
we can obtain that $m\mathrm{Adv}_{\mathcal{A}}^{\mathrm{XDLin}}(\lambda)$ is also negligible.
Hence,
no p.p.t. adversary can distinguish
$P_0^{(1)}$ from $P_1^{(1)}$ with non-negligible advantage.

On this basis, the following Proposition \ref{prop_2} further imposes a sum constraint on the vectors $\boldsymbol{k},\boldsymbol{k}'$.
\begin{proposition}\label{prop_2}
    Assume the XDLin assumption holds for $\mathcal{G}$. For $m=\mathtt{poly}(\lambda) \geq 1$ and any $t \in \mathbb{Z}_q$, consider the following distributions:
    \begin{itemize}
    \item $P_0^{(2)} := ([\![a]\!]_{1,2},[\![b]\!]_{1,2},[\![a\boldsymbol{k}]\!]_{1,2},[\![b\boldsymbol{d}]\!]_{1,2},[\![\boldsymbol{k}+\boldsymbol{d}]\!]_{x})$ with $a,b\overset{{\scriptscriptstyle\$}}{\leftarrow} \mathbb{Z}_q$ and $\boldsymbol{k},\boldsymbol{d}\overset{{\scriptscriptstyle\$}}{\leftarrow} \mathbb{Z}_q^{m\times 1}$;
     \item $P_1^{(2)} := ([\![a]\!]_{1,2},[\![b]\!]_{1,2},[\![a\boldsymbol{k}]\!]_{1,2},[\![b\boldsymbol{d}]\!]_{1,2},[\![\boldsymbol{k}'+\boldsymbol{d}]\!]_{x})$ with $a,b\overset{{\scriptscriptstyle\$}}{\leftarrow} \mathbb{Z}_q$ and $\boldsymbol{k},\boldsymbol{k}',\boldsymbol{d}\overset{{\scriptscriptstyle\$}}{\leftarrow} \mathbb{Z}_q^{m\times 1}$,
    \end{itemize}
    satisfying $\sum_{i=1}^{m}k_i = \sum_{i=1}^{m}k'_i =t$, where $k_i, k'_i$ are the $i$-th elements of $\boldsymbol{k}$ and $\boldsymbol{k}'$, respectively. For any p.p.t. $\mathcal{A}$, there holds
    \begin{equation*}
    \Big| \mathrm{Pr}\left[ 1 \leftarrow \mathcal{A}(P_0^{(2)}) \right] - \mathrm{Pr}\left[ 1 \leftarrow \mathcal{A}(P_1^{(2)}) \right] \Big| \leq (m-1)\mathrm{Adv}_{\mathcal{A}}^{\mathrm{XDLin}}(\lambda).
    \end{equation*}
\end{proposition}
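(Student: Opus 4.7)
The plan is to reduce Proposition \ref{prop_2} directly to Proposition \ref{prop_1} applied at dimension $m-1$. The intuition is that the constraint $\sum_{i=1}^{m} k_i = t$ removes exactly one degree of freedom from $\boldsymbol{k}$, so distinguishing $P_0^{(2)}$ from $P_1^{(2)}$ at dimension $m$ should be no harder than distinguishing $P_0^{(1)}$ from $P_1^{(1)}$ at dimension $m-1$, which by Proposition \ref{prop_1} costs at most $(m-1)\,\mathrm{Adv}_{\mathcal{A}}^{\mathrm{XDLin}}(\lambda)$.

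Given a p.p.t. distinguisher $\mathcal{A}$ for $(P_0^{(2)}, P_1^{(2)})$, I would construct a reduction $\mathcal{B}$ for the $(m-1)$-dimensional form of Proposition \ref{prop_1}. On input $([\![a]\!]_{1,2},[\![b]\!]_{1,2},[\![a\boldsymbol{u}]\!]_{1,2},[\![b\boldsymbol{v}]\!]_{1,2},[\![\boldsymbol{w}]\!]_{x})$ with $\boldsymbol{u},\boldsymbol{v},\boldsymbol{w}\in\mathbb{Z}_q^{(m-1)\times 1}$, $\mathcal{B}$ samples a fresh $C\overset{{\scriptscriptstyle\$}}{\leftarrow}\mathbb{Z}_q$ and implicitly defines the $m$-dimensional challenge by $k_i := u_i$, $d_i := v_i$ for $i<m$, together with $k_m := t - \sum_{i<m} u_i$ and $d_m := C - t - \sum_{i<m} v_i$. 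The first two choices keep the marginal distribution of $(k_i)_{i<m}$ uniform and force $\sum_i k_i = t$, while the auxiliary scalar $C$ makes $d_m$ marginally uniform and independent of $d_1,\ldots,d_{m-1}$.

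The second task is to show that every group component of the simulated tuple can be assembled from the given input. The only components that are not already among the first $m-1$ coordinates of the input are $[\![a k_m]\!]_{1,2}$, $[\![b d_m]\!]_{1,2}$, and $[\![k_m + d_m]\!]_{x}$. Writing $[\![a k_m]\!]_{1,2} = [\![a]\!]_{1,2}^{t}\prod_{i<m}[\![a u_i]\!]_{1,2}^{-1}$, $[\![b d_m]\!]_{1,2} = [\![b]\!]_{1,2}^{C-t}\prod_{i<m}[\![b v_i]\!]_{1,2}^{-1}$, and crucially $[\![k_m + d_m]\!]_{x} = [\![C]\!]_{x}\prod_{i<m}[\![w_i]\!]_{x}^{-1}$, each one is computable by $\mathcal{B}$ from the inputs and the known scalars $t, C$. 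A routine check then shows that when the challenger's bit is $0$ the simulated tuple is distributed as $P_0^{(2)}$, and when it is $1$ the effective plaintext vector is $\boldsymbol{k}' := (u'_1,\ldots,u'_{m-1}, t - \sum_{i<m} u'_i)$, which is uniform with sum $t$ and independent of $\boldsymbol{k}$, matching $P_1^{(2)}$. The bound $(m-1)\mathrm{Adv}_{\mathcal{A}}^{\mathrm{XDLin}}(\lambda)$ then follows from Proposition \ref{prop_1}.

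The main obstacle, and the one I would expect to cause the most friction during a careful write-up, is the construction of $[\![k_m + d_m]\!]_{x}$. A naive simulation that samples $d_m$ independently would require $\mathcal{B}$ to produce $[\![\sum_{i<m} u_i]\!]_{x}$, which is not derivable from the Proposition \ref{prop_1} instance (only $[\![a u_i]\!]_{1,2}$ and $[\![w_i]\!]_x = [\![u_i+v_i]\!]_x$ are given). The resolution is the re-parameterization $d_m := C - t - \sum v_i$: it arranges for the $\sum u_i$ contributions to cancel against $\sum v_i$ via $\sum w_i$, collapsing $[\![k_m + d_m]\!]_x$ to $[\![C]\!]_x \prod_{i<m}[\![w_i]\!]_x^{-1}$ while keeping $d_m$ uniform and $\boldsymbol{d}$ independent of $\boldsymbol{k}$. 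Once this cancellation is locked in, the rest of the argument reduces to bookkeeping on the two cases of Proposition \ref{prop_1}.
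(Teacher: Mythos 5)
Your reduction is correct and is essentially the paper's own argument: the paper likewise embeds an $(m-1)$-dimensional instance of Proposition~\ref{prop_1} into an $m$-dimensional constrained tuple by setting $k_m = t-\sum_{i<m} l_i$ and $d_m = r-\sum_{i<m} e_i$ for a fresh uniform $r$ (your $C$ equals $t+r$), so that the last sum component collapses to a quantity computable from $[\![t+r]\!]_x$ and the given $[\![w_i]\!]_x$. Your write-up is in fact somewhat more explicit than the paper's about why each simulated group element is computable, but the idea and the $(m-1)$ bound are identical.
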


Under the XDLin assumption, for any p.p.t. adversary, the correlation in $P_0^{(2)}$ remains computationally indistinguishable from the randomized structure in $P_1^{(2)}$, even with the sum constraint $\sum_{i=1}^{m}k_i = \sum_{i=1}^{m}k'_i =t$.
The proof in Appendix B demonstrates that the distinguishing advantage is bounded by $(m-1)\mathrm{Adv}_{\mathcal{A}}^{\mathrm{XDLin}}(\lambda)$, which is also negligible.

Based on Proposition \ref{prop_2}, the following Proposition \ref{prop_3} provides $\mathcal{A}$ with an additional query capability.
\begin{proposition}\label{prop_3}
    Assume the XDLin assumption holds for $\mathcal{G}$. For $m=\mathtt{poly}(\lambda) \geq 1$ and any $t \in \mathbb{Z}_q$, consider the following game $(\triangle)$:
    \begin{itemize}
    \item $\mathcal{A}$ receives the distributions $P_0^{(2)}$ and $P_1^{(2)}$ as defined in Proposition \ref{prop_2}.
    %The adversary $\mathcal{A}$ receives
        % \begin{itemize}
        % \item $P_0^{(2)} := ([\![a]\!]_{1,2},[\![b]\!]_{1,2},[\![a\boldsymbol{k}]\!]_{1,2},[\![b\boldsymbol{d}]\!]_{1,2},[\![\boldsymbol{k}+\boldsymbol{d}]\!]_{x})$, where $a,b\overset{{\scriptscriptstyle\$}}{\leftarrow} \mathbb{Z}_q$ and $\boldsymbol{k},\boldsymbol{d}\overset{{\scriptscriptstyle\$}}{\leftarrow} \mathbb{Z}_q^{n\times 1}$,
        % \item $P_1^{(2)} := ([\![a]\!]_{1,2},[\![b]\!]_{1,2},[\![a\boldsymbol{k}]\!]_{1,2},[\![b\boldsymbol{d}]\!]_{1,2},[\![\boldsymbol{k}'+\boldsymbol{d}]\!]_{x})$, where $a,b\overset{{\scriptscriptstyle\$}}{\leftarrow} \mathbb{Z}_q$ and $\boldsymbol{k},\boldsymbol{k}',\boldsymbol{d}\overset{{\scriptscriptstyle\$}}{\leftarrow} \mathbb{Z}_q^{n\times 1}$,
        % \end{itemize}
        % satisfying $\sum_{i=1}^{n}k_i = t = \sum_{i=1}^{n}k'_i$, where $k_i, k'_i$ are the $i$-th elements of $\boldsymbol{k}$ and $\boldsymbol{k}'$, respectively.
    \item $\mathcal{A}$ uniformly samples elements $\{r_1,\cdots,r_m\}$ from the group $\mathbb{G}_x$, where $x\in \{1,2\}$, and sends them to the challenger $\mathcal{C}$. Then, $\mathcal{C}$ returns $\{r_1^a,\cdots,r_m^a\}$ to $\mathcal{A}$.
    \end{itemize}
    For any p.p.t. $\mathcal{A}$, it holds that
    \begin{equation*}
    \Big| \mathrm{Pr}\left[ 1 \leftarrow \mathcal{A}^{(\triangle)}(P_0^{(2)}) \right] - \mathrm{Pr}\left[ 1 \leftarrow \mathcal{A}^{(\triangle)}(P_1^{(2)}) \right] \Big|  \leq (m-1)\mathrm{Adv}_{\mathcal{A}}^{\mathrm{XDLin}}(\lambda).
    \end{equation*}
\end{proposition}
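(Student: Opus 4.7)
The plan is to reduce Proposition~\ref{prop_3} to Proposition~\ref{prop_2} by showing that the extra oracle in game $(\triangle)$ grants no additional distinguishing power. Given any p.p.t.\ adversary $\mathcal{A}$ with advantage $\epsilon$ in $(\triangle)$, I would construct a distinguisher $\mathcal{A}'$ for $P_0^{(2)}$ versus $P_1^{(2)}$ that inherits the same advantage $\epsilon$; composed with the bound from Proposition~\ref{prop_2}, this immediately yields the claimed $(m-1)\mathrm{Adv}_{\mathcal{A}}^{\mathrm{XDLin}}(\lambda)$.

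The construction proceeds as follows. $\mathcal{A}'$ receives its challenge $P_b^{(2)}$, which already contains $[\![a]\!]_{1,2}$ as a component, and forwards it to an internal copy of $\mathcal{A}$. When $\mathcal{A}$ reaches the query phase, $\mathcal{A}'$ simulates the oracle internally: it draws $s_1,\ldots,s_m \overset{{\scriptscriptstyle\$}}{\leftarrow} \mathbb{Z}_q$, sets $r_i := g_x^{s_i}$, and computes the ``oracle response'' as $r_i^a = \bigl([\![a]\!]_x\bigr)^{s_i}$ using only the $[\![a]\!]_x$ already supplied in $P_b^{(2)}$. It then hands the pair $(r_i, r_i^a)$ to $\mathcal{A}$ as if $\mathcal{A}$ had produced the query itself and received the real oracle reply. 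The simulated joint distribution of $(r_i, r_i^a)$ is statistically identical to the one in game $(\triangle)$, because $r_i$ is uniform on $\mathbb{G}_x$ (as $s_i$ is uniform on $\mathbb{Z}_q$) and $r_i^a$ is the genuine $a$-th power. Hence $\mathcal{A}'$'s output distribution conditional on $b$ matches $\mathcal{A}$'s output in $(\triangle)$ on $P_b^{(2)}$, and their distinguishing advantages coincide; applying Proposition~\ref{prop_2} finishes the proof.

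The main subtlety, which I expect to be the central point requiring care, is justifying the replacement of $\mathcal{A}$'s own $r_i$-sampling procedure by $\mathcal{A}'$'s canonical $s_i$-exponent sampler. Since game $(\triangle)$ specifies that the $r_i$ are drawn uniformly from $\mathbb{G}_x$, the exponent-first sampler produces the identical marginal distribution on $r_i$, and $\mathcal{A}$'s subsequent computation depends only on the realised values of $P_b^{(2)}$ together with the oracle transcript—not on the internal randomness used to produce the $r_i$. Once this observation is made precise, the simulation is perfect and the reduction is tight, giving the same bound as Proposition~\ref{prop_2}.
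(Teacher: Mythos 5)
Your proposal is correct and is essentially the same argument as the paper's: the paper routes through an auxiliary game $(\bigtriangledown)$ in which the queries are exponents $z_i\in\mathbb{Z}_q$ and then asserts that $(\triangle)$ is no easier, which is exactly your observation that uniform $r_i\in\mathbb{G}_x$ can be sampled exponent-first so the oracle response $r_i^a=\bigl([\![a]\!]_x\bigr)^{s_i}$ is computable from the challenge alone. Your explicit simulation, including the remark that the reduction is perfect because the game fixes the $r_i$ to be uniform and $\mathcal{A}$ sees only the transcript, is in fact a tighter rendering of the paper's more informal ``the adversary has more information in $(\bigtriangledown)$'' step.
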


The proof of Proposition \ref{prop_3} is presented in Appendix C.
We can observe that Proposition \ref{prop_3} involves additional interactions on the basis of Proposition \ref{prop_2}.
The adversary's ability to query the challenger with elements from
$\mathbb{G}_x$ and receive their images under exponentiation by $a$ simulates limited oracle access.
Under the XDLin assumption, Proposition \ref{prop_3} demonstrates that $\mathcal{C}$'s response does not leak additional information that would aid $\mathcal{A}$ in distinguishing the two distributions $P_0^{(2)}$ and $P_1^{(2)}$.

Furthermore, the following Proposition \ref{prop_4} establishes the equivalence between two interactive games, denoted as $(\triangle)$ and $(\diamond)$.
\begin{proposition}\label{prop_4}
    The games $(\triangle)$ and $(\diamond)$ are the same in procedures except for: in $(\diamond)$
    \begin{itemize}
    \item After $\mathcal{A}$ receives $P_0^{(2)}$ and $P_1^{(2)}$, as defined in Proposition \ref{prop_2}, $\mathcal{C}$ sends the pair $\big\{ (r_1,r_1^{\frac{1}{a}}),\cdots,(r_m,r_m^{\frac{1}{a}})\big\}$ to $\mathcal{A}$, where $\{r_1,\cdots,r_m\}$ are uniformly sampled elements of $\mathbb{G}_x$.
    \end{itemize}
    For any p.p.t. $\mathcal{A}$, it holds that
    \begin{equation*}
    \Big| \mathrm{Pr}\left[ 1 \leftarrow \mathcal{A}^{(\diamond)}(P_0^{(2)}) \right] - \mathrm{Pr}\left[ 1 \leftarrow \mathcal{A}^{(\diamond)}(P_1^{(2)}) \right] \Big|  \leq (m-1)\mathrm{Adv}_{\mathcal{A}}^{\mathrm{XDLin}}(\lambda).
    \end{equation*}
\end{proposition}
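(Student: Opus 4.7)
The plan is to reduce game $(\diamond)$ directly to game $(\triangle)$ via a distribution-preserving simulation, exploiting the fact that exponentiation by the invertible scalar $a \in \mathbb{Z}_q^{\times}$ is a bijection on the cyclic prime-order group $\mathbb{G}_x$. The key observation is that if we sample $s_i \overset{{\scriptscriptstyle\$}}{\leftarrow} \mathbb{G}_x$ and set $r_i := s_i^a$, then $r_i$ is uniformly distributed in $\mathbb{G}_x$ and $s_i = r_i^{1/a}$. Hence the joint distribution $\{(s_i^a, s_i)\}_{i \in [m]}$ obtained by querying the oracle of $(\triangle)$ with $s_i$ is identically distributed to $\{(r_i, r_i^{1/a})\}_{i \in [m]}$ as produced in $(\diamond)$.

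First, I would construct a reduction $\mathcal{B}$ that plays game $(\triangle)$ and internally simulates game $(\diamond)$ for any p.p.t. adversary $\mathcal{A}$. Upon receiving $P_b^{(2)}$ from the challenger $\mathcal{C}_\triangle$ of game $(\triangle)$, the reduction $\mathcal{B}$ forwards the same $P_b^{(2)}$ to $\mathcal{A}$, which is legitimate because $P_0^{(2)}$ and $P_1^{(2)}$ are defined identically in both games. To answer $\mathcal{A}$'s second-round expectation of receiving pairs $\{(r_i, r_i^{1/a})\}_{i \in [m]}$, the reduction $\mathcal{B}$ samples $\{s_1, \ldots, s_m\} \overset{{\scriptscriptstyle\$}}{\leftarrow} \mathbb{G}_x$, submits them as its own query to $\mathcal{C}_\triangle$, receives $\{s_1^a, \ldots, s_m^a\}$, and relays to $\mathcal{A}$ the reordered pairs $\{(s_i^a, s_i)\}_{i \in [m]}$, labeling them as $\{(r_i, r_i^{1/a})\}_{i \in [m]}$. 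Finally, $\mathcal{B}$ outputs whatever bit $\mathcal{A}$ outputs.

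Next, I would verify correctness of the simulation. Since $\mathbb{G}_x$ has prime order $q$, for $a \in \mathbb{Z}_q^{\times}$ the map $y \mapsto y^a$ is a bijection on $\mathbb{G}_x$; thus $r_i := s_i^a$ is uniformly distributed whenever $s_i$ is, and $r_i^{1/a} = s_i$ holds tautologically. Consequently, the view that $\mathcal{B}$ presents to $\mathcal{A}$ is statistically identical to the view $\mathcal{A}$ would see in an honest execution of $(\diamond)$, with the sole exception of the measure-zero event $a = 0$, which occurs with probability $1/q$ and contributes only negligibly. Therefore $\mathcal{B}$'s distinguishing advantage in $(\triangle)$ equals $\mathcal{A}$'s distinguishing advantage in $(\diamond)$, up to a negligible additive term, and by Proposition~\ref{prop_3} the bound $(m-1)\mathrm{Adv}_{\mathcal{A}}^{\mathrm{XDLin}}(\lambda)$ carries over.

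The main obstacle is subtle rather than technical: it is making sure the simulation is order-consistent and that the reduction does not need to know $a$ itself to perform the swap. The argument only requires the \emph{existence} of the bijection $y \mapsto y^a$, not the ability to compute it, because $\mathcal{B}$ never inverts anything explicitly; it simply relabels its own uniformly chosen $s_i$ as the ``$r_i^{1/a}$'' component and treats the oracle response $s_i^a$ as ``$r_i$''. Once this is phrased carefully, the reduction is tight and the conclusion follows immediately from Proposition~\ref{prop_3}.
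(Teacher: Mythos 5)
Your proof is correct, and it reaches the same end point as the paper (a reduction of game $(\diamond)$ to game $(\triangle)$ followed by an appeal to Proposition~\ref{prop_3}), but the mechanism you use for the key step is genuinely different and, in fact, tighter. The paper argues in two informal stages: first that the interactive query in $(\triangle)$ is equivalent to $\mathcal{C}$ handing over the pairs $\{(r_i,r_i^{a})\}$ directly, and second that handing over $\{(r_i,r_i^{a})\}$ is ``equivalent'' to handing over $\{(r_i,r_i^{1/a})\}$ because $\mathcal{A}$ cannot extract $a$ or $a^{-1}$ from either, invoking the discrete logarithm problem. That second step, as stated, is not a complete argument: the inability to compute $a$ does not by itself imply the two joint distributions (which also include $[\![a]\!]_{1,2}$ from $P_b^{(2)}$) are indistinguishable. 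Your re-randomization argument closes exactly this gap: since $y\mapsto y^{a}$ is a bijection on the prime-order group $\mathbb{G}_x$ for $a\in\mathbb{Z}_q^{\times}$, sampling $s_i$ uniformly and relabeling $(s_i^{a},s_i)$ as $(r_i,r_i^{1/a})$ yields a \emph{perfect} simulation of the $(\diamond)$ view from a single legitimate query in $(\triangle)$, with no computational assumption needed for this step and no loss in the advantage (beyond the $a=0$ event of probability $1/q$, which you correctly flag and which the paper silently ignores). The only caveat worth noting is notational: the resulting bound is $(m-1)\mathrm{Adv}_{\mathcal{B}}^{\mathrm{XDLin}}(\lambda)$ for the constructed reduction $\mathcal{B}$ rather than for $\mathcal{A}$ itself, which is the standard reading of such bounds and matches how the paper uses them elsewhere.
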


The proof of Proposition \ref{prop_4} is presented in Appendix D.
Proposition \ref{prop_4} demonstrates that
the specific modification to $\mathcal{C}$'s behavior in $(\diamond)$ does not affect $\mathcal{A}$'s ability to distinguish between distributions $P_0^{(2)}$ and $P_1^{(2)}$.

\subsection{A Theoretical Result of Indistinguishability Security}
Following~\cite{Bren:C2021,Roga:C2006}, provable security can be established through the formalization of security experiments that test the security of cryptographic schemes. These experiments typically simulate interactions between a challenger and an adversary: the challenger generates keys and responds to the adversary's queries, such as encryption, decryption, or signing, while the adversary attempts to break the scheme under a specific attack model, such as chosen plaintext or ciphertext attacks.
The core of the experiment is to quantify the adversary's probability of distinguishing between experiments and to prove that this advantage is negligible by reducing it to the hardness of a mathematical problem.
These experiments can provide mathematically grounded assurances that the protocol can withstand both theoretical scrutiny and practical cryptanalysis.

In the following, we first define the security experiments for our protocol.
% Let $\mathcal{K}_S \subseteq [n]$ denote the set of corrupted senders and $\mathcal{K}_R \subseteq [n]$ denote the set of corrupted receivers. Let $\mathcal{H}_S := [n] \setminus \mathcal{K}_S$ and $\mathcal{H}_R := [n] \setminus \mathcal{K}_R$ denote the set of honest senders and receivers, respectively.
% It is noteworthy that the generation of $\pi$ exhibits permutation invariance with respect to the order of senders.
% Without loss of generality, we can assume that $\mathcal{K}_S = \{m+1,\cdots,n\} \subset \{m'+1,\cdots,n\}=\mathcal{K}_R$, where $m'\leq m$, denoting the number of honest receivers and senders, respectively.
We say that the protocol is indistinguishably secure
if and only if no non-uniform p.p.t. adversary $\mathcal{A}$ can distinguish between the outputs of any two permutations.
To simplify the exposition without loss of generality, we demonstrate that no non-uniform p.p.t. adversary $\mathcal{A}$ can distinguish between any permutation $\boldsymbol{\pi}$ and the identity mapping.
Given that any permutation $\boldsymbol{\pi}$ is indistinguishable from the identity mapping, indistinguishability between any two permutations immediately follows.
Hence, we consider the following two experiments indexed by $b\in \{0,1\}$.
When $b=0$, the permutation $\boldsymbol{\pi}$ is applied.
When $b=1$, the identity mapping is applied.
In $\textbf{Exp}^{(b)}$:
\begin{itemize}
    \item $\mathcal{A}$ interacts with $\mathcal{C}$ to generate a permutation $\boldsymbol{\pi} \in S_m$, which is subsequently extended to a permutation in $S_n$ by setting $\boldsymbol{\pi}(j)=j$ for all $j>m$.
    Then, $\mathcal{A}$ sends $\boldsymbol{\pi}_0 = \boldsymbol{\pi}$, and $\boldsymbol{\pi}_1 = \operatorname{id}:i \to i$ to $\mathcal{C}$.
    \item $\mathcal{C}$ selects $b\overset{{\scriptscriptstyle\$}}{\leftarrow} \{0,1\}$ uniformly at random.
    Then, $\mathcal{C}$ runs the setup algorithm $\left( \boldsymbol{tk},\langle \boldsymbol{tk},\boldsymbol{ct}(0) \rangle \right) \leftarrow \mathtt{Setup}(1^{\lambda},n)$, for $i \in \mathcal{H}_S$,
    \begin{equation*}
        \boldsymbol{tk}_{i}^j = 
        \begin{cases}
        [\![\left(\rho_j,0,0,\beta_{i,j},\gamma_{i,j},0,  \theta_i\delta_{j,\boldsymbol{\pi}(i)},0\right)R_i]\!]_{2}, & b = 0\\
        [\![\left(\rho_j,0,0,\beta_{i,j},\gamma_{i,j},0,  \theta_i\delta_{j,i},0\right)R_i]\!]_{2}, & b = 1
        \end{cases}.
    \end{equation*}
    $\mathcal{C}$ sends $\boldsymbol{tk}$ and $\langle \boldsymbol{tk},\boldsymbol{ct}(0) \rangle$ to $\mathcal{A}$.
    \item  $\mathcal{A}$ makes $Q \left( = \mathtt{poly}(\lambda)\right)$ queries. In the $t$-th query, $\mathcal{A}$ selects a set of plaintexts $\{x_{i,t}\}_{i<m}$ and sends them to $\mathcal{C}$.
    Then, $\mathcal{C}$ computes
    \begin{equation*}
    \boldsymbol{ct}_{i,t}= 
    \begin{cases}
    [\![C_i \left(
    K_i(t),\alpha_{i,t},\alpha'_{i,t},0,0,0,x_{i,t},0
    \right)^{\top} ]\!]_{1}, & b = 0\\
    [\![C_i \left(
    K_i(t),\alpha_{i,t},\alpha'_{i,t},0,0,0,x_{\boldsymbol{\pi}^{-1}(i),t},0
    \right)^{\top} ]\!]_{1}, & b = 1
    \end{cases}.
    \end{equation*}
    and returns $(\boldsymbol{ct}_{1,t},\cdots,\boldsymbol{ct}_{m,t})$ to $\mathcal{A}$.
    \item By checking the obtained ciphertexts, $\mathcal{A}$ outputs $0$ or $1$.
\end{itemize}

By proving that $\textbf{Exp}^{(0)}$ and $\textbf{Exp}^{(1)}$ are computationally indistinguishable, we can conclude the indistinguishability security of our protocol, and the following Theorem \ref{thm_scheme} holds.

\begin{theorem}[Indistinguishability Security]\label{thm_scheme}
Suppose the pseudorandom function family $\mathtt{PRF}$ is secure and the XDLin assumption holds for $\mathcal{G}$, it holds that
    \begin{multline}\label{equ_theo}
        \left| \mathrm{Pr}\left[ 1 \leftarrow \mathcal{A}\left(\textbf{Exp}^{(0)}\right) \right] - \mathrm{Pr}\left[ 1 \leftarrow \mathcal{A}\left(\textbf{Exp}^{(1)}\right) \right] \right| \\
        \leq 2\Big[ \frac{m(m+1)+Qm}{q} + \mathrm{Adv}_{\mathcal{A}}^{\mathrm{CPRF}}(\lambda)+ \left[Q(m^2+m-2)+m^2\right] \mathrm{Adv}^{\mathrm{XDLin}}(\lambda) \Big].\nonumber
    \end{multline}
\end{theorem}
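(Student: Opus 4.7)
The plan is to establish Theorem 1 via a sequence of hybrid experiments bridging $\textbf{Exp}^{(0)}$ and $\textbf{Exp}^{(1)}$ through a common intermediate experiment $\textbf{Exp}^{*}$ in which the bit $b$ is statistically hidden from the adversary; this symmetric structure explains the outer factor of $2$ in the bound, since one bounds $|\Pr[1 \leftarrow \mathcal{A}(\textbf{Exp}^{(0)})] - \Pr[1 \leftarrow \mathcal{A}(\textbf{Exp}^{*})]|$ and $|\Pr[1 \leftarrow \mathcal{A}(\textbf{Exp}^{(1)})] - \Pr[1 \leftarrow \mathcal{A}(\textbf{Exp}^{*})]|$ separately and applies the triangle inequality. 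Each hop replaces one ingredient of the protocol with an idealized analogue whose distinguishability from the previous hop can be reduced directly either to CPRF security, to the XDLin assumption (via Propositions~\ref{prop_1}--\ref{prop_4}), or to a simple statistical argument.

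The first hop is to replace the CPRF outputs $\{K_i(t)\}_{i \in \mathcal{H}_S}$ used in the honest senders' ciphertexts with fresh uniform shares $\{d_{i,t}\}_{i \in \mathcal{H}_S}$ constrained to $\sum_{i \in \mathcal{H}_S} d_{i,t} = -\sum_{i \in \mathcal{K}_S} K_i(t)$. By condition (C3) of the CPRF and Lemma~\ref{lem_1}, any p.p.t.\ adversary's distinguishing advantage against this substitution is bounded by $\mathrm{Adv}_{\mathcal{A}}^{\mathrm{CPRF}}(\lambda)$, which explains the $\mathrm{Adv}_{\mathcal{A}}^{\mathrm{CPRF}}(\lambda)$ term. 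A further statistical argument then bounds the probability that any of the $Qm$ freshly sampled masks collides in an ``unsafe'' way by $Qm/q$, and a similar collision argument on the $m(m+1)$ random scalars $\{\theta_i, \beta_{i,j}, \gamma_{i,j}\}_{i \in \mathcal{H}_S, j \in [n]}$ used in the honest tokens (bounding the event that two such values coincide or vanish) yields the $m(m+1)/q$ statistical term.

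The heart of the argument is a second series of hybrids, in which each honest sender's token $\boldsymbol{tk}_i^j$ and each ciphertext $\boldsymbol{ct}_{i,t}$ is rewritten by invoking Propositions~\ref{prop_1}--\ref{prop_4}. The idea is that the slots $(\beta_{i,j}, \gamma_{i,j})$ inside $\boldsymbol{tk}_i^j$ and the zero slots inside $\boldsymbol{ct}_{i,t}$ are precisely the masking coordinates needed to emulate the distributions $P_0^{(2)}$ and $P_1^{(2)}$ of Proposition~\ref{prop_2}: under XDLin, the content of the ``$\theta_i \delta_{j, \boldsymbol{\pi}(i)}$'' coordinate in $\boldsymbol{tk}_i^j$ can be reshuffled across the columns $j$ without being detected, so long as the column sum (which governs the router's inner product output) is preserved. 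Applying this hybrid to each of the $m$ honest senders over all $n$ columns and then across all $Q$ query rounds yields an expression of the form $[Q(m^2+m-2) + m^2] \cdot \mathrm{Adv}^{\mathrm{XDLin}}(\lambda)$; the $m^2$ term accounts for the setup-phase reshuffling of tokens, while the $Q(m^2+m-2)$ term accounts for per-query transformations of ciphertexts. Proposition~\ref{prop_4} is essential here: the router, using $\boldsymbol{tk}$, effectively exposes to $\mathcal{A}$ the pairs $(r, r^{1/a})$ arising from the inner-product outputs, and Proposition~\ref{prop_4} certifies that this additional oracle access does not compromise the XDLin-based indistinguishability.

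The principal difficulty will be the bookkeeping of leakage to corrupted parties. Since $\mathcal{A}$ controls $\mathcal{K}_S$ and $\mathcal{K}_R$, it already knows $\{\boldsymbol{\pi}(i) : i \in \mathcal{K}_S\}$ and $\{\boldsymbol{\pi}^{-1}(j) : j \in \mathcal{K}_R\}$, and these must match in both $\boldsymbol{\pi}_0$ and $\boldsymbol{\pi}_1$; otherwise the experiments are trivially distinguishable. When defining the hybrid at step $k$, one must therefore carefully specify which coordinates of each honest sender's token are being rerandomized, so that the rerandomization is consistent with the corrupted parties' view, and so that the reduction algorithm to XDLin can embed the challenge instance $([\![a]\!]_{1,2}, [\![b]\!]_{1,2}, [\![a\boldsymbol{k}]\!]_{1,2}, [\![b\boldsymbol{d}]\!]_{1,2}, [\![\boldsymbol{k} + \boldsymbol{d}]\!]_x)$ into the simulated protocol transcript without needing to know $a$ or $b$ explicitly. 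Once this embedding is specified and shown to faithfully simulate either the previous or the next hybrid depending on the XDLin challenge bit, summing the per-hop advantages and doubling to account for the symmetric passage through $\textbf{Exp}^{*}$ yields the stated bound.
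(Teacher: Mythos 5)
Your high-level skeleton matches the paper's: a meet-in-the-middle hybrid chain whose symmetric traversal from $\textbf{Exp}^{(0)}$ and $\textbf{Exp}^{(1)}$ to a common midpoint accounts for the outer factor of $2$; a first hop replacing the CPRF outputs by uniform shares with the same sum, costing $\mathrm{Adv}_{\mathcal{A}}^{\mathrm{CPRF}}(\lambda)$; statistical correction terms of order $(m(m+1)+Qm)/q$; and a main body of XDLin hops driven by Propositions~\ref{prop_1}--\ref{prop_4}, with the spare coordinates of the tokens and ciphertexts serving as the embedding slots for the XDLin challenge. You are also right that Proposition~\ref{prop_4} is what licenses handing the adversary pairs of the form $(r, r^{1/a})$ during the reduction.

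The genuine gap is in your description of the central mechanism. You propose that ``the content of the $\theta_i\delta_{j,\boldsymbol{\pi}(i)}$ coordinate in $\boldsymbol{tk}_i^j$ can be reshuffled across the columns $j$ \ldots so long as the column sum is preserved.'' This is not a correctness-preserving invariant and gives no workable hybrid: the router's observable output is the full vector of inner products $\sum_i \langle \boldsymbol{tk}_i^j, \boldsymbol{ct}_{i,t}\rangle$, each entry of which depends on \emph{which} sender's payload lands in \emph{which} position, not on any column sum of the selector coordinate. Moving the selector without simultaneously moving the payload changes the router's output and is trivially detectable. What the paper actually does --- and what your proposal is missing --- is a two-stage slot migration: first, over $Q$ hybrids ($\textbf{Hyb}^{(1,\ell)}$), the permuted payload $x_{\boldsymbol{\pi}^{-1}(i),t}$ is copied into the unused eighth ciphertext coordinate; then, over $m$ hybrids ($\textbf{Hyb}^{(2,\ell)}$), each token row's selector is switched from $\theta_i\delta_{j,\boldsymbol{\pi}(i)}$ in the seventh slot to $\theta_i\delta_{j,i}$ in the eighth. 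Each stage is made undetectable by temporarily activating a spare coordinate ($\zeta_i$ or $\xi_i$) via an XDLin reduction, performing the swap through an explicit change of basis $C_i \mapsto C_i M$, $R_i \mapsto M^{-1}R_i$ (which leaves the adversary's view literally identical because only $[\![\,\cdot\,R_i]\!]_2$ and $[\![C_i\,\cdot\,]\!]_1$ are observed and $R_iC_i=\mathbf{I}_8$ is preserved), and then deactivating the spare coordinate symmetrically --- this inner symmetry, not only the outer meet-in-the-middle, is where the factors of $2$ inside Lemmas~\ref{lem_hyb1} and \ref{lem_hyb2} originate. Without the spare-slot-plus-conjugation idea, your hybrids have no way to alter tokens and ciphertexts simultaneously while keeping the router's output fixed, so the chain cannot be completed as proposed. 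Relatedly, your accounting is off: the $Q$-dependent XDLin cost is not purely a ``per-query ciphertext transformation'' --- the token-switching stage itself incurs a $Q(m-1)$ XDLin cost per row (in $\textbf{Hyb}^{(2,\ell,3)}$) because the key slot of all $Q$ queries must be rerandomized to hide each row switch.
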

\begin{remark}
     We note that the number of honest senders, $m$, and the number of adversary's queries, $Q$, are both polynomial in $\lambda$, whereas $q$ is typically a large prime that can be exponential in $\lambda$. Hence, the term $\frac{m(m+1)+Qm}{q}$ becomes negligible as $\lambda$ increases.
     For the XDLin-related term $\left[Q(m^2+m-2)+m^2\right] \mathrm{Adv}^{\mathrm{XDLin}}(\lambda)$, since the coefficient $Q(m^2+m-2)+m^2$ is polynomial and $ \mathrm{Adv}^{\mathrm{XDLin}}(\lambda)$ is negligible under the XDLin assumption, the entire term remains negligible.
     Combined with the security of the PRF, $\mathrm{Adv}^{\mathrm{CPRF}}(\lambda)$ is also negligible.
     Therefore, we can conclude that equation (\ref{equ_theo}) holds.
\end{remark}
\begin{remark}
     Theorem \ref{thm_scheme} illustrates that the adversary's distinguishing advantage between $\textbf{Exp}^{(0)}$ and $\textbf{Exp}^{(1)}$ is negligible, indicating that no p.p.t. adversary can effectively distinguish honest senders with non-negligible probability.
     Through formal proofs, we provide a provable security guarantee for the proposed protocol.
     Provable security has long been a fundamental challenge in privacy-preserving communication networks, remaining unresolved due to intrinsic cryptographic complexity.
     Our results fundamentally demonstrate the theoretical feasibility of constructing provably secure anonymous networks, addressing key challenges in simultaneously ensuring computational efficiency and rigorous security certification.
\end{remark}

Next, we present a comprehensive and detailed proof of Theorem \ref{thm_scheme}. The outline of the proof is described in Fig. \ref{fig_proof}.
\begin{figure*}[t]
    \centering
    \includegraphics[width=\linewidth]{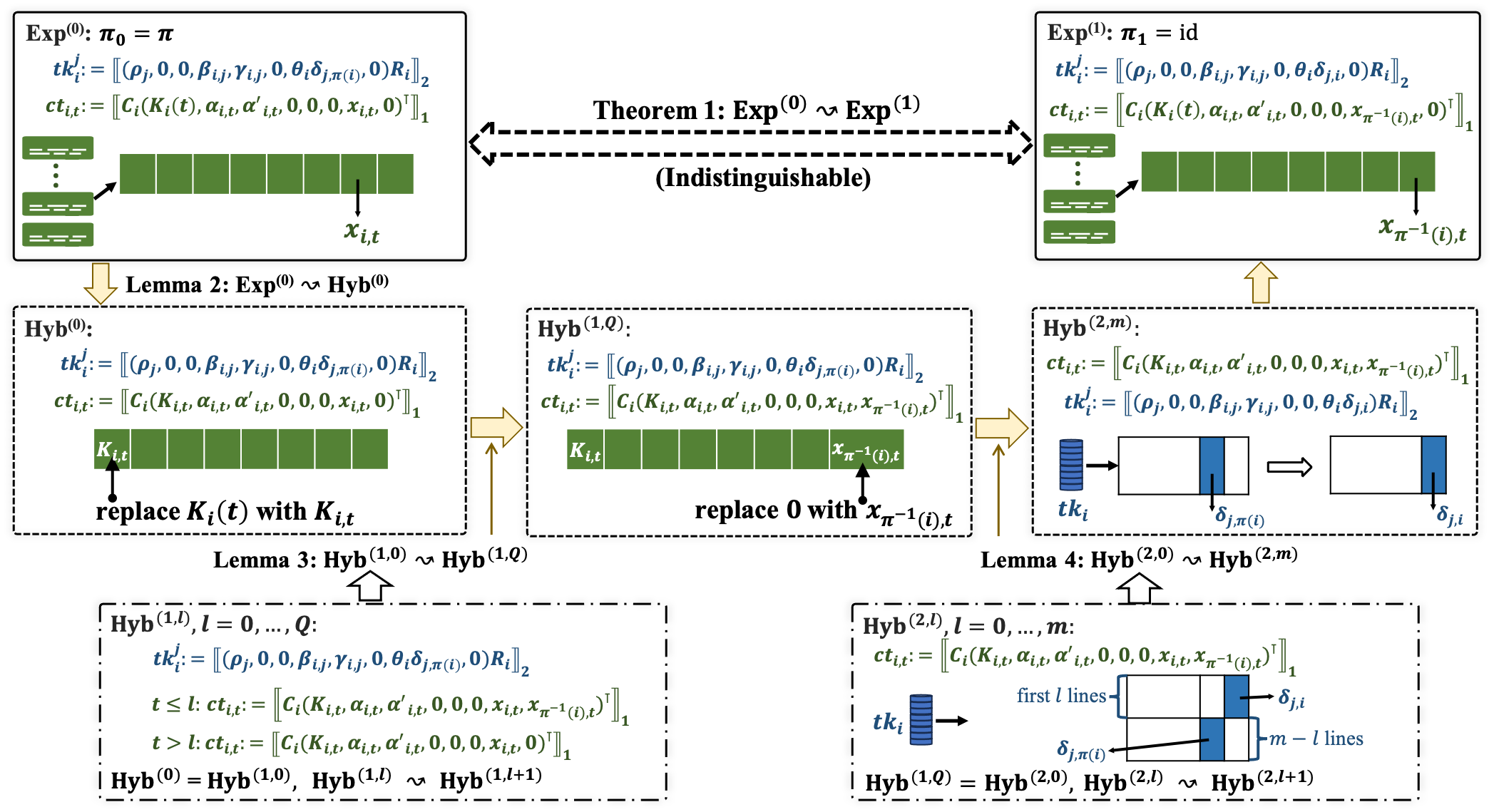}
    \captionsetup{justification=centering}
    \caption{The outline of Theorem \ref{thm_scheme}'s proof.}
    \label{fig_proof}
    \end{figure*}

\begin{proof}[Proof of Theorem \ref{thm_scheme}]
% To prove the indistinguishability security of our protocol,
Following the framework of security experiments, we construct several sequences of hybrid experiments to gradually establish a switch between $\textbf{Exp}^{(0)}$ and $\textbf{Exp}^{(1)}$.
By proving that each consecutive pair of hybrid experiments is computationally indistinguishable, we can obtain the indistinguishability of $\textbf{Exp}^{(0)}$ and $\textbf{Exp}^{(1)}$.
We now introduce the detailed hybrid experiments.

\noindent\textbf{(1) Hybrid Experiment $\textbf{Hyb}^{(0)}$:}

First, we define $\textbf{Hyb}^{(0)}$, which is identical to $\textbf{Exp}^{(0)}$ except for the calculation of $(\boldsymbol{ct}_{1,t},\cdots,\boldsymbol{ct}_{m,t})$.
In $\textbf{Hyb}^{(0)}$, when processing the $t$-th $\mathtt{Enc}$ query, the challenger $\mathcal{C}$ replaces
the element $K_i(t)$ in equation (\ref{equ_ct}) with a uniformly random element $K_{i,t} \overset{{\scriptscriptstyle\$}}{\leftarrow} \mathbb{Z}_q$, subject to the constraint
$\sum_{i = 1}^{m}K_{i,t}=\sum_{i = 1}^{m}K_{i}(t).$
This leads to the following Lemma \ref{lem_hyb0}, which shows the indistinguishability of $\textbf{Exp}^{(0)}$ and $\textbf{Hyb}^{(0)}$.
\begin{lemma}\label{lem_hyb0}
Suppose the pseudorandom function family $\mathtt{PRF}$ is secure, then for any non-uniform p.p.t. $\mathcal{A}$, $\textbf{Exp}^{(0)}$ and $\textbf{Hyb}^{(0)}$ are computationally indistinguishable. Formally,
      \begin{equation*}
          \Big| \mathrm{Pr}\left[ 1 \leftarrow \mathcal{A}\left(\textbf{Exp}^{(0)}\right) \right] - \mathrm{Pr}\left[ 1 \leftarrow  \mathcal{A}\left(\textbf{Hyb}^{(0)}\right) \right] \Big|  \leq \mathrm{Adv}_{\mathcal{A}}^{\mathrm{CPRF}}(\lambda).
      \end{equation*}
\end{lemma}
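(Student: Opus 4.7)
The plan is to prove Lemma~\ref{lem_hyb0} by a direct reduction: any non-uniform p.p.t.\ distinguisher $\mathcal{A}$ between $\textbf{Exp}^{(0)}$ and $\textbf{Hyb}^{(0)}$ should yield a distinguisher $\mathcal{B}$ with at least the same advantage in the CPRF experiment. The intuition is that the two experiments differ only in the first coordinate of the honest senders' ciphertexts used in equation~(\ref{equ_ct}): $\textbf{Exp}^{(0)}$ uses the genuine CPRF outputs $\{K_i(t)\}_{i\in\mathcal{H}_S}$, while $\textbf{Hyb}^{(0)}$ uses uniformly random $\{K_{i,t}\}_{i\in\mathcal{H}_S}$ conditioned on having the same sum. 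This is precisely the distinction drawn by $\textbf{CPRF.Exp}^{(0)}$ versus $\textbf{CPRF.Exp}^{(1)}$ with the choice $\mathcal{K}=\mathcal{K}_S$, because the global relation $\sum_{i=1}^n K_i(t)=0$ rewrites the CPRF-side constraint $\sum_{j\notin\mathcal{K}}d_j=-\sum_{i\in\mathcal{K}}K_i(t)$ as $\sum_{i\in\mathcal{H}_S}d_i=\sum_{i\in\mathcal{H}_S}K_i(t)$, exactly matching the sampling rule imposed in $\textbf{Hyb}^{(0)}$.

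Concretely, $\mathcal{B}$ will first send $\mathcal{K}=\mathcal{K}_S=\{m{+}1,\dots,n\}$ to the CPRF challenger and obtain the corrupted senders' key material, which suffices for $\mathcal{B}$ to compute $K_i(t)$ for every $i\in\mathcal{K}_S$ and every $t$ on its own. All remaining setup objects will be generated honestly by $\mathcal{B}$: the bilinear group and hash function, the permutation $\boldsymbol{\pi}$ through interaction with $\mathcal{A}$, the secrets $\theta_i$, $R_i$, $C_i$, the $\{[\![\rho_j]\!]_2\}$, and the tokens $\boldsymbol{tk}_i^j$ as in equation~(\ref{equ_tk}); these depend only on $\boldsymbol{\pi}$ and fresh randomness, never on the CPRF output. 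To form each ciphertext, $\mathcal{B}$ queries the CPRF challenger at the corresponding $t$ (including $t=0$ for the stored initial ciphertext), receives either $\{K_j(t)\}_{j\in\mathcal{H}_S}$ or uniformly random $\{d_j\}_{j\in\mathcal{H}_S}$ with the prescribed sum, and places these values in the first coordinate of $\boldsymbol{ct}_{i,t}$ for $i\in\mathcal{H}_S$; the corrupted senders' ciphertexts are assembled using the $K_i(t)$ that $\mathcal{B}$ computes locally. Finally $\mathcal{B}$ will forward $\mathcal{A}$'s decision bit.

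A straightforward case analysis then shows that $\mathcal{A}$'s view is identically distributed to $\textbf{Exp}^{(0)}$ when the CPRF bit is $0$ and to $\textbf{Hyb}^{(0)}$ when it is $1$, so the distinguishing advantage is bounded by $\mathrm{Adv}_{\mathcal{A}}^{\mathrm{CPRF}}(\lambda)$ as required. The main step to justify carefully is the equivalence of the two sum constraints recorded above; this relies on condition (C2) of the CPRF, and must be checked on a per-epoch basis since the CPRF challenger is queried afresh on every $t$. Beyond that, the argument is routine simulation bookkeeping, since the remaining components of the setup and of the ciphertexts do not depend on whether the first coordinates are real or random.
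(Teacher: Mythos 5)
Your reduction is correct and is essentially the argument the paper intends: the paper disposes of Lemma~\ref{lem_hyb0} in one sentence by appeal to CPRF security and Lemma~\ref{lem_1}, and your construction of $\mathcal{B}$ (choosing $\mathcal{K}=\mathcal{K}_S$, simulating all CPRF-independent setup honestly, embedding the challenger's per-epoch responses in the first ciphertext coordinate, and using condition (C2) to match the sum constraints) is exactly the standard fleshing-out of that one-liner. The only points worth noting are that the choice $\mathcal{K}=\mathcal{K}_S$ implicitly requires $m\geq 2$ (as the CPRF experiment demands $|\mathcal{K}|\leq n-2$), and that querying the challenger at $t=0$ is harmless because $\mathcal{A}$ only sees the inner product $\langle\boldsymbol{tk},\boldsymbol{ct}(0)\rangle$, which is invariant under the replacement.
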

The proof of Lemma \ref{lem_hyb0} can be easily derived from the security of CPRF and Lemma \ref{lem_1} introduced in Subsection \ref{subsec_cprf}.

\noindent\textbf{(2) Hybrid Experiments $\textbf{Hyb}^{(1,\ell)},\ell = 0,1,\cdots,Q$:}

Then,  we define a sequence of hybrid experiments $\textbf{Hyb}^{(1,\ell)}$ for $\ell = 0,1,\cdots,Q.$
The difference between $\textbf{Hyb}^{(1,\ell)}$ and  $\textbf{Hyb}^{(0)}$ lies in the last element in
the calculation of $\boldsymbol{ct}_{i,t}$ in equation (\ref{equ_ct}).
In $\textbf{Hyb}^{(1,\ell)}$,
\begin{itemize}
    \item for queries with $t > \ell$, $\mathcal{C}$ computes $\boldsymbol{ct}_{i,t}$ by
$$
\boldsymbol{ct}_{i,t} =[\![C_i \left(
    K_{i,t},\alpha_{i,t},\alpha'_{i,t},0,0,0,x_{i,t},0
    \right)^{\top} ]\!]_{1};
$$
    \item for queries with $t \leq \ell$, $\mathcal{C}$ computes $\boldsymbol{ct}_{i,t}$ by
$$
\boldsymbol{ct}_{i,t} =[\![C_i \left(
    K_{i,t},\alpha_{i,t},\alpha'_{i,t},0,0,0,x_{i,t},x_{\boldsymbol{\pi}^{-1}(i),t}
    \right)^{\top} ]\!]_{1}.
$$
\end{itemize}

% during any $\mathtt{Enc}$ query, when $t>\ell$, the challenger $\mathcal{C}$ computes
% $\boldsymbol{ct}_{i,t}$ by
% $$
% \boldsymbol{ct}_{i,t} =[\![C_i \left(
%     K_{i,t}\\\alpha_{i,t}\\\alpha'_{i,t}\\0\\0\\0\\x_{i,t}\\0
%     \right)^{\top} ]\!]_{1},
% $$
% and when $t\leq \ell$, $\mathcal{C}$ computes
% $\boldsymbol{ct}_{i,t}$ by
% $$
% \boldsymbol{ct}_{i,t} =[\![C_i \left(
%     K_{i,t}\\\alpha_{i,t}\\\alpha'_{i,t}\\0\\0\\0\\x_{i,t}\\x_{\boldsymbol{\pi}^{-1}(i),t}
%     \right)^{\top} ]\!]_{1}.
% $$
Note that $\textbf{Hyb}^{(1,0)}$ is identical to
 $\textbf{Hyb}^{(0)}$.
The following Lemma \ref{lem_hyb1} illustrates that $\textbf{Hyb}^{(1,Q)}$ and $\textbf{Hyb}^{(1,0)}$ are computationally indistinguishable.
\begin{lemma}\label{lem_hyb1}
Suppose the pseudorandom function family $\mathtt{PRF}$ is secure and the XDLin
assumption holds for $\mathcal{G}$, then for any non-uniform p.p.t. $\mathcal{A}$, $\textbf{Hyb}^{(1,0)}$ and $\textbf{Hyb}^{(1,Q)}$ are computationally indistinguishable. Formally,
      \begin{equation*}
          \left| \mathrm{Pr}\left[ 1 \leftarrow \mathcal{A}\left(\textbf{Hyb}^{(1,0)}\right) \right] - \mathrm{Pr}\left[ 1 \leftarrow \mathcal{A}\left(\textbf{Hyb}^{(1,Q)} \right) \right] \right| 
          \leq 2Q\left[ \frac{m}{q}+(m-1) \mathrm{Adv}^{\mathrm{XDLin}}(\lambda)  \right].
      \end{equation*}
\end{lemma}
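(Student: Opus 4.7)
My plan is to establish Lemma \ref{lem_hyb1} via a standard hybrid argument over the query index $\ell$, reducing each elementary step to the XDLin assumption through Proposition \ref{prop_4}. The outer telescoping over $\ell = 1, \ldots, Q$ will contribute the factor $Q$, and the bound per step will have to come out to $2\big[m/q + (m-1)\mathrm{Adv}^{\mathrm{XDLin}}(\lambda)\big]$.

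First I would observe that $\textbf{Hyb}^{(1,\ell-1)}$ and $\textbf{Hyb}^{(1,\ell)}$ differ only in the $\ell$-th ciphertext block: in $\textbf{Hyb}^{(1,\ell-1)}$ the eighth coordinate of $\boldsymbol{ct}_{i,\ell}$ is zero, while in $\textbf{Hyb}^{(1,\ell)}$ it equals $x_{\boldsymbol{\pi}^{-1}(i),\ell}$. Crucially, because the eighth coordinate of the token vector inside the brackets of (\ref{equ_tk}) is zero, the inner product $\langle \boldsymbol{tk},\boldsymbol{ct}(\ell)\rangle$ is identical in both hybrids; the two experiments differ only in the raw ciphertexts released to $\mathcal{A}$, and the indistinguishability must therefore be extracted from the random invertible mask $C_i$ (of which $\mathcal{A}$ never sees the inverse $R_i$ applied to anything interacting with the eighth slot).

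Next, I would split each step $\textbf{Hyb}^{(1,\ell-1)} \to \textbf{Hyb}^{(1,\ell)}$ into two consecutive substeps via an intermediate hybrid: first plant fresh XDLin-style randomness into the eighth slot, then rewrite that randomness as the required $x_{\boldsymbol{\pi}^{-1}(i),\ell}$. The reduction to Proposition \ref{prop_4} embeds the challenge pair into the seventh and eighth coordinates of the encrypted vector over the $m$ honest senders: the $m$ coordinates of $\boldsymbol{k}$ correspond to the honest users' contributions, the random mask $C_i$ realises the group encoding of Proposition \ref{prop_4}, and the oracle-query capability of Proposition \ref{prop_4} is what allows the reduction to consistently simulate the tokens $\boldsymbol{tk}_i^j$ and the initial ciphertext $\boldsymbol{ct}(0)$ for the corrupt recipients. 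The sum constraint $\sum_{i=1}^m K_{i,\ell} = \sum_{i=1}^m K_i(\ell)$ inherited from $\textbf{Hyb}^{(0)}$ provides exactly the constraint required by Proposition \ref{prop_2}, which is why the $(m-1)\mathrm{Adv}^{\mathrm{XDLin}}(\lambda)$ factor (rather than $m$) appears per substep; the additive $m/q$ slack arises from the negligible statistical gap that one of the fresh uniform elements used in the embedding accidentally collides with an honest value.

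The main obstacle, I expect, is the bookkeeping in the reduction: one must simulate the honest tokens $\boldsymbol{tk}_i^j$, the stored initial ciphertext $\langle\boldsymbol{tk},\boldsymbol{ct}(0)\rangle$, and all other queries $t \neq \ell$ while embedding the XDLin challenge only into the $\ell$-th query, and simultaneously preserve both the sum constraint on the $K_{i,t}$ and the tight relation $R_i C_i = \mathbf{I}_8$ between the sender's masks. Handling the permutation $\boldsymbol{\pi}$ in the eighth-slot substitution — so that $x_{\boldsymbol{\pi}^{-1}(i),\ell}$ lands in the correct component for each honest $i$ — is the subtle combinatorial step and is the primary reason the sum-constrained Proposition \ref{prop_2}, rather than the unrestricted Proposition \ref{prop_1}, is the natural tool here. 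Once the per-step bound $2\big[m/q + (m-1)\mathrm{Adv}^{\mathrm{XDLin}}(\lambda)\big]$ is verified, a union bound over $\ell = 1, \ldots, Q$ yields the claimed inequality.
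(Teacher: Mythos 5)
Your outer structure is right (telescope over the $Q$ queries; isolate the change in the eighth coordinate of $\boldsymbol{ct}_{i,\ell+1}$; note that the tokens' eighth slot is zero so the routed inner products are unaffected and the hiding must come from the secret masks $C_i,R_i$), but the core mechanism of the single step is missing and the substitute you describe does not work. You propose to ``plant fresh XDLin-style randomness into the eighth slot, then rewrite that randomness as the required $x_{\boldsymbol{\pi}^{-1}(i),\ell}$.'' XDLin-type reductions (Propositions~\ref{prop_1}--\ref{prop_2}) let you switch between two \emph{random} masked values; they give you no way to convert a uniformly random slot entry into a specific, adversarially chosen plaintext $x_{\boldsymbol{\pi}^{-1}(i),\ell}$ (and a perfect rescaling of that slot fails when $x_{\boldsymbol{\pi}^{-1}(i),\ell}=0$ and, being perfect, would not account for the factor $2$ in the bound). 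The paper's step instead plants the auxiliary randomness $\zeta_i\overset{\$}{\leftarrow}\mathbb{Z}_q^{\times}$ in the \emph{sixth} slot (which is zero in both endpoint hybrids, in all other ciphertexts, and in every token), then introduces $x_{\boldsymbol{\pi}^{-1}(i),\ell+1}$ into the eighth slot by a \emph{perfect} change of basis $C_i\mapsto C_iM$, $R_i\mapsto M^{-1}R_i$ with $M$ unipotent carrying the entry $x_{\boldsymbol{\pi}^{-1}(i),\ell+1}/\zeta_i$ in position $(8,6)$ — this is why $\zeta_i$ must be nonzero and is the source of the $m/q$ term — and finally removes $\zeta_i$ symmetrically. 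Two computational transitions sandwiching one information-theoretic one is exactly what produces the per-step bound $2\bigl[\frac{m}{q}+(m-1)\mathrm{Adv}^{\mathrm{XDLin}}(\lambda)\bigr]$; your decomposition cannot reproduce it.

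Two further misattributions: the reduction for this lemma embeds the XDLin challenge in coordinates $2$, $3$ and $6$ via the ephemeral randomizers $\alpha_{i,\ell+1},\alpha'_{i,\ell+1}$ (with $a_i,b_i$ absorbed into a resampled $C_i=W_iD_i$, $\theta_i=a_ib_i$), not in coordinates $7$ and $8$, and it invokes Proposition~\ref{prop_2}, not Proposition~\ref{prop_4} (the latter is needed only in the proof of Lemma~\ref{lem_hyb2}). Likewise the sum constraint $\sum_iK_{i,t}=\sum_iK_i(t)$ plays no role in this lemma's reduction — the challenge vectors here are the unconstrained $\alpha$'s — so it is not ``why'' the coefficient is $m-1$ rather than $m$. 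You would need to supply the auxiliary-slot/basis-change argument to close the gap.
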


The proof of Lemma \ref{lem_hyb1} also employs a sequence of hybrid experiments, with detailed steps provided in Appendix E.

We briefly denote $\textbf{Hyb}^{(1,Q)}$ as $\textbf{Hyb}^{(2)}$.
Subsequently, we define the following sequence of $\textbf{Hyb}^{(2,\ell)}$ for $\ell = 0,1,\cdots,m$.

\noindent\textbf{(3) Hybrid Experiments $\textbf{Hyb}^{(2,\ell)},\ell = 0,1,\cdots,m$:}

The difference between $\textbf{Hyb}^{(2,\ell)}$ and  $\textbf{Hyb}^{(2)}$ lies in the last two elements in
the calculation of $\boldsymbol{tk}_{i}^j$ in equation (\ref{equ_tk}).
In $\textbf{Hyb}^{(2,\ell)}$, for $1\leq i \leq m$,
\begin{itemize}
    \item for $1 \leq j \leq \ell$,
$$\boldsymbol{tk}_{i}^j= [\![\left(\rho_j,0,0,\beta_{i,j},\gamma_{i,j},0,0,\theta_i\delta_{j,i}\right)R_i]\!]_{2};$$
    \item for $\ell+1 \leq j \leq m$,
$$\boldsymbol{tk}_{i}^j= [\![\left(\rho_j,0,0,\beta_{i,j},\gamma_{i,j},0,\theta_i\delta_{j,\boldsymbol{\pi}(i)},0\right)R_i]\!]_{2}.$$
\end{itemize}
Note that $\textbf{Hyb}^{(2,0)}$ is identical to  $\textbf{Hyb}^{(2)}$.
We further demonstrate the indistinguishability of $\textbf{Hyb}^{(2,m)}$ and $\textbf{Hyb}^{(2,0)}$ by the following Lemma \ref{lem_hyb2}.
\begin{lemma}\label{lem_hyb2}
Suppose the pseudorandom function $\mathtt{PRF}$ is secure and the XDLin assumption holds for $\mathcal{G}$, then for any non-uniform p.p.t. $\mathcal{A}$, $\textbf{Hyb}^{(2,0)}$ and $\textbf{Hyb}^{(2,m)}$ are computationally indistinguishable.
Formally,
      \begin{equation*}
        \left| \mathrm{Pr}\left[ 1 \leftarrow \mathcal{A}\left(\textbf{Hyb}^{(2,0)}\right) \right] - \mathrm{Pr}\left[ 1 \leftarrow \mathcal{A}\left(\textbf{Hyb}^{(2,m)}\right) \right] \right| \\
        \leq 2m\left[ \frac{m+1}{q}+(Qm+m-Q) \mathrm{Adv}^{\mathrm{XDLin}}(\lambda)  \right].
      \end{equation*}
\end{lemma}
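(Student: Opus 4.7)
The plan is to establish Lemma~\ref{lem_hyb2} by the triangle inequality along the inner chain $\textbf{Hyb}^{(2,0)},\textbf{Hyb}^{(2,1)},\ldots,\textbf{Hyb}^{(2,m)}$ already introduced in the statement (noting $\textbf{Hyb}^{(2,0)}=\textbf{Hyb}^{(2)}$). It then suffices to bound the advantage of any non-uniform p.p.t.\ adversary in distinguishing each adjacent pair $\textbf{Hyb}^{(2,\ell-1)}$ and $\textbf{Hyb}^{(2,\ell)}$ by $2\bigl[\tfrac{m+1}{q}+(Q(m-1)+m)\mathrm{Adv}^{\mathrm{XDLin}}(\lambda)\bigr]$, since $Q(m-1)+m=Qm+m-Q$, and then sum over $\ell=1,\ldots,m$. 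The two adjacent hybrids differ only in the $\ell$-th token column $\boldsymbol{tk}_i^\ell$: the indicator $\theta_i\delta_{\ell,\boldsymbol{\pi}(i)}$ migrates from slot~$7$ to slot~$8$ as $\theta_i\delta_{\ell,i}$, so the edit is local and uniform across honest senders.

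For each transition $\textbf{Hyb}^{(2,\ell-1)}\to\textbf{Hyb}^{(2,\ell)}$, I would follow an ``insert--swap--remove'' pattern of sub-hybrids. In the insert step, correlated auxiliary randomness is injected into slots~$4$ and~$5$ of $\boldsymbol{tk}_i^\ell$ and, compensating in the inner product, into slots~$4$ and~$5$ of every $\boldsymbol{ct}_{i,t}$ (including the initial $\boldsymbol{ct}_{i,0}$), chosen so that the pairings $\langle \boldsymbol{tk}_i^\ell,\boldsymbol{ct}_{i,t}\rangle$ keep their distribution while the slot~$7$/$8$ coordinates become covered. Proposition~\ref{prop_2} applied with dimension $m-1$ supplies the needed indistinguishability: the sum constraint $\sum_i K_{i,t}=\sum_i K_i(t)$ already enforced by the CPRF replacement in $\textbf{Hyb}^{(0)}$ matches the hypothesis on the vectors $\boldsymbol{k},\boldsymbol{k}'$, and Proposition~\ref{prop_4} legitimizes the auxiliary bilinear-pairing information the adversary extracts from the remainder of $\boldsymbol{tk}$. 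In the swap step, the indicator is moved from slot~$7$ to slot~$8$ under cover of this correlation, introducing only a statistical defect of $\tfrac{m+1}{q}$ from the negligible probability that some $R_i$ is singular, that two $\rho_j$ coincide, or that an auxiliary mask collides with $\theta_i\delta_{\ell,i}$. The remove step mirrors the insert step and accounts for the factor $2$, while the XDLin count per transition is $Q(m-1)$ from the $Q$ ciphertext queries (each engaging $m-1$ honest-sender coordinates via Proposition~\ref{prop_2}) plus $m$ from the analogous treatment of $\boldsymbol{ct}(0)$, totalling $Q(m-1)+m$.

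The main obstacle will be arranging the reduction so that a single XDLin challenge coherently controls the auxiliary randomness across all $Q$ pairings for the same column $\ell$: because $\boldsymbol{tk}_i^\ell$ is reused in every query, the reduction must embed $[\![a]\!]_{1,2}$ as the blinding axis of slots~$4$ and~$5$ of the shared token and (a function of) $[\![b]\!]_{1,2}$ as the per-sender component tying to $\theta_i$, while simulating $Q$ fresh ciphertexts and also the initial pairing $\langle\boldsymbol{tk},\boldsymbol{ct}(0)\rangle$ already committed to the router. Proposition~\ref{prop_4} is the critical enabler here, as the oracle-style side information $r\mapsto r^{1/a}$ supplies precisely what is needed to simulate the contribution of $\boldsymbol{ct}_{i,0}$ without knowing $a$. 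Once this embedding is in place, Lemma~\ref{lem_hyb2} reduces cleanly to the summed applications of Propositions~\ref{prop_2}--\ref{prop_4}, and the stated bound follows.
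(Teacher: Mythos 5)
Your proposal follows essentially the same route as the paper's proof: a hybrid walk over the $m$ token columns, with each adjacent pair $\textbf{Hyb}^{(2,\ell)},\textbf{Hyb}^{(2,\ell+1)}$ bridged by an insert/exact-rewrite/remove chain of sub-hybrids that embeds the XDLin challenge through the $\beta_{i,\ell+1},\gamma_{i,\ell+1}$ randomness (contributing $m\,\mathrm{Adv}^{\mathrm{XDLin}}(\lambda)+\tfrac{m}{q}$), then re-randomizes the $K_{i,t}$ under the sum constraint across all $Q$ queries via Propositions \ref{prop_2} and \ref{prop_4} (contributing $Q(m-1)\mathrm{Adv}^{\mathrm{XDLin}}(\lambda)+\tfrac{1}{q}$), with the symmetric removal supplying the factor $2$ and summation over $\ell$ the factor $m$. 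The only cosmetic discrepancies are that the paper injects the auxiliary mask $\xi_i$ into slot $6$ (slots $4$--$5$ serve only as the blinding axes) and obtains the $m\,\mathrm{Adv}^{\mathrm{XDLin}}(\lambda)$ term from that token-randomization step rather than from the handling of $\boldsymbol{ct}(0)$; the structure and the resulting bound coincide.
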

The proof of Lemma \ref{lem_hyb2} is deferred to the Appendix F.

Combining Lemmas \ref{lem_hyb0}, \ref{lem_hyb1}, and \ref{lem_hyb2},
we can conclude that Theorem \ref{thm_scheme} holds, thereby establishing the indistinguishability of $\textbf{Exp}^{(0)}$ and $\textbf{Exp}^{(1)}$.

\end{proof}

Theorem \ref{thm_scheme} establishes a formal bound on the adversary's distinguishing advantage between $\textbf{Exp}^{(0)}$ and $\textbf{Exp}^{(1)}$.
This bound quantifies the security guarantees of our protocol, demonstrating that the adversary's ability to distinguish the two experiments is negligible, given that the employed PRF is secure and the XDLin assumption holds.
Hence, our protocol's untraceability and unlinkability can be rigorously proven via this indistinguishability result.

% Given that for any p.p.t. adversary, the probability of successfully distinguishing PRF outputs from random outputs is negligible, and breaking the XDLin assumption is computationally infeasible within polynomial time,
% the difference in an adversary’s success probabilities across the two experiments $\textbf{Exp}^{(0)}$ and $\textbf{Exp}^{(1)}$
% is tightly bounded, thereby affirming the security of the proposed protocol.

\section{Simulations}\label{sec_sim}
In this section, we present some simulations and tests for our proposed protocol.
First, we describe the experimental environment and specific configurations.
Subsequently, 
we provide the communication efficiency evaluation results,
focusing on two critical metrics: encryption/decryption speed and ciphertext transmission efficiency.
Additionally, detailed discussion and analysis of the test results are provided to validate the performance of our protocol.

\subsection{Experimental Configuration}
The simulation experiments were conducted on a virtual machine running Ubuntu $22.04$, equipped with GCC $11.4.0$, $2$ GB of memory, and an Intel i5-8265U processor operating at $1.60$ GHz.
Cryptographic primitives were instantiated using the BLS12-381 curve, a pairing-friendly elliptic curve, which guarantees $128$-bit security under standardized cryptographic assumptions.
All bilinear pairing operations and elliptic curve arithmetic were implemented with the RELIC library, a high-performance cryptographic toolkit optimized for modular exponentiation and pairing-based protocols.

\subsection{Results and Analysis}
The security of our protocol is rigorously demonstrated through formal proofs in Section \ref{sec_eva}.
Since operational performance is crucial for the real-world deployment of cryptographic algorithms, this section focuses on evaluating the communication efficiency of the proposed protocol and demonstrating its practical applicability.
To be more specific, the communication efficiency is assessed through two primary metrics, encryption and decryption speeds, as well as transmission efficiency.
The detailed results are presented below.

\subsubsection{Encryption and Decryption Speeds}
To comprehensively investigate the encryption and decryption performance, two sets of tests are conducted under varying conditions.
\begin{itemize}
    \item Test 1 is designed to evaluate the impact of \emph{message length} and is therefore performed with a \emph{fixed number of users} while \emph{varying the message length}.
    \item Test 2 is designed to assess the impact of \emph{the number of users} and is therefore conducted with a \emph{varying number of users} while \emph{keeping the message length fixed}.
\end{itemize}

\textbf{Test 1:} In this set of tests, the number of users is fixed at $10$, while the message length per communication is varied across $6-10$ bits.
The results are summarized in TABLE~\ref{tab1} and Fig.~\ref{fig_test1}.

\begin{table}[h]
\begin{center}
\caption{Encryption and routing computation time for $10$ users with varying message lengths.}
\label{tab1}
\begin{tabular}{|c|c|c|}
\hline
\textbf{Message Length} & \textbf{User Encryption} & \textbf{Routing Computation} \\
\textbf{(bit)} & \textbf{Time (ms)} & \textbf{Time (ms)} \\
\hline
6& $1.4$ & $780$\\
\hline
7& $1.9$ & $787$\\
\hline
8& $1.3$ & $778$\\
\hline
9& $1.9$ & $794$\\
\hline
10& $1.3$ & $791$\\
\hline
\end{tabular}
\end{center}
\footnotesize
Note: The above results are based on single-threaded benchmark testing. According to the protocol design, multi-threaded implementations are supported. In practical deployments, there is significant potential for improving computational efficiency through methods such as parallel computing, GPU acceleration, and memory expansion..
\end{table}

\begin{figure}[t]
    \centering
    \includegraphics[width= 0.8\linewidth]{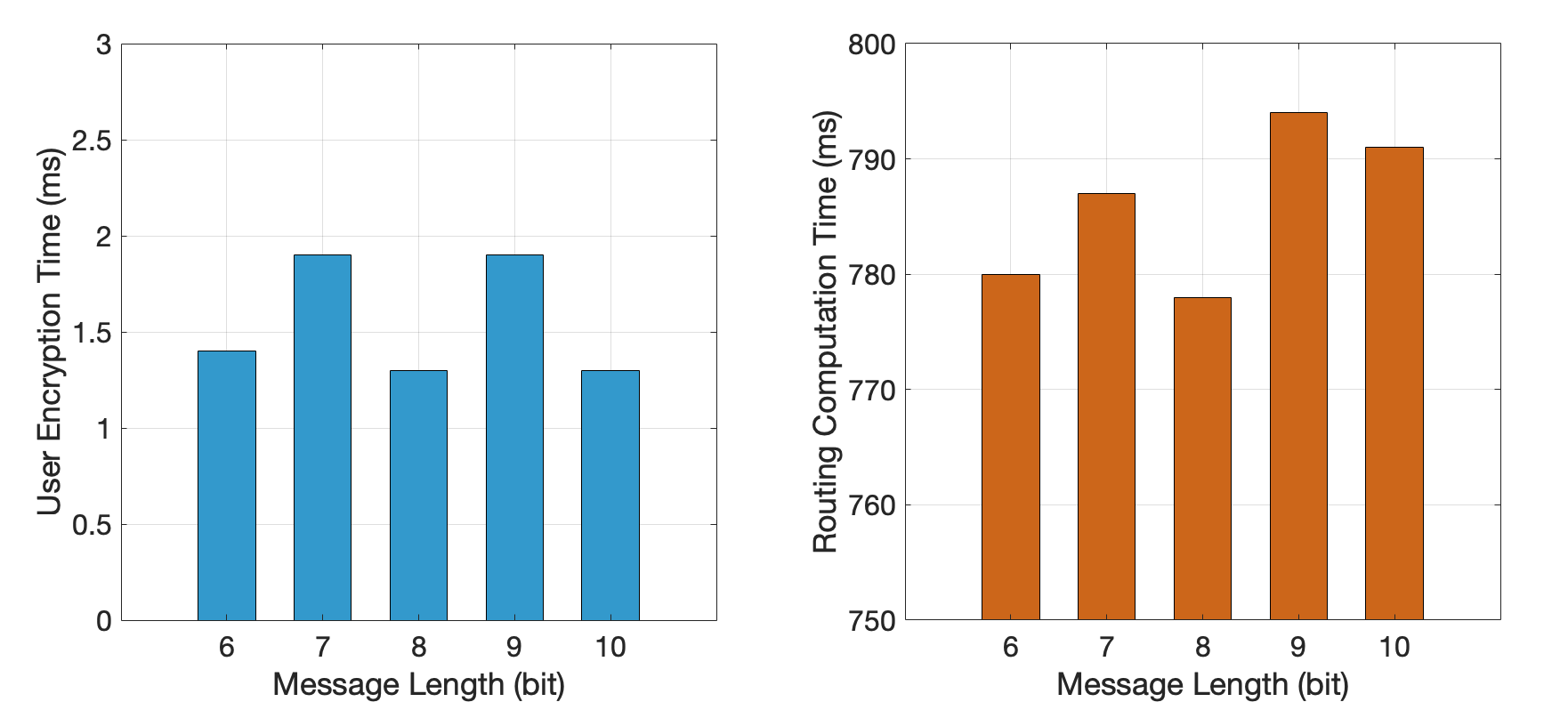}
    \captionsetup{justification=centering}
    \caption{Encryption and routing computation time for $10$ users with varying message lengths.}
    \label{fig_test1}
\end{figure}

As observed from TABLE \ref{tab1} and Fig.~\ref{fig_test1},
the user encryption time exhibits minor fluctuations, ranging from $1.3$ to $1.9$ ms, indicating that it is largely independent of message length within this range.
Similarly, the routing computation time exhibits only minor fluctuations between $778$ and $791$ ms, suggesting stability across the tested message lengths.
These results demonstrate that the computational cost associated with encryption and routing does not scale significantly with small increases in message length.
Furthermore, a larger routing memory, capable of storing more extensive exponent tables for $g_T^{\theta_i}$,
could facilitate the transmission of longer messages per communication, potentially improving efficiency.

\textbf{Test 2:} In the second set of tests, the message length is fixed at $8$ bits per user, while the number of users varies from $5$ to $25$. The results are presented in TABLE \ref{tab2} and Fig.~\ref{fig_test2}.

\begin{table}[h]
\begin{center}
\caption{Encryption and routing computation time for varying numbers of users with a fixed $8$-bit message length.}
\label{tab2}
\begin{tabular}{|c|c|c|}
\hline
\textbf{User Number} & \textbf{User Encryption} & \textbf{Routing Computation} \\
 & \textbf{Time (ms)} & \textbf{Time (ms)} \\
\hline
5& $1.5$ & $204$\\
\hline
10& $1.3$ & $778$\\
\hline
15& $1.6$ & $1733$\\
\hline
20& $1.6$ & $3080$\\
\hline
25& $1.6$ & $4833$\\
\hline
\end{tabular}
\end{center}
\footnotesize
Note: The above results are based on single-threaded benchmark testing. According to the protocol design, multi-threaded implementations are supported. In practical deployments, there is significant potential for improving computational efficiency through methods such as parallel computing, GPU acceleration, and memory expansion.
\end{table}

\begin{figure}[t]
    \centering
    \includegraphics[width=0.8 \linewidth]{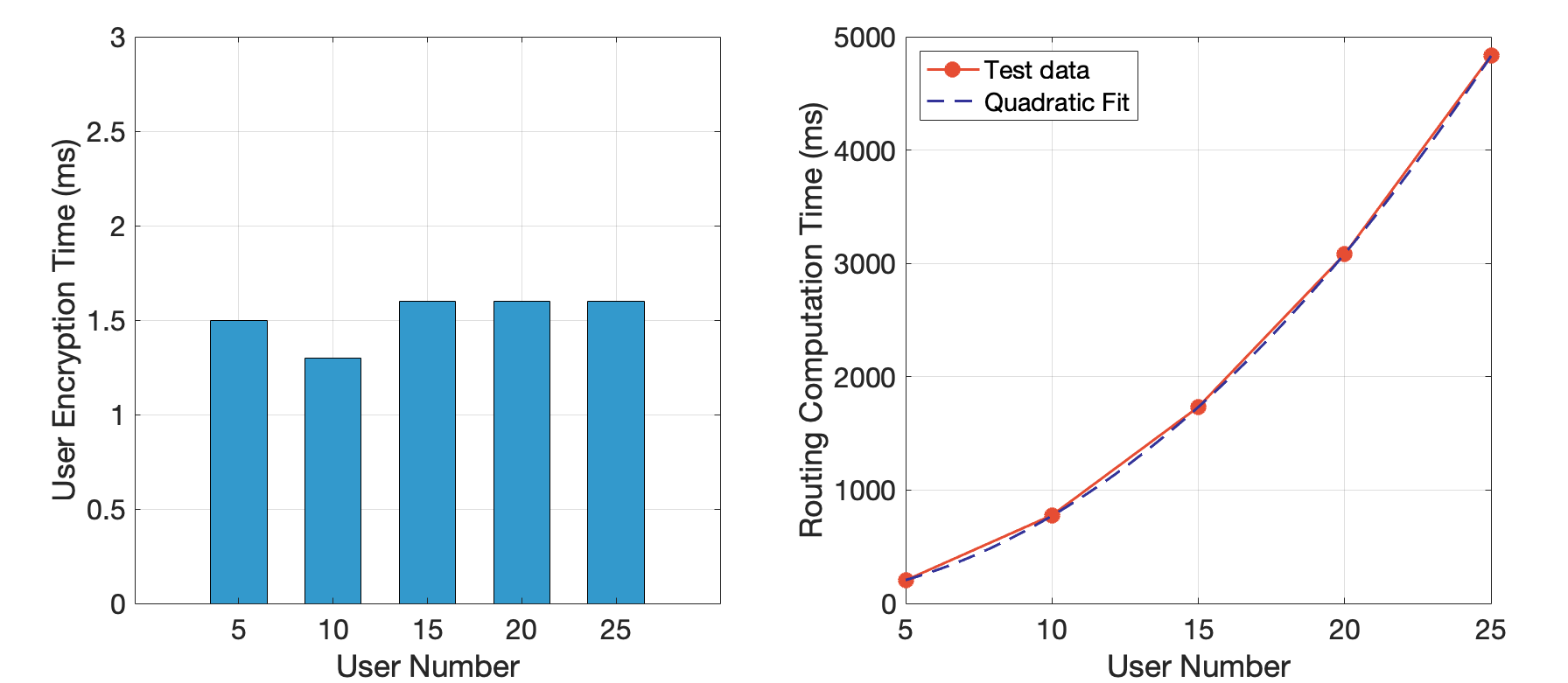}
    \captionsetup{justification=centering}
    \caption{Encryption and routing computation time for varying numbers of users with a fixed $8$-bit message length.}
    \label{fig_test2}
\end{figure}

From TABLE~\ref{tab2} and Fig.~\ref{fig_test2}, we can similarly conclude that the encryption time per user is observed to remain stable, ranging between $1.3$ and $1.6$ ms, further confirming its independence from the number of users.
In contrast, the routing computation time exhibits polynomial growth with an increasing number of users, rising from $204$ ms for $5$ users to $4833$ ms for $25$ users.
This suggests that while individual encryption operations are efficient and scalable, the routing process becomes a performance bottleneck as the system scales to support more users. This issue similarly exists in the NIAR scheme.
Nevertheless, the decentralized design of our protocol allows this limitation to be mitigated through integration with an efficient grouping strategy, which will be the focus of our future work.

Compared to the NIAR scheme in~\cite{ShiWu:C21}, given that each sender transmits a message of $b$ bits and each message spans $l$ bytes, our protocol requires $4nb/l$ group multiplications during the encryption process, whereas the NIAR scheme requires $17nb$ group multiplications.
Regarding the routing computation time, the primary computational cost in the decryption phase arises from bilinear pairing operations. We therefore compare the number of bilinear pairings required by each scheme. The NIAR scheme incurs $64n^2b$ pairing operations, while our protocol requires only $n^2b/l$ pairings, representing a significant reduction in computational cost, especially as $l$ increases.

\subsubsection{Transmission Efficiency}
According to equation (\ref{equ_ct}), each encryption operation in our protocol generates a ciphertext represented as a $1\times 8$ vector in $\mathbb{G}_1$.
Since each element in $\mathbb{G}_1$ requires $49$ bytes of storage,
the total ciphertext size per user amounts to $392$ bytes.
For a plaintext message of length $l$ bytes per user, the resulting plaintext-to-ciphertext expansion factor is calculated as $392/l$.

In contrast, the NIAR construction in~\cite{ShiWu:C21} encrypts each bit into a single $\texttt{MCFE}^{ffh}$ ciphertext, as outlined in the $\texttt{MCFE}^{ffh}$ framework.
An $\texttt{MCFE}^{ffh}$ ciphertext consists of an $\texttt{MCFE}$ ciphertext combined with the output of $\texttt{FE.KGEN}$.
According to the $\texttt{MCFE}$ construction, an $\texttt{MCFE}$ ciphertext is composed of $\left[\!\left[c_1\right]\!\right]$, $\left[\!\left[c_2\right]\!\right]$ and $\left[\!\left[\tilde{c}\right]\!\right]$, with respective lengths of $2b$, $2$, and $2$ group elements, resulting in a total length of $2b + 4$.
Here, $b=1$ when encrypting bit-by-bit.
Additionally, the output of $\texttt{FE.KGEN}$ has a length of $2$ group elements, resulting in a total $\texttt{MCFE}^{ffh}$ ciphertext length of $2b + 6$. For a single plaintext bit ($b=1$), the resulting ciphertext thus contains $8$ group elements in $\mathbb{G}_1$.
Using the provided test curve parameters, where each $\mathbb{G}_1$ element occupies $392$ bytes, the ciphertext size is calculated to be $3136$ bytes, yielding an expansion ratio of $3136$ for a $1$-bit plaintext.
This is markedly higher than the expansion factor of our proposed protocol.

In summary, our protocol demonstrates significantly greater communication efficiency than the NIAR scheme, both in terms of ciphertext expansion and computational cost, offering substantial improvements for practical applications.

\section{Conclusion}\label{sec_conclu}
Private and covert communication is crucial in meeting the ever-evolving security demands of the interconnected digital era.
Anonymous routing technology emerges as a pivotal tool, aiming to obfuscate communication paths and participant identities, and resist sophisticated traffic analysis attacks.
This paper proposes a decentralized anonymous communication network protocol that fundamentally eliminates dependence on threshold trust models and trusted setups, which are common vulnerabilities in existing systems.
The protocol's resistance to analysis and traceability is validated through formal security proofs, which remains inherently robust in the face of evolving adversarial strategies.
Furthermore, simulation results demonstrate that the proposed protocol is also efficient in practical deployment.

Several issues remain to be addressed in future work.
First, the proposed protocol addresses the identity protection of trusted senders. In future work, we will further explore integrating Private Information Retrieval methods to achieve anonymous protection for receivers, thereby realizing bidirectional identity untraceability.
Second, the protocol's applicable scale is currently limited. However, its decentralized nature allows for integration with an efficient grouping strategy, thereby further optimizing the computational efficiency of routing computation and enhancing the protocol’s dynamic adaptability.

% thereby providing a complementary layer of privacy and strengthening the overall security framework of wireless communications.

% \begin{algorithm}[H]
% \caption{Weighted Tanimoto ELM.}\label{alg:alg1}
% \begin{algorithmic}
% \STATE
% \STATE {\textsc{TRAIN}}$(\boldsymbol{X} \boldsymbol{T})$
% \STATE \hspace{0.5cm}$ \textbf{select randomly } W \subset \boldsymbol{X}  $
% \STATE \hspace{0.5cm}$ N_\boldsymbol{t} \gets | \{ i : \boldsymbol{t}_i = \boldsymbol{t} \} | $ \textbf{ for } $ \boldsymbol{t}= -1,+1 $
% \STATE \hspace{0.5cm}$ B_i \gets \sqrt{ \textsc{max}(N_{-1},N_{+1}) / N_{\boldsymbol{t}_i} } $ \textbf{ for } $ i = 1,...,N $
% \STATE \hspace{0.5cm}$ \hat{\boldsymbol{H}} \gets  B \cdot (\boldsymbol{X}^T\textbf{W})/( \mathbb{1}\boldsymbol{X} + \mathbb{1}\textbf{W} - \boldsymbol{X}^T\textbf{W} ) $
% \STATE \hspace{0.5cm}$ \beta \gets \left ( I/C + \hat{\boldsymbol{H}}^T\hat{\boldsymbol{H}} \right )^{-1}(\hat{\boldsymbol{H}}^T B\cdot \boldsymbol{T})  $
% \STATE \hspace{0.5cm}\textbf{return}  $\textbf{W},  \beta $
% \STATE
% \STATE {\textsc{PREDICT}}$(\boldsymbol{X} )$
% \STATE \hspace{0.5cm}$ \boldsymbol{H} \gets  (\boldsymbol{X}^T\textbf{W} )/( \mathbb{1}\boldsymbol{X}  + \mathbb{1}\textbf{W}- \boldsymbol{X}^T\textbf{W}  ) $
% \STATE \hspace{0.5cm}\textbf{return}  $\textsc{sign}( \boldsymbol{H} \beta )$
% \end{algorithmic}
% \label{alg1}
% \end{algorithm}

{\appendices
\section{Proof of Proposition \ref{prop_1}}
Proposition \ref{prop_1} can be considered as a vector version of the XDLin assumption.
We begin by proving the case for $m=2$ via the contradiction method.

For $m=2$, we have
% \begin{itemize}
%     \item $P_0:= \left(\left[\!\left[\begin{pmatrix}
%        a_1\\a_2\end{pmatrix}\right]\!\right]_{1,2},
%        \left[\!\left[\begin{pmatrix}
%        b_1\\b_2\end{pmatrix}\right]\!\right]_{1,2},
%      \left[\!\left[\begin{pmatrix}
%        a k_1\\a k_2\end{pmatrix}\right]\!\right]_{1,2},
%       \left[\!\left[\begin{pmatrix}
%        b d_1\\b d_2\end{pmatrix}\right]\!\right]_{1,2},
%       \left[\!\left[\begin{pmatrix}
%        k_1+d_1\\k_2+d_2 \end{pmatrix}\right]\!\right]_{1,2} \right)$, where $a,b,c,d \overset{{\scriptscriptstyle\$}}{\leftarrow} \mathbb{Z}_q$;
%      \item $P_1:= ([\![a]\!]_{1,2},[\![b]\!]_{1,2},[\![ac]\!]_{1,2},[\![bd]\!]_{1,2},[\![e]\!]_{x})$, where $a,b,c,d,e \overset{{\scriptscriptstyle\$}}{\leftarrow} \mathbb{Z}_q$,
% \end{itemize}
\begin{equation*}
P_0=
\bigg(\left[\!\left[a\right]\!\right]_{1,2},
       \left[\!\left[b\right]\!\right]_{1,2},
     [\![\begin{pmatrix}
       a k_1\\a k_2\end{pmatrix}]\!]_{1,2},
      [\![\begin{pmatrix}
       b d_1\\b d_2\end{pmatrix}]\!]_{1,2},
      [\![\begin{pmatrix}
       k_1+d_1\\k_2+d_2 \end{pmatrix}]\!]_{x} \bigg), a,b,k_1,k_2,d_1,d_2 \overset{{\scriptscriptstyle\$}}{\leftarrow} \mathbb{Z}_q;
\end{equation*}
\begin{equation*}
P_1=
\bigg([\![a]\!]_{1,2},
    [\![b]\!]_{1,2},
    [\![\begin{pmatrix}
       a k_1\\a k_2\end{pmatrix}]\!]_{1,2},
    [\![\begin{pmatrix}
       b d_1\\b d_2\end{pmatrix}]\!]_{1,2},
    [\![\begin{pmatrix}
       e_1\\e_2 \end{pmatrix}]\!]_{x} \bigg),
       a,b,k_1,k_2,d_1,d_2,e_1,e_2 \overset{{\scriptscriptstyle\$}}{\leftarrow} \mathbb{Z}_q.
\end{equation*}
Let
\begin{equation*}
P'_1:= 
\bigg(\left[\!\left[a\right]\!\right]_{1,2},
       \left[\!\left[b\right]\!\right]_{1,2},
     [\![\begin{pmatrix}
       a k_1\\a k_2\end{pmatrix}]\!]_{1,2},
      [\![\begin{pmatrix}
       b d_1\\b d_2\end{pmatrix}]\!]_{1,2},
      [\![\begin{pmatrix}
       k_1+d_1\\e_1 \end{pmatrix}]\!]_{x} \bigg),
       a,b,k_1,k_2,d_1,e_1 \overset{{\scriptscriptstyle\$}}{\leftarrow} \mathbb{Z}_q.
\end{equation*}
With the XDLin assumption, the indistinguishability of $P_0$ and $P_1$ is equivalent to the indistinguishability of $P_1$ and $P'_1$.
Suppose there exists a p.p.t. $\mathcal{A}$ such that the advantage $\left| \mathrm{Pr}\left[ 1 \leftarrow \mathcal{A}(P_1) \right] - \mathrm{Pr}\left[ 1 \leftarrow \mathcal{A}(P'_1) \right] \right|$ is non-negligible. This would imply that $\mathcal{A}$ could distinguish the two distributions defined in Definition 1, violating the XDLin assumption. Therefore, we can conclude that $P_1$ and $P'_1$ are computationally indistinguishable, establishing Proposition \ref{prop_1} for the case $m=2$.
The argument generalizes to arbitrary $m$ by inductively applying the same reasoning to each additional dimension.
Thus, we can conclude that Proposition \ref{prop_1} holds.

\section{Proof of Proposition \ref{prop_2}}
Given that
$
\left([\![a]\!]_{1,2},[\![b]\!]_{1,2},[\![a\boldsymbol{l}]\!]_{1,2},[\![b\boldsymbol{e}]\!]_{1,2},\boldsymbol{y}\right),
$
where $a,b\overset{{\scriptscriptstyle\$}}{\leftarrow} \mathbb{Z}_q$, $\boldsymbol{l},\boldsymbol{e}\overset{{\scriptscriptstyle\$}}{\leftarrow} \mathbb{Z}_q^{(m-1)\times 1}$, and $\boldsymbol{y}= [\![\boldsymbol{l}+\boldsymbol{e}]\!]_{x}$ or  $ [\![\boldsymbol{l}'+\boldsymbol{e}]\!]_{x}$, $\boldsymbol{l}' \overset{{\scriptscriptstyle\$}}{\leftarrow} \mathbb{Z}_q^{(m-1)\times 1}$.
Denote the $i$-th element of $\boldsymbol{l}$, $\boldsymbol{e}$, $\boldsymbol{y}$ by $l_i$, $e_i$, $y_i$, respectively. Let
\begin{equation*}
Q:=
\Big([\![a]\!]_{1,2},[\![b]\!]_{1,2},
[\![a\boldsymbol{l} \| a(t-\sum_{i=1}^{m-1}l_i) ]\!]_{1,2},
[\![b\boldsymbol{e}  \| b(r- \sum_{i=1}^{m-1}e_i)]\!]_{1,2},
\boldsymbol{y} \| (t+r-\sum_{i=1}^{m-1}y_i)
 \Big),
r\overset{{\scriptscriptstyle\$}}{\leftarrow} \mathbb{Z}_q.
\end{equation*}
We note that when $\boldsymbol{y}= [\![\boldsymbol{l}+\boldsymbol{e}]\!]_{x}$, $Q$ is equivalent to $P_0$ and when $\boldsymbol{y}= [\![\boldsymbol{l}'+\boldsymbol{e}]\!]_{x}$, $Q$ is equivalent to $P_1$.
Meanwhile, condition $\sum_{i=1}^{m}k_i = t = \sum_{i=1}^{m}k'_i$ in Proposition \ref{prop_2} is satisfied.
Therefore, by Proposition \ref{prop_1}, Proposition \ref{prop_2} can be directly derived.

\section{Proof of Proposition \ref{prop_3}}
The indistinguishability of the target games in Proposition \ref{prop_3} is formally established via a security reduction to the game variant $(\bigtriangledown)$, defined as:
     \begin{itemize}
    \item $\mathcal{A}$ receives
        \begin{itemize}
        \item $P_0^{(2)} := ([\![a]\!]_{1,2},[\![b]\!]_{1,2},[\![a\boldsymbol{k}]\!]_{1,2},[\![b\boldsymbol{d}]\!]_{1,2},[\![\boldsymbol{k}+\boldsymbol{d}]\!]_{x})$, where $a,b\overset{{\scriptscriptstyle\$}}{\leftarrow} \mathbb{Z}_q$ and $\boldsymbol{k},\boldsymbol{d}\overset{{\scriptscriptstyle\$}}{\leftarrow} \mathbb{Z}_q^{m\times 1}$,
        \item $P_1^{(2)} := ([\![a]\!]_{1,2},[\![b]\!]_{1,2},[\![a\boldsymbol{k}]\!]_{1,2},[\![b\boldsymbol{d}]\!]_{1,2},[\![\boldsymbol{k}'+\boldsymbol{d}]\!]_{x})$, where $a,b\overset{{\scriptscriptstyle\$}}{\leftarrow} \mathbb{Z}_q$ and $\boldsymbol{k},\boldsymbol{k}',\boldsymbol{d}\overset{{\scriptscriptstyle\$}}{\leftarrow} \mathbb{Z}_q^{m\times 1}$,
        \end{itemize}
        satisfying  $\sum_{i=1}^{m}k_i = \sum_{i=1}^{m}k'_i=t$, where $k_i, k'_i$ are the $i$-th elements of $\boldsymbol{k}$ and $\boldsymbol{k}'$, respectively.
    \item $\mathcal{A}$ uniformly samples elements $\{z_1,\cdots,z_m\}$ from $\mathbb{Z}_q$, and sends them to $\mathcal{C}$. Subsequently, $\mathcal{C}$ returns $\{[\![a z_1]\!]_{x},\cdots,[\![a z_m]\!]_{x}\}$ to $\mathcal{A}$.
    \end{itemize}

    We note that $\mathcal{A}$ gains no computational advantage in extracting the secret exponent $a$, since any efficient extraction of $a$ would imply breaking the discrete logarithm problem. Combining with Proposition~\ref{prop_2}, we can conclude that for any p.p.t. $\mathcal{A}$, the two distributions $P_0^{(2)}$ and $P_1^{(2)}$ in game variant $(\bigtriangledown)$ are computationally indistinguishable.
    Meanwhile, game $(\triangle)$ exhibits strictly greater computational hardness than $(\bigtriangledown)$, as the adversary has access to more information in $(\bigtriangledown)$.
    Therefore, $P_0^{(2)}$ and $P_1^{(2)}$ remain computationally indistinguishable in game $(\triangle)$, which confirms that Proposition~\ref{prop_3} holds.

\section{Proof of Proposition \ref{prop_4}}
First, we observe that the adversarial interaction where $\mathcal{A}$ uniformly samples elements $\{r_1,\cdots,r_m\}$ from the group $\mathbb{G}_x$ and submits them to $\mathcal{C}$, who then returns $\{r_1^a,\cdots,r_m^a\}$, is computationally indistinguishable from the scenario where $\mathcal{C}$ directly transmits the pairs $\left\{ (r_1,r_1^{a}),\cdots,(r_m,r_m^{a})\right\}$ to $\mathcal{A}$.
The equivalence comes from the fact that the information ultimately obtained by $\mathcal{A}$ in both interactions is identically distributed, provided that the selection of $\{r_1,\cdots,r_m\}$ is random and cannot be controlled by $\mathcal{A}$.
Additionally, based on the discrete logarithm problem, transmitting the pairs $\left\{ (r_1,r_1^a),\cdots,(r_m,r_m^a)\right\}$ to $\mathcal{A}$ is equivalent to transmitting the pairs $\big\{ (r_1,r_1^{\frac{1}{a}}),\cdots,(r_m,r_m^{\frac{1}{a}})\big\}$.
The equivalence lies in that $\mathcal{A}$ cannot efficiently deduce $a$ or $a^{-1}$ from the transmitted elements in both cases.
Therefore, Proposition \ref{prop_4} holds.

\section{Proof of Lemma \ref{lem_hyb1}}
To prove  Lemma \ref{lem_hyb1}, it suffices to show that for any $\ell < Q$,
\begin{equation}\label{equ_tilde}
    \left| \mathrm{Pr}\left[ 1 \leftarrow \mathcal{A}\left(\textbf{Hyb}^{(1,\ell)}\right) \right] - \mathrm{Pr}\left[ 1 \leftarrow \mathcal{A}\left(\textbf{Hyb}^{(1,\ell+1)}\right) \right] \right| 
     \leq 2\left[ \frac{m}{q}+(m-1) \mathrm{Adv}^{\mathrm{XDLin}}(\lambda)  \right].
\end{equation}
To this end, we consider the following hybrid experiments to establish a switch between $\textbf{Hyb}^{(1,\ell)}$ and $\textbf{Hyb}^{(1,\ell+1)}$.

\subsubsection{Hybrid Experiment $\widetilde{\textbf{Hyb}}^{(1,\ell)}$}\label{sssec_hyb_1.1}
% First, we define the hybrid experiment $\widetilde{\textbf{Hyb}}^{(1,\ell)}$.
$\widetilde{\textbf{Hyb}}^{(1,\ell)}$ is defined as a modification of $\textbf{Hyb}^{(1,\ell)}$, and the only difference lies in the calculation of $\{\boldsymbol{ct}_{1,\ell+1},\cdots,\boldsymbol{ct}_{m,\ell+1}\}$.
In $\widetilde{\textbf{Hyb}}^{(1,\ell)}$, the third-to-last element $0$ is replaced by a randomly selected element $\zeta_{i} \overset{{\scriptscriptstyle\$}}{\leftarrow} \mathbb{Z}_q^{\times}$. Specifically, $\mathcal{C}$ computes $\boldsymbol{ct}_{i,\ell+1}$ by
$$
\boldsymbol{ct}_{i,\ell+1} :=[\![C_i \left(K_{i,\ell+1},\alpha_{i,\ell+1},\alpha'_{i,\ell+1},0,0,\zeta_i,x_{i,\ell+1},0
    \right)^{\top} ]\!]_{1}.
$$
We can prove the computational indistinguishability of $\widetilde{\textbf{Hyb}}^{(1,\ell)}$ and $\textbf{Hyb}^{(1,\ell)}$ through the reduction to the XDLin assumption.
Here we demonstrate how to construct a non-uniform p.p.t. adversary $\mathcal{B}$ against the XDLin problem by leveraging the adversary $\mathcal{A}$.
\begin{itemize}
    \item After $\mathcal{B}$ receives
    \begin{equation*}
        [\![ \begin{pmatrix}
        a_1\\\vdots\\a_m
        \end{pmatrix} ]\!]_{1,2},
        [\![ \begin{pmatrix}
        b_1\\\vdots\\b_m
        \end{pmatrix} ]\!]_{1,2},
        [\![ \begin{pmatrix}
        a_1\alpha_{1,\ell+1} \\\vdots\\a_m\alpha_{m,\ell+1}
        \end{pmatrix} ]\!]_{1,2}, 
        [\![ \begin{pmatrix}
        b_1\alpha'_{1,\ell+1} \\\vdots\\b_m\alpha'_{m,\ell+1}
        \end{pmatrix} ]\!]_{1,2},
        [\![ \begin{pmatrix}
        Y_1 \\\vdots\\Y_m
        \end{pmatrix} ]\!]_{1},
    \end{equation*}
    it needs to distinguish whether
    $Y_i = \alpha_{i,\ell+1} + \alpha'_{i,\ell+1}$
    or
    $Y_i = \alpha_{i,\ell+1} + \alpha'_{i,\ell+1}+\zeta_i$.
    \item $\mathcal{B}$ calls $\mathtt{Setup}(1^{\lambda},n)$ to obtain $\boldsymbol{tk}$.
    \item For each $i\in [n]$, $\mathcal{C}$ samples an invertible matrix $W_i\overset{{\scriptscriptstyle\$}}{\leftarrow} (\mathbb{Z}_q)_{8\times 8}^{\times}$. Let
    $$
     C_i =
     W_i \begin{pmatrix}
    1 & & & & & & &  \\
      &a_i & & & & & & \\
      & & b_i & & & & & \\
      & & & a_i & & & & \\
      & & & & a_i & & & \\
    0 & 1 & 1 & 0 & 0 & 1 & 0 & 0 \\
      & & & & & & a_i& \\
      & & & & & & & a_i
    \end{pmatrix},
    $$
    and
    \begin{multline*}
    R_i  =  \mathrm{diag}\left(1,\frac{1}{a_ib_i},\cdots,\frac{1}{a_ib_i} \right)
    \begin{pmatrix}
    1 & & & & & & &  \\
      &b_i & & & & & & \\
      & & a_i & & & & & \\
      & & & b_i & & & & \\
      & & & & b_i & & & \\
    0 & -b_i & -a_i & 0 & 0 & a_ib_i & 0 & 0 \\
      & & & & & & b_i& \\
      & & & & & & & b_i
    \end{pmatrix} W_i^{-1}.
    \end{multline*}
    We have $R_iC_i=\mathbf{I}_{8}$.

    We remark that the $j$-th token corresponding to sender $i$ is computed by
    \begin{align*}
        \boldsymbol{tk}_{i}^j &:= [\![\left(\rho_j,0,0,a_ib_i\beta_{i,j},a_ib_i\gamma_{i,j},0,  a_ib_i\delta_{j,\boldsymbol{\pi}(i)},0\right)R_i]\!]_{2}\\
        &=[\![\left(\rho_j,0,0,\beta_{i,j},\gamma_{i,j},0,  \delta_{j,\boldsymbol{\pi}(i)},0\right)\begin{pmatrix}
    1 & & & & & & &  \\
      &b_i & & & & & & \\
      & & a_i & & & & & \\
      & & & b_i & & & & \\
      & & & & b_i & & & \\
    0 & -b_i & -a_i & 0 & 0 & a_ib_i & 0 & 0 \\
      & & & & & & b_i& \\
      & & & & & & & b_i
    \end{pmatrix} W_i^{-1}]\!]_{2}
    \end{align*}
    Since $\beta_{ij},\gamma_{ij} \overset{{\scriptscriptstyle\$}}{\leftarrow} \mathbb{Z}_q$, we have $a_ib_i\beta_{ij}$ and $ a_ib_i\gamma_{ij}$ are still random sampled elements.
    Here we note that $\theta_i = a_ib_i$.

    % With $\boldsymbol{\pi}$, $ [\![ C_i ]\!]_{1}$, and $[\![ R_i ]\!]_{2}$, $\boldsymbol{ek}_i$ and $\boldsymbol{tk}$ can be obtained.
    % Next, $\mathcal{B}$ sends $\left( \left\{\boldsymbol{ek}_j \right\}_{j>m},\boldsymbol{tk} \right)$ to $\mathcal{A}$.
    \item $\mathcal{A}$ makes queries $\{x_{i,t}\}_{i\leq m}$.
    When $t=\ell+1$, $\boldsymbol{ct}_{i,\ell+1}$ is computed by
    \begin{equation*}
        \boldsymbol{ct}_{i,\ell+1} := [\![W_i \big(
    K_{i,\ell+1}, a_i\alpha_{i,\ell+1},
    b_i\alpha'_{i,\ell+1}, 
    0,0,
    Y_i, a_i x_{i,\ell+1},0
    \big)^{\top} ]\!]_{1}.
    \end{equation*}

    Otherwise, $\mathcal{B}$ computes the ciphertexts in the same way as in $\textbf{Hyb}^{(1,\ell)}$, and returns them to $\mathcal{A}$.
    \item $\mathcal{B}$ outputs the same guess that $\mathcal{A}$ outputs.
\end{itemize}
Observe that if $Y_i = \alpha_{i,\ell+1} + \alpha'_{i,\ell+1}$, then $\mathcal{A}$'s view is identically distributed as in $\textbf{Hyb}^{(1,\ell)}$; otherwise, if $Y_i = \alpha_{i,\ell+1} + \alpha'_{i,\ell+1}+\zeta_i$, $\mathcal{A}$'s view is identically distributed as in $\widetilde{\textbf{Hyb}}^{(1,\ell)}$.
By Proposition \ref{prop_2}, we can obtain
\begin{equation*}
    \left| \mathrm{Pr}\left[ 1 \leftarrow \mathcal{A}\left(\textbf{Hyb}^{(1,\ell)}\right) \right] - \mathrm{Pr}\left[ 1 \leftarrow \mathcal{A}\left(\widetilde{\textbf{Hyb}}^{(1,\ell)}\right) \right] \right| 
     \leq  \frac{m}{q}+(m-1) \mathrm{Adv}^{\mathrm{XDLin}}(\lambda) .
\end{equation*}
Therefore, the experiments $\textbf{Hyb}^{(1,\ell)}$ and $\widetilde{\textbf{Hyb}}^{(1,\ell)}$ are computationally indistinguishable.
Next, we aim to demonstrate that $\widetilde{\textbf{Hyb}}^{(1,\ell)}$ is equivalent to the following experiment $(*)$.

\subsubsection{Hybrid Experiment $(*)$}\label{sssec_hyb_1.2}
The difference between games $\widetilde{\textbf{Hyb}}^{(1,\ell)}$ and $(*)$ lies in the calculation of $\boldsymbol{ct}_{i,\ell+1}$.
In $(*)$,
\begin{equation*}
    \boldsymbol{ct}_{i,\ell+1} :=
    [\![C_i \big(K_{i,\ell+1},\alpha_{i,\ell+1},\alpha'_{i,\ell+1},0,0,
    \zeta_i,x_{i,\ell+1},x_{\boldsymbol{\pi}^{-1}(i),\ell+1}
    \big)^{\top} ]\!]_{1}.
\end{equation*}
By computing the product of
$$
C_i
\begin{pmatrix}
 \boldsymbol{I}_{5}  &\\
 & \begin{pmatrix}
1 & 0 & 0 \\
0 & 1 & 0 \\
\frac{x_{\boldsymbol{\pi}^{-1}(i),\ell + 1}}{\zeta_i} & 0 & 1
\end{pmatrix}
\end{pmatrix}
$$
and
$\big(K_{i,\ell+1},\alpha_{i,\ell+1},\alpha'_{i,\ell+1},0,0,\zeta_i,x_{i,\ell+1},0\big)^{\top}$,
we observe that the ciphertext in the exponent $g_1$ remains
$$
C_i\left(
K_{i,\ell+1},\alpha_{i,\ell+1},\alpha'_{i,\ell+1},0,0,\zeta_i,x_{i,\ell+1},x_{\boldsymbol{\pi}^{-1}(i),\ell+1} \right)^{\top}.
$$
% \begin{multline*}
% C_i
% \begin{pmatrix}
%  \boldsymbol{I}_{5}  &\\
%  & \begin{pmatrix}
% 1 & 0 & 0 \\
% 0 & 1 & 0 \\
% \frac{x_{\boldsymbol{\pi}^{-1}(i),\ell + 1}}{\zeta_i} & 0 & 1
% \end{pmatrix}
% \end{pmatrix} \begin{pmatrix}
% K_{i,\ell+1}\\\alpha_{i,\ell+1}\\\alpha'_{i,\ell+1}\\0\\0\\\zeta_i\\x_{i,\ell+1}\\0
%     \end{pmatrix} \\
%     = C_i\begin{pmatrix}
% K_{i,\ell+1}\\\alpha_{i,\ell+1}\\\alpha'_{i,\ell+1}\\0\\0\\\zeta_i\\x_{i,\ell+1}\\x_{\boldsymbol{\pi}^{-1}(i),\ell+1}
%     \end{pmatrix},
% \end{multline*}
Moreover, the product 
$$
\begin{pmatrix}
 \boldsymbol{I}_{5}  &\\
 & \begin{pmatrix}
1 & 0 & 0 \\
0 & 1 & 0 \\
-\frac{x_{\boldsymbol{\pi}^{-1}(i),\ell + 1}}{\zeta_i} & 0 & 1
\end{pmatrix}
\end{pmatrix}R_iC_i
\begin{pmatrix}
 \boldsymbol{I}_{5}  &\\
 & \begin{pmatrix}
1 & 0 & 0 \\
0 & 1 & 0 \\
\frac{x_{\boldsymbol{\pi}^{-1}(i),\ell + 1}}{\zeta_i} & 0 & 1
\end{pmatrix}
\end{pmatrix}=\mathbf{I}_{8}.
$$
Hence, we conclude that $\widetilde{\textbf{Hyb}}^{(1,\ell)}$ and $(*)$ are equivalent. Formally,
$$
 \mathrm{Pr}\left[ 1 \leftarrow \mathcal{A}\left(\widetilde{\textbf{Hyb}}^{(1,\ell)}\right) \right]  = \mathrm{Pr}\left[ 1 \leftarrow \mathcal{A}\left(*\right) \right].
$$

Symmetrically to the first step, we can eliminate $\zeta_i$ and obtain that
\begin{equation*}
    \left|\mathrm{Pr}\left[ 1 \leftarrow \mathcal{A}\left(*\right) \right] - \mathrm{Pr}\left[ 1 \leftarrow \mathcal{A}\left(\textbf{Hyb}^{(1,\ell+1)}\right) \right] \right|
     \leq  \frac{m}{q}+(m-1) \mathrm{Adv}^{\mathrm{XDLin}}(\lambda).
\end{equation*}
Therefore, equation (\ref{equ_tilde}) in Lemma \ref{lem_hyb1} holds, demonstrating that $\textbf{Hyb}^{(1,Q)}$ and $\textbf{Hyb}^{(1,0)}$ are computationally indistinguishable.

\section{Proof of Lemma \ref{lem_hyb2}}
To prove Lemma \ref{lem_hyb2}, it suffices to show that
\begin{equation}\label{equ_hyb2l}
    \left| \mathrm{Pr}\left[ 1 \leftarrow \mathcal{A}\left(\textbf{Hyb}^{(2,\ell)}\right) \right] - \mathrm{Pr}\left[ 1 \leftarrow \mathcal{A}\left(\textbf{Hyb}^{(2,\ell+1)}\right) \right] \right| 
     \leq 2\left[ \frac{m+1}{q}+(Qm+m-Q) \mathrm{Adv}^{\mathrm{XDLin}}(\lambda)  \right].
\end{equation}
To this end, we consider the following hybrid experiments $\textbf{Hyb}^{(2,\ell,1)}$, $\textbf{Hyb}^{(2,\ell,2)}$, and $\textbf{Hyb}^{(2,\ell,3)}$ to establish a switch between $\textbf{Hyb}^{(2,\ell)}$ and $\textbf{Hyb}^{(2,\ell+1)}$.

\subsubsection{Hybrid Experiment $\textbf{Hyb}^{(2,\ell,1)}$}\label{sssec_hyb_2.1}
The difference between $\textbf{Hyb}^{(2,\ell,1)}$ and $\textbf{Hyb}^{(2,\ell)}$ lies in the $(\ell+1)$-th row of token $\boldsymbol{tk}_{i}$.
In $\textbf{Hyb}^{(2,\ell,1)}$,
$$
\boldsymbol{tk}_{i}^{\ell+1}:= [\![\left(\rho_{\ell+1},0,0,\beta_{i,\ell+1},\gamma_{i,\ell+1},\xi_i,\delta_{\ell+1,\boldsymbol{\pi}(i)},0\right)R_i]\!]_{2},
$$
where $\xi_i \overset{{\scriptscriptstyle\$}}{\leftarrow} \mathbb{Z}_q^{\times}$.
Similar to the proof of Lemma \ref{lem_hyb1},
leveraging the elements $\beta_{i,\ell+1}$ and $\gamma_{i,\ell+1}$, we can construct a non-uniform p.p.t. adversary $\mathcal{B}$ against the XDLin problem by leveraging the adversary $\mathcal{A}$.
\begin{itemize}
    \item After $\mathcal{B}$ receives
    \begin{equation*}
        [\![ \begin{pmatrix}
        a_1\\\vdots\\a_m
        \end{pmatrix} ]\!]_{1,2},
        [\![ \begin{pmatrix}
        b_1\\\vdots\\b_m
        \end{pmatrix} ]\!]_{1,2},
        [\![ \begin{pmatrix}
        a_1\beta_{1,\ell+1} \\\vdots\\a_m\beta_{m,\ell+1}
        \end{pmatrix} ]\!]_{1,2}, 
        [\![ \begin{pmatrix}
        b_1\gamma_{1,\ell+1} \\\vdots\\b_m\gamma_{m,\ell+1}
        \end{pmatrix} ]\!]_{1,2},
        [\![ \begin{pmatrix}
        Y_1 \\\vdots\\Y_m
        \end{pmatrix} ]\!]_{2},
    \end{equation*}
    it needs to distinguish whether
    $Y_i = \beta_{i,\ell+1} + \gamma_{i,\ell+1}$
    or
    $Y_i = \beta_{i,\ell+1} + \gamma_{i,\ell+1}+\xi_i$.
    \item $\mathcal{B}$ calls $\mathtt{Setup}(1^{\lambda},n)$ to obtain $\boldsymbol{tk}$.
    \item For each $i\in [n]$, $\mathcal{C}$ samples an invertible matrix $W_i\overset{{\scriptscriptstyle\$}}{\leftarrow} (\mathbb{Z}_q)_{8\times 8}^{\times}$. Let
    $$
     C_i =
     W_i \begin{pmatrix}
    1 & & & & & & &  \\
      &b_i & & & & & & \\
      & & b_i & & & & & \\
      & & & b_i & &-b_i & & \\
      & & & & a_i &-a_i & & \\
      & & & & & a_ib_i & & \\
      & & & & & & b_i& \\
      & & & & & & & b_i
    \end{pmatrix},
    $$
    and
    \begin{equation*}
    R_i  =  \mathrm{diag}\left(1,\frac{1}{a_ib_i},\cdots,\frac{1}{a_ib_i} \right)
    \begin{pmatrix}
    1 & & & & & & &  \\
      &a_i & & & & & & \\
      & & a_i & & & & & \\
      & & & a_i & &1 & & \\
      & & & & b_i &1 & & \\
      & & & & & 1 &  &  \\
      & & & & & & a_i& \\
      & & & & & & & a_i
    \end{pmatrix} W_i^{-1}.
    \end{equation*}
    We have $R_iC_i = \mathbf{I}_{8}$.

    When $j\neq \ell+1$, the $j$-th token corresponding to sender $i$ is computed by
    \begin{align*}
        \boldsymbol{tk}_{i}^j &:= [\![\left(\rho_j,0,0,a_ib_i\beta_{i,j},a_ib_i\gamma_{i,j},0,  a_ib_i\delta_{j,\boldsymbol{\pi}(i)},0\right)R_i]\!]_{2}\\
        &=[\![\left(\rho_j,0,0,\beta_{i,j},\gamma_{i,j},0,  \delta_{j,\boldsymbol{\pi}(i)},0\right)\begin{pmatrix}
    1 & & & & & & &  \\
      &a_i & & & & & & \\
      & & a_i & & & & & \\
      & & & a_i & &1 & & \\
      & & & & b_i &1 & & \\
      & & & & & 1 &  &  \\
      & & & & & & a_i& \\
      & & & & & & & a_i
    \end{pmatrix} W_i^{-1}]\!]_{2}
    \end{align*}
    Here we note that $a_ib_i\beta_{i,j},~a_ib_i\gamma_{i,j}$ remain random samples and $\theta_i =a_ib_i$.
    When $j = \ell+1$,
    \begin{equation*}
       \boldsymbol{tk}_{i}^{\ell+1}:= [\![\big(\rho_{\ell+1},0,0,a_i\beta_{i,\ell+1},b_i\gamma_{i,\ell+1},
       Y_i,  a_i\delta_{\ell+1,\boldsymbol{\pi}(i)},0\big)W_i^{-1}]\!]_{2}
    \end{equation*}

    \item $\mathcal{B}$ computes the ciphertexts in the same way as in $\textbf{Hyb}^{(2,\ell)}$, and returns them to $\mathcal{A}$.
    \item $\mathcal{B}$ outputs the same guess that $\mathcal{A}$ outputs.
\end{itemize}
Observe that if $Y_i = \beta_{i,\ell+1} + \gamma_{i,\ell+1}$, then $\mathcal{A}$'s view is identically distributed as in $\textbf{Hyb}^{(2,\ell)}$; otherwise, if $Y_i = \beta_{i,\ell+1} + \gamma_{i,\ell+1}+\xi_i$, $\mathcal{A}$'s view is identically distributed as in $\textbf{Hyb}^{(2,\ell,1)}$.
Based on Proposition \ref{prop_1} and the XDLin assumption, we can obtain that
\begin{equation*}
    \Big| \mathrm{Pr}\left[ 1 \leftarrow \mathcal{A}\left(\textbf{Hyb}^{(2,\ell)}\right) \right] - \mathrm{Pr}\left[ 1 \leftarrow \mathcal{A}\left(\textbf{Hyb}^{(2,\ell,1)}\right) \right] \Big| 
     \leq  \frac{m}{q}+m \mathrm{Adv}^{\mathrm{XDLin}}(\lambda).
\end{equation*}
Next, we demonstrate that $\textbf{Hyb}^{(2,\ell,1)}$ is equivalent to the following $\textbf{Hyb}^{(2,\ell,2)}$.

\subsubsection{Hybrid Experiment $\textbf{Hyb}^{(2,\ell,2)}$}\label{sssec_hyb_2.2}
In $\textbf{Hyb}^{(2,\ell,2)}$, the $(\ell+1)$-th row of token $\boldsymbol{tk}_{i}$ is replaced by
$$
\boldsymbol{tk}_{i}^{\ell+1}:= [\![\left(0,0,0,0,0,\xi_i,0,0\right)R_i]\!]_{2}.
$$
When computing $\boldsymbol{ct}_{i,t}$, let
    \begin{equation*}
        \boldsymbol{ct}_{i,t} :=[\![C_i \big(
    K_{i,t},\alpha_{i,t},\alpha'_{i,t},0,0,
    \frac{1}{\xi_i}(\rho_{\ell+1}K_{i,t}+\theta_i\delta_{\ell+1,\boldsymbol{\pi}(i)}x_{i,t})
    ,x_{i,t},x_{\boldsymbol{\pi}^{-1}(i),t}
    \big)^{\top} ]\!]_{1}.
    \end{equation*}
It can be validated that the product of tokens and ciphertexts remains the same.
Moreover, the product of
$$
\begin{pmatrix}
    1 & & & & & & &  \\
      &1 & & & & & & \\
      & & 1 & & & & & \\
      & & & 1 & & & & \\
      & & & & 1 & & & \\
    \frac{-\rho_{\ell+1}}{\xi_i} & 0 & 0 & \frac{-\beta_{i,\ell+1}}{\xi_i} & \frac{-\gamma_{i,\ell+1}}{\xi_i} & 1 & \frac{-\theta_i\delta_{\ell+1,\boldsymbol{\pi}(i)}}{\xi_i} & 0 \\
      & & & & & & 1& \\
      & & & & & & & 1
    \end{pmatrix}R_i
$$
and
$$
    C_i\begin{pmatrix}
    1 & & & & & & &  \\
      &1 & & & & & & \\
      & & 1 & & & & & \\
      & & & 1 & & & & \\
      & & & & 1 & & & \\
    \frac{\rho_{\ell+1}}{\xi_i} & 0 & 0 & \frac{\beta_{i,\ell+1}}{\xi_i} & \frac{\gamma_{i,\ell+1}}{\xi_i} & 1 & \frac{\theta_i\delta_{\ell+1,\boldsymbol{\pi}(i)}}{\xi_i} & 0 \\
      & & & & & & 1& \\
      & & & & & & & 1
    \end{pmatrix}
$$
equals $\mathbf{I}_8$.
Hence, we can obtain that $\textbf{Hyb}^{(2,\ell,1)}$ is equivalent to $\textbf{Hyb}^{(2,\ell,2)}$. Formally,
$$
 \mathrm{Pr}\left[ 1 \leftarrow \mathcal{A}\left(\textbf{Hyb}^{(2,\ell,1)}\right) \right]  = \mathrm{Pr}\left[ 1 \leftarrow \mathcal{A}\left(\textbf{Hyb}^{(2,\ell,2)}\right) \right].
$$
Following this, we demonstrate that $\textbf{Hyb}^{(2,\ell,2)}$ is computationally indistinguishable from the following $\textbf{Hyb}^{(2,\ell,3)}$.

\subsubsection{Hybrid Experiment $\textbf{Hyb}^{(2,\ell,3)}$}\label{sssec_hyb_2.3}
$\textbf{Hyb}^{(2,\ell,3)}$ differs from $\textbf{Hyb}^{(2,\ell,2)}$ in the calculation of $\{\boldsymbol{ct}_{i,t}\}_{i\leq m}$, where $K_{i,t}$ in the third-to-last element is replaced by a randomly selected element $K'_{i,t} \overset{{\scriptscriptstyle\$}}{\leftarrow} \mathbb{Z}_q$ satisfying
$\sum_{i = 1}^{m}K'_{i,t}=\sum_{i = 1}^{m}K_{i,t}.$
In $\textbf{Hyb}^{(2,\ell,3)}$, the challenger computes $\boldsymbol{ct}_{i,t}$ by
\begin{equation*}
    \boldsymbol{ct}_{i,t} :=[\![C_i \big(
K_{i,t},\alpha_{i,t},\alpha'_{i,t},0,0,
\frac{1}{\xi_i}(\rho_{\ell+1}K'_{i,t}+\theta_i\delta_{\ell+1,\boldsymbol{\pi}(i)}x_{i,t}),x_{i,t},x_{\boldsymbol{\pi}^{-1}(i),t}
\big)^{\top} ]\!]_{1}.
\end{equation*}

Following the methodology in Appendix E, the indistinguishability of $\textbf{Hyb}^{(2,\ell,2)}$ and $\textbf{Hyb}^{(2,\ell,3)}$ can be established through a reduction to the XDLin assumption.
Here we demonstrate how to construct a non-uniform p.p.t. adversary $\mathcal{B}$ against game $(\diamond)$ by leveraging adversary $\mathcal{A}$.
\begin{itemize}
    \item
    % $\mathcal{A}$ outputs a set of corrupted participants $\mathcal{K}\subset [n]$.
    $\mathcal{B}$ interacts with $\mathcal{C}$ to obtain public parameters.
    % $\mathcal{C}$ establishes a query rule for the random oracle. When the input is $[\![ a ]\!]_{2}$, the random oracle outputs $[\![ \frac{a}{u} ]\!]_{2}$, where $u$ is a secret value and $\mathcal{B}$ does not know it.
    Then, $\mathcal{C}$ generates and returns
    \begin{multline*}
        [\![ \rho_{\ell+1} ]\!]_{1,2},
        [\![ u ]\!]_{1,2},
        [\![ \begin{pmatrix}
        \rho_{\ell+1} K_{1,1} & \cdots & \rho_{\ell+1} K_{1,Q}\\
        \vdots  & \ddots & \vdots\\
        \rho_{\ell+1} K_{m,1} & \cdots & \rho_{\ell+1} K_{m,Q}
        \end{pmatrix} ]\!]_{1,2},\\
        [\![ \begin{pmatrix}
        \alpha_{1,1} & \cdots & \alpha_{1,Q}\\
        \vdots  & \ddots & \vdots\\
        \alpha_{m,1} & \cdots & \alpha_{m,Q}
        \end{pmatrix} ]\!]_{1,2},
        [\![ \begin{pmatrix}
        Y_{1,1} & \cdots & Y_{1,Q}\\
        \vdots  & \ddots & \vdots\\
        Y_{m,1} & \cdots & Y_{m,Q}
        \end{pmatrix} ]\!]_{1}.
    \end{multline*}
    Meanwhile, $\mathcal{C}$ sends $\{[\![ \frac{\rho_{j}}{u} ]\!]_{2}\}_{j\in [n]}$ to $\mathcal{B}$.
    Here we remark that, by our protocol, the generation of $\{\rho_j\}_{j\in [n]}$ cannot be controlled by the adversary and the values of $\{\rho_j\}_{j\in [n]}$ are secret.

    \item After $\mathcal{B}$ receives the above instances, it needs to distinguish whether
    $Y_{i,j} = K_{i,j} + \frac{\alpha_{i,j}}{u} $
    or
    $Y_{i,j} = K'_{i,j} + \frac{\alpha_{i,j}}{u}$, where $\sum_{i=1}^{m} K_{i,t} = \sum_{i=1}^{m} K_{i}(t) = \sum_{i=1}^{m} K'_{i,t}$.

    \item $\mathcal{B}$ calls $\mathtt{Setup}(1^{\lambda},n)$.
    For $1\leq i\leq m$, when $j \leq \ell$, let
    $$\boldsymbol{tk}_{i}^j :=[\![
     \left(\rho_j,-\frac{\rho_j}{u},0,\beta_{i,j},\gamma_{i,j},0,0,\theta_i\delta_{j,i}\right)R_i]\!]_{2}.$$
    When $j = \ell+1$, let
    $$\boldsymbol{tk}_{i}^{\ell+1} :=[\![\left(0,0,0,0,0,\xi_i,0,0\right)R_i]\!]_{2}.$$
    When $j \geq \ell+2$, let
    $$\boldsymbol{tk}_{i}^j :=[\![
     \left(\rho_j,-\frac{\rho_j}{u},0,\beta_{i,j},\gamma_{i,j},0,\theta_i\delta_{j,\boldsymbol{\pi}(i)},0\right)R_i]\!]_{2}.$$
    % When computing $\boldsymbol{tk}_{i}^j$,
    % , let
    % \begin{multline*}
    %    \boldsymbol{tk}_{i}^j = \\
    % \begin{cases}
    % [\![
    % \left(\rho_j,-\frac{\rho_j}{u},0,\beta_{i,j},\gamma_{i,j},0,0,\theta_i\delta_{j,\boldsymbol{\pi}(i)}\right)R_i]\!]_{2},
    % &  j \leq \ell \\
    % [\![
    % \left(0,0,0,0,0,\xi_i,0,0\right)R_i]\!]_{2},
    % & j = \ell+1 \\
    % [\![
    % \left(\rho_j,-\frac{\rho_j}{u},0,\beta_{i,j},\gamma_{i,j},0,\theta_i\delta_{j,i},0\right)R_i]\!]_{2},
    % & j \geq \ell+2
    % \end{cases}.
    % \end{multline*}

    \item After receiving the queries $\{x_{i,t}\}_{i\leq m}$ from $\mathcal{A}$, $\mathcal{B}$ computes
    \begin{equation*}
        \boldsymbol{ct}_{i,t} :=[\![C_i \big(
    Y_{i,t},\alpha_{i,t},\alpha'_{i,t},0,0,
    \frac{1}{\xi_i}(\rho_{\ell+1}K_{i,t}+\theta_i\delta_{\ell+1,\boldsymbol{\pi}(i)}x_{i,t})
    ,x_{i,t},x_{\boldsymbol{\pi}^{-1}(i),t}
    \big)^{\top} ]\!]_{1},
    \end{equation*}
    and returns it to $\mathcal{A}$.
    \item $\mathcal{B}$ outputs the identical guess output by $\mathcal{A}$.
\end{itemize}
Observe that if $Y_{i,j} = K_{i,j} + \frac{\alpha_{i,j}}{u}$, then $\mathcal{A}$'s view is identically distributed as in $\textbf{Hyb}^{(2,\ell,2)}$; otherwise, if $Y_{i,j} = K'_{i,j} + \frac{\alpha_{i,j}}{u}$, $\mathcal{A}$'s view is identically distributed as in $\textbf{Hyb}^{(2,\ell,3)}$.
By the XDLin assumption and Proposition \ref{prop_4}, we can obtain that
\begin{equation*}\label{equ_hyb3}
    \left| \mathrm{Pr}\left[ 1 \leftarrow \mathcal{A}\left(\textbf{Hyb}^{(2,\ell,2)}\right) \right] - \mathrm{Pr}\left[ 1 \leftarrow \mathcal{A}\left(\textbf{Hyb}^{(2,\ell,3)}\right) \right] \right| 
     \leq  \frac{1}{q}+Q(m-1)\mathrm{Adv}^{\mathrm{XDLin}}(\lambda).
\end{equation*}

Symmetrically to the steps above, equation (\ref{equ_hyb2l}) can be proved.
Hence, Lemma \ref{lem_hyb2} holds.

}

\bibliographystyle{IEEEtran}
\bibliography{IEEEabrv,SGroupDefinition,ref}

\end{document}